\numberwithin{equation}{section}
\theoremstyle{plain}
\newtheorem{defn}[equation]{Definition}
\newtheorem{thm}[equation]{Theorem}
\newtheorem{prop}[equation]{Proposition}
\newtheorem{fact}[equation]{Fact}
\def\PhysicsFact{Physics Assumption}
\newtheorem{physfact}[equation]{\PhysicsFact}
\newtheorem{fact?}[equation]{Fact?}
\newtheorem{cor}[equation]{Corollary}
\newtheorem{lem}[equation]{Lemma}
\newtheorem{conj}[equation]{Conjecture}
\theoremstyle{remark}
\newtheorem{rem}[equation]{Remark}
\newtheorem{ex}[equation]{Example}
\def\bC{\mathbb{C}}
\def\bR{\mathbb{R}}
\def\bZ{\mathbb{Z}}
\def\Hom{\mathrm{Hom}}
\def\tr{\mathop{\mathrm{tr}}}
\def\H{\mathrm{H}}
\def\K{\mathrm{K}}
\def\KO{\mathrm{KO}}
\def\MF{\mathrm{MF}}
\def\TMF{\mathrm{TMF}}
\def\Tmf{\mathrm{Tmf}}
\def\tmf{\mathrm{tmf}}
\def\pt{\mathrm{pt}}
\def\BO{B\mathrm{O}}
\def\Z#1{\mathbb{Z}/#1\mathbb{Z}}
\def\IZ{I\bZ}
\def\Q{\mathbb{Q}}
\def\ch{\mathrm{ch}}
\def\chd{\mathrm{chd}}
\def\Spin{\mathrm{Spin}}
\def\Stri{\mathrm{String}}
\def\SU{\mathrm{SU}}
\def\Nequals#1{$\mathcal{N}{=}#1$}
\def\cA{\mathcal{A}}
\def\pinm{\text{\upshape pin$-$}}
\def\Pinm{\mathrm{Pin}^-}
\def\spin{\text{\upshape spin}}
\def\stri{\text{\upshape string}}
\def\Wit{\mathop{\mathrm{Wit}}\nolimits}
\def\SQFT{\mathrm{SQFT}}
\def\GL{\mathrm{GL}}
\def\Anom{\text{\scshape Anom}}
\def\Het{\text{\scshape Het}}
\def\DefClass{\text{\scshape DefClass}}
\begin{document}

\title[TMF and heterotic global anomalies]{Topological modular forms and \\
the absence of all heterotic global anomalies}
\author{Yuji Tachikawa}
\author{Mayuko Yamashita}
\date{August, 2021. \emph{Revised}: October, 2022; January, 2023}
\address{Kavli Institute for the Physics and Mathematics of the Universe \textsc{(wpi)},
  The University of Tokyo,
  5-1-5 Kashiwanoha, Kashiwa, Chiba, 277-8583,
  Japan
}
\email{yuji.tachikawa@ipmu.jp}
\address{Research Institute for Mathematical Sciences, Kyoto University, 
Kita-shirakawa Oiwake-cho, Sakyo-ku, Kyoto, 606-8502, Japan
}
\email{mayuko@kurims.kyoto-u.ac.jp}
\thanks{\emph{Acknowledgments}: YT thanks helpful discussions with Arun Debray and Kantaro Ohmori during the early stage of this work,
and the authors thank Justin Kaidi, Kantaro Ohmori and Kazuya Yonekura for detailed comments on  the draft.
The authors also thank an anonymous referee for helpful inputs which led to important clarifications in the paper.
}

\begin{abstract}
We reformulate the question of the absence of global anomalies of heterotic string theory 
mathematically in terms of a certain natural transformation 
$\TMF^\bullet\to (I_\bZ\Omega^\text{string})^{\bullet-20}$,
from topological modular forms to the Anderson dual of string bordism groups,
using the Segal-Stolz-Teichner conjecture.
We will show that this natural transformation vanishes, implying that heterotic global anomalies are always absent.
The fact that $\TMF^{21}(\pt)=0$ plays an important role in the process.
Along the way, we also discuss how the twists of $\TMF$ can be described under the Segal-Stolz-Teichner conjecture,
by using the result of Freed and Hopkins concerning anomalies of quantum field theories.

The paper contains separate introductions for mathematicians and for string theorists,
in the hope of making the content more accessible to a larger audience.
The sections are also demarcated cleanly into mathematically rigorous parts and those which are not.
\end{abstract}
\maketitle

\tableofcontents

\section{Introduction}
\subsection{Introduction for mathematicians} 
We start this paper by an introduction for mathematicians. 
We provide a separate introduction for string theorists in Sec.~\ref{sec:introstrings},
but we urge string theorists to have a look at this introduction, at least to understand the notations.

\subsubsection{Motivations}
String theory is a theoretical framework where general relativity can be treated quantum mechanically.
It is not yet mathematically rigorously defined, 
but string theorists have already identified many places in its construction where it could become inconsistent,
due to the phenomena called anomalies.
When a non-vanishing anomaly is identified, string theory can be safely discarded.

Anomalies can be classified into the free part and the torsion part, usually called perturbative anomalies and  global anomalies in the physics literature.
The general theory of perturbative anomalies was developed in the 1980s,
and the absence of perturbative anomalies of string theory was firmly established around that time.
(On the physics side it was done in e.g.~\cite{AlvarezGaume:1983ig,Green:1984sg,Schellekens:1986xh,Lerche:1987qk}.
For mathematical reformulations, see e.g.~\cite{Liu:1994wh,Han:2012yt}.)
In contrast, global anomalies of string theory were studied only in a couple of references such as \cite{Witten:1985xe,Witten:1985mj,Witten:1985bt}.
One reason can be attributed to the absence of a general theory of global anomalies at that time.

Thankfully, the situation changed in the last decade; 
developments in the study of topological phases of matter in condensed matter physics 
eventually led us to a general mathematical framework to study global anomalies:
\begin{physfact}
\label{fact:anomaly}
The anomaly of a $d$-dimensional unitary quantum field theory defined on spacetimes with stable tangential structure $\mathcal{B}$ is
characterized by a $(d+1)$-dimensional invertible quantum field theory, which is in turn classified by \[
	(I_\bZ \Omega^\mathcal{B})^{d+2}(\pt).
\]
\end{physfact}
\noindent Here $\Omega^{\mathcal{B}}$ is the bordism homology with tangential structure  $\mathcal{B}$, represented by the Madsen-Tillmann spectrum $MT\mathcal{B}$,
and $I_\bZ\Omega^{\mathcal{B}}$ is its Anderson dual.%
\footnote{%
Here we make some historical remarks concerning Physics Assumption \ref{fact:anomaly}.
That the anomaly of a $d$-dimensional quantum field theory is specified by a $(d+1)$-dimensional invertible quantum field theory is long known, usually under the name \emph{anomaly inflow}.
This goes back implicitly to e.g.~\cite{Alvarez-Gaume:1984zst,Callan:1984sa,Faddeev:1984ung,Witten:1985xe} in the middle 1980s, 
but it is not clear who first formulated it clearly as a guiding principle to understand the anomaly in general.
We note that the concept of invertible quantum field theories was explicitly introduced in the literature in \cite{Freed:2004yc};
we also note that invertible quantum field theories are often simply called as invertible phases, which we follow.

The classification of such invertible phases was recognized as an important physics question in an influential paper by \cite{Chen:2011pg} where 
$\mathrm{H}^{d}(BG,U(1))$ was proposed to classify \emph{bosonic symmetry-protected topological phases},
which in the language of this paper are invertible phases defined on oriented manifolds.
That the classification was given by the Pontryagin dual of bordism groups was conjectured in \cite{Kapustin:2014dxa}.
It was proved then in \cite{FreedHopkins2021}
that the deformation classes of  topological invertible phases are classified by the Pontryagin dual of the torsion part of the bordism groups,
using the techniques of extended topological quantum field theories.
It was also conjectured  there that the classification of deformation classes of not-necessarily-topological invertible phases is given by the Anderson dual of the bordism groups.
That topological invertible quantum field theories are classified by the Pontryagin dual of the bordism groups was proved more elementarily in \cite{Yonekura:2018ufj} using non-extended topological quantum field theories in the sense of Atiyah,
and the conjecture of \cite{FreedHopkins2021} on the Anderson dual was put on a firmer footing in \cite{Yamashita:2021cao}
by constructing a model of the differential Anderson dual of bordism groups 
by directly formalizing what physicists use to describe invertible phases.
The anomalies are given by a deformation class of invertible quantum field theories,
and therefore are classified by the Anderson dual.
Lastly it should be mentioned that there is a nice set of lecture notes for mathematicians by Freed on these issues, see \cite{FreedLectures}. 
}
In this paper we will only deal with unitary quantum field theories,
and therefore the adjective `unitary' would often be dropped.

For the Anderson duals, we refer to \cite[Appendix B]{HopkinsSinger2005};
here we recall only the absolute basics.
We denote by $I\bZ$ the Anderson dual to the sphere spectrum. 
For a spectrum $E$, we denote by $I_\bZ E := F(E, I\bZ)$ its Anderson dual. 
The crucial property is that we have a natural exact sequence
\begin{align}\label{eq_exact_IE}
    0 \to \mathrm{Ext}(E_{d-1}(X), \bZ)  \to (I_\bZ E)^{d}(X) \to \Hom(E_{d}(X), \bZ) \to 0
\end{align}
for any spectrum $X$. 
With \PhysicsFact~\ref{fact:anomaly} in hand, we can now commence a systematic study of global anomalies of string theories.

In this paper, we concentrate on heterotic string theory, which is one of the variants of string theories. 
For the purpose of this paper, heterotic string theory can be considered as an elaborate machinery 
which produces a unitary $d$-dimensional quantum gravity theory from a given 
unitary two-dimensional \Nequals{(0,1)} superconformal field theory (SCFT) with certain suitable properties:
\begin{equation}
    \xymatrixcolsep{5pc}
        \xymatrix{
        \Het: \left\{\vcenter{\txt{\small 2d SCFTs with \\ \small $(c_L,c_R)=(26-d,\tfrac32(10-d))$}}\right\} 
        \ar[r]^-{\text{heterotic string}}_-{\text{construction}}  
        & \left\{
        \vcenter{\txt{\small $d$-dimensional quantum\\ 
        \small gravity theories}}
        \right\} },
\end{equation}
where $c_L$, $c_R$ specify the left- and right- central charges of the  input SCFT.

The resulting quantum gravity theory is known to be
defined on a manifold with string structure.\footnote{%
The string structure is associated to the fibration
\[
\xymatrix{B\mathrm{String}\ar[r] &  
B\Spin \ar[r]\ar[d]^{p_1/2}&  
B\mathrm{SO}\ar[r]\ar[d]^{w_2}& 
 B\mathrm{O}\ar[d]^{w_1}\\
&K({\mathbb Z},4) &
K({\mathbb Z}/2,2) &
K({\mathbb Z}/2,1) },
\]and is obtained by successively giving
an orientation to trivialize $w_1$,
a spin structure to trivialize $w_2$,
and a string structure to trivialize $p_1/2$. 
}
Therefore, according to \PhysicsFact~\ref{fact:anomaly},
its anomaly is characterized by an element in $I_\bZ\Omega^\text{string}(\pt)$.
Therefore there should be a map: \begin{equation}
    \xymatrixcolsep{9pc}
        \xymatrix{
        \Anom: \left\{
        \vcenter{\txt{\small $d$-dimensional quantum\\ 
        \small gravity theories}}\right\}
       \ar[r]^-{\txt{\tiny extract anomalies}}_-{\txt{\tiny using Physics Assumption~\ref{fact:anomaly}}}
        & 
          (I_\bZ\Omega^\text{string})^{d+2}(\pt).
        }
\end{equation}
What we would like to do is 
to study the composition $\Anom\circ\Het$ and show that it vanishes.

\subsubsection{Mathematical reformulation}

Let us try to reformulate our question mathematically.
For this purpose the conjecture of Segal, Stolz and Teichner \cite{oldSegal,Segal,StolzTeichner1,StolzTeichner2}
plays a crucial role:
\begin{conj}
\label{conj:SST}
Let $\SQFT_{-\nu}$ be the `space' of 2d unitary spin \Nequals{(0,1)} supersymmetric quantum field theories (SQFT) with anomaly $\nu\in (I_\bZ\Omega^\spin)^4(\pt) \simeq \bZ$. 
The sequence $\{\SQFT_\bullet\}$ forms an $\Omega$-spectrum, and agrees with $\TMF$,
the topological modular forms. 
\end{conj}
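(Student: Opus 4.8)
This is the Segal-Stolz-Teichner conjecture itself, which is deep and open; what follows is a sketch of the program by which one would approach it rather than a complete proof. The plan is to construct a natural transformation of spectra $F\colon\SQFT_\bullet\to\TMF$ and then to show it is a weak equivalence by matching homotopy groups degree by degree. The basic computational input for $F$ is the \emph{supersymmetric partition function}: a 2d spin \Nequals{(0,1)} theory of anomaly $\nu$, placed on a super-elliptic curve of modulus $\tau$, has a genus-one partition function $Z(\tau)$ which, by modular invariance of the worldsheet theory together with supersymmetric localization (one chirality contributing only its ground states), is a weakly holomorphic modular form of weight $\nu/2$ with integral $q$-expansion, i.e.\ an element of $\MF_{\nu/2}$. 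Decoupling of $Q$-exact operators makes this assignment invariant under continuous deformations, so it descends to $\pi_0\SQFT_{-\nu}$, and one expects it to be the image of a class in $\TMF^{-\nu}(\pt)$ under the edge homomorphism of the descent spectral sequence for $\TMF$, whose $H^0$-line is exactly $\MF_{\nu/2}$; the torsion of $\TMF$ that is invisible to $\MF$ should then be pinned down by secondary partition-function-type invariants --- mod-$2$ indices, $\eta$-invariants, Arf-Brown-Kervaire phases --- of families of theories over the relevant test spaces.

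Turning this into an honest map of spectra is where the Stolz-Teichner formalism enters: one models $\SQFT_\bullet$ by a sheaf of $(0,1)$-supersymmetric Euclidean field theories over manifolds, with the $\Omega$-spectrum structure coming from concatenation and suspension of worldsheet theories and the integer grading being exactly the anomaly datum $\nu\in(I_\bZ\Omega^\spin)^4(\pt)\simeq\bZ$ of \PhysicsFact~\ref{fact:anomaly}. One must verify the gluing axioms that make $\{\SQFT_\bullet\}$ an $\Omega$-spectrum at all --- this is itself part of the conjecture --- and that $F$ respects the structure maps. The heart of the proof is then a two-sided comparison on homotopy. For \emph{surjectivity}, every class of $\TMF^{-\nu}(\pt)$ must be realized by some SCFT: the free part is captured by the Witten genus and the $\sigma$-orientation $M\Stri\to\TMF$, realized physically by heterotic sigma models on String manifolds, with the periodicity class $\Delta^{24}\in\TMF^{-576}(\pt)$ coming from a suitable free-fermion or lattice model, while the torsion classes --- for instance the order-$24$ generator of $\pi_3\TMF$ and the $\bZ/2$ classes such as $\eta\in\pi_1\TMF$ --- must be produced by explicit small conformal field theories of WZW or lattice type. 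For \emph{injectivity}, any two SCFTs carrying the same invariants must be connected by a continuous path of SCFTs, which is the genuinely analytic core of the statement.

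The main obstacle is this injectivity statement, together with the very definition of the source: one first needs a rigorous notion of ``the space of 2d \Nequals{(0,1)} SCFTs'' with a well-defined homotopy type, and then one needs the partition function together with the secondary phases to be a \emph{complete} invariant of deformation classes, for which there is at present no mechanism. Surjectivity onto the torsion of $\TMF$ in every degree is likewise open, since a uniform construction of representative SCFTs realizing the full pattern of torsion classes and of differentials in the descent spectral sequence is not known. For these reasons the conjecture remains unproved; accordingly, in the body of this paper it is used only as a working hypothesis, and the consequences drawn from it are stated conditionally, with the purely homotopy-theoretic steps --- such as the vanishing that uses $\TMF^{21}(\pt)=0$ and the Anderson self-duality of $\TMF$ --- kept rigorous and separate.
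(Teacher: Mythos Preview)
Your assessment is correct in its essential point: the statement is labeled as a \emph{conjecture} in the paper and is not proved there. The paper explicitly says ``In this paper we assume the validity of this conjecture,'' cites the original references \cite{oldSegal,Segal,StolzTeichner1,StolzTeichner2} for evidence, and points to the one-dimensional analogue (Theorem~\ref{thm:SST}) as the closest rigorous statement. No proof, sketch, or program toward a proof is offered in the paper itself.

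Your write-up goes further than the paper does by outlining a plausible strategy (construct a comparison map via partition functions and secondary invariants, then argue surjectivity and injectivity on homotopy groups), and you are honest that the analytic core --- a rigorous model of the space of SCFTs and a completeness theorem for the invariants --- is missing. That is a fair summary of the state of the art, but it is your own exposition rather than something to be compared against the paper's proof, since the paper has none. One small correction: the periodicity element is $\Delta^{24}$ as you write, but it lies in $\pi_{576}\TMF = \TMF^{-576}(\pt)$, which you have; just be aware that in the paper's conventions $\MF[\Delta^{-1}]$ (weakly holomorphic forms) is what receives the edge map from $\TMF$, not $\MF$ itself.
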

The deformation class of an SQFT, then, should define a class in $\TMF$, and therefore there is a map
\begin{equation}
    \xymatrixcolsep{6pc}
  \xymatrix{
        \DefClass: \left\{\vcenter{\txt{\small 2d  SQFT with \\ \small anomaly $\nu$ }}\right\} 
        \ar[r]^-{\txt{\tiny take deformation class}}_-{\txt{\tiny using Conjecture~\ref{conj:SST}}}
        &
        \TMF^{-\nu}(\pt)}.
\end{equation}
This will become useful as we will see shortly,
since the input for the heterotic string construction is 
an SCFT, which is a special version of an SQFT.

There are various physical and mathematical pieces of evidence behind the conjecture. 
One mathematical evidence is that there is a simpler version which can be mathematically formulated and proved rigorously:
\begin{thm}[=Theorem 1.1 of \cite{StolzTeichner1}\footnote{%
We note that in \cite{StolzTeichner1}, the spaces  $\mathrm{SQM}_\bullet$ were defined and shown to be homeomorphic to a particular model of the classifying spaces of $\KO_\bullet$, but that the $\Omega$-spectrum maps were not constructed or interpreted physically on the side of $\mathrm{SQM}_\bullet$.
For the interpretation of the $\Omega$-spectrum maps in $\TMF$, see e.g.~\cite[Sec.~2.5]{Gaiotto:2019gef}.
}]
\label{thm:SST}
Let $\mathrm{SQM}_{-\nu}$ be the `space' of 1d unitary  \pinm\  \Nequals1 supersymmetric quantum field theory (i.e.~ time-reversal invariant \Nequals1 supersymmetric quantum mechanics) 
with anomaly $\nu\in (I_\bZ\Omega^\pinm )^3(\pt) \simeq \bZ/8\bZ$.
The sequence $\{\mathrm{SQM}_\bullet\}$ forms an $\Omega$-spectrum, and agrees with $\KO$. 
\end{thm}
\noindent In this paper we assume the validity of Conjecture~\ref{conj:SST}.
For the details on $\TMF$ and its connective version $\tmf$, we refer the reader to the textbook \cite{TMFBook};
we also have a very brief summary in Appendix~\ref{sec:tmfdata}.

The importance of this conjecture for us stems from the following physics assumption,
which we will explain more fully below in this Introduction, and then in more detail in Sec.~\ref{subsec:reformulation}:
\begin{physfact}
\label{basic-assump}
The composition $\Anom\,\circ\,\Het$ factors through $\TMF^\bullet(\pt)$ via\begin{equation}
\alpha_\stri:\TMF^\bullet (\pt) \to (I_\bZ\Omega^\stri)^{\bullet-20}(\pt)
\end{equation}
so that there is the following commuting square:
\begin{equation}
	\label{bigsq}
    \vcenter{
    \xymatrixcolsep{8pc}
        \xymatrix{
        \left\{\vcenter{\txt{\small\upshape 2d SCFTs with \\ \small $(c_L,c_R)=(26-d,\tfrac32(10-d))$}}\right\} 
        \ar[r]^-{\txt{\tiny\Het}}  \ar[d]_-{\txt{\tiny\DefClass}}
        & \left\{
        \vcenter{\txt{\small \upshape $d$-dimensional quantum\\ 
        \small \upshape gravity theories}}
        \right\} \ar[d]^-{\txt{\tiny\Anom}} \\
        \TMF^{22+d}(\pt) \ar[r]_-{\alpha_\stri} &  (I_\bZ\Omega^\stri)^{d+2}(\pt),
        }
       }
\end{equation}
Here we used a physics fact that an SCFT with the central charge $(c_L,c_R)$ 
is an SQFT whose anomaly is $\nu=2(c_R-c_L)$,
to determine the degree of $\TMF$ in which $\text{\upshape DefClass}$ takes values in.
\end{physfact}
 
More generally, we need to consider situations where everything is parameterized over an auxiliary space $X$.
In this case the following corollary of Conjecture~\ref{conj:SST} plays an important role:
\begin{cor}
\label{cor}
The deformation classes of 2d unitary spin \Nequals{(0,1)} supersymmetric quantum field theories with anomaly $\nu\in (I_\bZ\Omega^\text{})^4(pt)\simeq \bZ$ parameterized by $X$ form the Abelian group \[
[X,\SQFT_{-\nu}] = \TMF^{-\nu}(X).
\]
\end{cor}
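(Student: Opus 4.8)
The plan is to deduce the corollary from Conjecture~\ref{conj:SST} by unwinding definitions; there is one genuinely conceptual point and the rest is bookkeeping.

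First I would pin down what is being counted. In the Segal--Stolz--Teichner framework the ``space'' $\SQFT_{-\nu}$ is the geometric realization of a sheaf on the site of manifolds whose sections over $X$ are the $X$-parameterized families of $2$d spin \Nequals{(0,1)} SQFTs with anomaly $\nu$; a deformation between two such families is a concordance, i.e.\ a family over $X\times[0,1]$ restricting to them at the endpoints. Granting that this sheaf satisfies descent---the standard mechanism guaranteeing that homotopy classes of maps from $X$ into the realization compute concordance classes of sections over $X$---the set of deformation classes of $X$-families is precisely $[X,\SQFT_{-\nu}]$, unbased homotopy classes of maps. (No condition is imposed over a basepoint of $X$, which is why the unreduced cohomology theory, not the reduced one, appears.)

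Next I would install the Abelian group structure. Conjecture~\ref{conj:SST} asserts that $\{\SQFT_\bullet\}$ is an $\Omega$-spectrum, so $\SQFT_{-\nu}\simeq\Omega\SQFT_{-\nu+1}\simeq\Omega^2\SQFT_{-\nu+2}\simeq\cdots$ is an infinite loop space and $[X,\SQFT_{-\nu}]$ is canonically an Abelian group; concretely the operation is fiberwise stacking (direct sum) of families of SQFTs. Finally, the same conjecture identifies this $\Omega$-spectrum with $\TMF$, so $\SQFT_{-\nu}$ is the $(-\nu)$-th space $\TMF_{-\nu}$ of the spectrum $\TMF$ as an infinite loop space, whence
\[
  [X,\SQFT_{-\nu}] \;=\; [X,\TMF_{-\nu}] \;=\; \TMF^{-\nu}(X),
\]
the last equality being the definition of the unreduced $\TMF$-cohomology of $X$ represented by the spectrum $\TMF$. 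This is the claimed identification.

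The hard part is not any of the algebra above but the conceptual input in the first step: that ``deformation class of a family over $X$'' genuinely coincides with ``homotopy class of a map $X\to\SQFT_{-\nu}$''. Equivalently, one needs the moduli sheaf of $2$d spin \Nequals{(0,1)} SQFTs to be a stack whose realization has this corepresentability property---exactly the structure that promotes $\SQFT_\bullet$ from a bare sequence of spaces to a spectrum. In the Segal--Stolz--Teichner picture this is built into the definition of $\SQFT_\bullet$ (compare the rigorously established one-dimensional analogue in Theorem~\ref{thm:SST}), and since Conjecture~\ref{conj:SST} is assumed throughout we take it as given.
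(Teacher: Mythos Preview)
Your argument is correct and matches the paper's implicit reasoning: the paper states this corollary without proof, treating it as an immediate consequence of Conjecture~\ref{conj:SST} via exactly the unwinding you describe (deformation classes of families over $X$ are homotopy classes $[X,\SQFT_{-\nu}]$, and the identification of the $\Omega$-spectrum $\{\SQFT_\bullet\}$ with $\TMF$ then gives $\TMF^{-\nu}(X)$). There is nothing further to add.
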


Then the anomalies of quantum gravity theories produced by heterotic string construction should be characterized by a natural transformation \begin{equation}
\alpha_\stri:\TMF^\bullet (X) \to (I_\bZ\Omega^\text{string})^{\bullet-20}(X).
\label{twisted}
\end{equation}
Here we note that, to be more precise, the degrees $\bullet$ need to incorporate the twists of $\TMF$ on $X$,
which we will come back to in Sec.~\ref{subsec:tmftwist}.
We also note that this natural transformation $\alpha_\stri$ should be regarded as a string-theoretic generalization of a natural transformation discussed in \cite[Sec.~9]{FreedHopkins2021}, \begin{equation}
\label{freefermion}
\alpha: \KO^{d-2}(X)  \to (I_\bZ \Omega^\spin)^{d+2}(X),
\end{equation}
which sends a $d$-dimensional massless fermion theory to the deformation class of its anomaly invertible field theory.\footnote{%
This transformation $\alpha$ \eqref{freefermion} also fits in a commuting square analogous to \eqref{bigsq}.
In this case, the upper left corner is the space of supersymmetric quantum mechanical systems on the worldlines, 
the upper right corner is the space of free fermionic theories of spin $\tfrac12$,
and the upper horizontal arrow is the process known as the second quantization,
which is one of the first things physicists learn in their textbooks on quantum field theories.
The vertical arrow on the left is to take the deformation class using Theorem~\ref{thm:SST},
and the vertical arrow on the right is to extract the anomalies.
We emphasize that the transformation $\alpha$ \eqref{freefermion} does \emph{not} vanish.
The vanishing of $\alpha_\stri$ \eqref{twisted} should be considered as one of many mysterious properties of string theory.
}

Let us now  describe $\alpha_\stri$ in \eqref{twisted} in more detail;\footnote{%
Physically, two sources of anomalies in the heterotic string constructions are known.
The first is the anomalies of massless fermions,
and the second is the anomalies carried by the $B$-field.
In principle there can be other sources of anomalies, carried by hitherto-unknown subtler degrees of freedom in the heterotic string constructions.
As string theory is not yet mathematically rigorous, we cannot be certain that such additional sources of anomalies do not exist. 
In this paper, we only consider the two known sources of anomalies;
we will see that their combination vanishes in a rather nontrivial manner.
Our results can then be considered as a strong indication that we already know all sources of anomalies.
}
this discussion will also serve as the explanation why physicists think that Physics Assumption~\ref{basic-assump} holds.
For this purpose, we need a few morphisms among spectra constructed in \cite{
AHR10}: \begin{equation}
    \vcenter{
        \xymatrix{
        MT\Stri \ar[r]^-{\Wit_\stri}\ar[d]^-{\iota} &
        \TMF\ar[d]^-{\sigma } \\
        MT\Spin  \ar[r]^-{\Wit_\spin}&
        \KO((q))
        }
       },
    \label{slwg}
\end{equation}
where  $MT\Stri$, $MT\Spin$ are the Madsen-Tillmann spectra representing $\Omega^\stri$ and $\Omega^\spin$ (in these cases they are weakly equivalent to the Thom spectra $M\Stri$ and $M\Spin$, respectively),
$\Wit_\stri$ is the string orientation of $\TMF$,
$\iota$ is the forgetful map,
and $\KO((q))$ is the Laurent series whose coefficients are $\KO$.
We note that the construction of $\sigma$ is more fully described in \cite[Appendix A]{HillLawson}.
We also note that in terms of Conjecture~\ref{conj:SST} and Theorem~\ref{thm:SST} of Segal-Stolz-Teichner,
the morphism $\sigma$ corresponds to putting a given 2d SQFT on $S^1$,
making it effectively an SQM system equivariant with respect to the rotation of $S^1$,
such that the power of $q$ specifies the weight under this action.

These are essential parts of the spectrum-level description of the Witten genus \cite{Witten:1986bf,Witten:1987cg}
in the following sense:
we have  a commutative diagram
    \begin{align}
    \vcenter{
        \xymatrix{
        \Omega^\stri_{\nu}(\pt) \ar[r]^-{\Wit_\stri}\ar[d]^-{\iota} &
        \TMF_{\nu}(\pt) \ar[r]^-{\phi}\ar[d]^-{\sigma } &  \MF[\Delta^{-1}]_{\nu/2} \ar[d]^-{\text{$q$-expansion}} \\
        \Omega^\spin_{\nu}(\pt) \ar[r]^-{\Wit_\spin}&
        \KO((q))_{\nu}(\pt) \ar[r]^-{\delta}& \bZ((q))
        }
       }.
        \label{phiW}
    \end{align}
where the left half was obtained by taking the homotopy groups of the commutative diagram \eqref{slwg},
$\MF[\Delta^{-1}]$ is the ring of weakly-holomorphic integral modular forms 
where the degree is given by the weight of the modular forms,
$\phi$ is the edge morphism of the descent spectral sequence which associates integral modular forms to $\tmf$
and weakly-holomorphic integral modular forms to $\TMF$,
and $\delta$ sends $\KO_{8k}(\pt)\simeq \bZ$ by the identity and $\KO_{8k+4}(\pt)\simeq \bZ$ (using the generator whose complexification is two times the square of the complex Bott element) by a multiplication by two.
The combined homomorphism from $\Omega^\stri_\nu(\pt)$ to $\bZ((q))$ is the original Witten genus \cite{Witten:1986bf,Witten:1987cg}.

Then, the  natural transformation $\alpha_\stri$ of our interest is the composition 
\begin{equation}
\alpha_\stri:\TMF^\bullet (X) \xrightarrow{\sigma}
\KO((q))^\bullet (X) \xrightarrow{\alpha_\spin}
 (I_\bZ\Omega^\spin)^{\bullet-20}(X) \xrightarrow{I_\bZ \iota}
 (I_\bZ\Omega^\text{string})^{\bullet-20}(X).
 \label{natural}
\end{equation}
of three natural transformations.
The first map $\sigma$ was introduced in \eqref{slwg}.
The second map $\alpha_\spin$, is a certain generalization of the homomorphism 
$\alpha:\KO^\bullet\to (I_\bZ\Omega^\spin)^{\bullet-4}$ we already saw in \eqref{freefermion}.
The third map $I_\bZ \iota$ is the Anderson dual to $\iota$.
Our question, the absence of global anomalies of quantum gravity theories obtained by heterotic string constructions, is whether this composition $\alpha_\stri=I_\bZ \iota\circ \alpha_\spin \circ \sigma$ vanishes.\footnote{%
More precisely, the anomalies of the fermions is the composition $\alpha_\spin\circ\sigma$,
and it is believed that the $B$-field coupling can be set up to cancel any anomalies in the kernel of $I_\bZ\iota$.
Therefore, the vanishing of $\alpha_\stri$ implies that the $B$-field anomalies and the fermion anomalies can be made to cancel.
} 

In a previous paper \cite{Tachikawa:2021mvw} by one of the authors (YT),
the vanishing of the particular case when $\bullet=24$ and $X=\pt$ was established by a direct computation,
using known properties of $\phi$.
The main result of this paper is that $\alpha_\stri$ always vanishes.
This is done by showing that this composition $\alpha_\stri$ is controlled by a single element in $I_\bZ \TMF^{-20}(\pt)$, 
which turns out to be zero. 
The property $\pi_{-21}(\TMF)=0$ is a crucial ingredient.

\subsubsection{Twists of $\TMF$ under the conjecture of Segal, Stolz, and Teichner}
\label{subsec:tmftwist}
To fully describe the physics question,
the degrees appearing in \eqref{twisted} and elsewhere 
are not simply valued in $\bZ$ but need to incorporate possible twists of $\TMF$ groups.
The twists of $\TMF$ were previously studied in \cite{ABG} from a homotopy-theoretic perspective,
where it was shown that $\TMF$ on $X$ can be twisted by $[X,\BO\langle0,\ldots,4\rangle]$,
where we denote the $n$-stage Postnikov truncation of
a path-connected space $Y$  by $Y\langle 0, \ldots, n\rangle$.

For our purposes, however, we have to analyze the twists of $\TMF$ in the context of the conjecture of Segal, Stolz and Teichner,
under which $\TMF^\bullet(X)$ is identified with the group of deformation classes of 2d \Nequals{(0,1)} SCFTs parameterized by $X$.
Under the conjecture, it is natural to identify the twists with the possible anomalies of 2d theories parameterized by $X$.\footnote{%
This point was already noted in \cite{Gukov:2018iiq,Johnson-Freyd:2020itv}, but
the comparison to \cite{ABG} was not made in these earlier references.
}
This is again the question answered by \PhysicsFact~\ref{fact:anomaly}, which says that 
the anomalies of such theories take values in 
\begin{equation}
(I_\bZ \Omega^\spin)^4(X).  
\end{equation}
If we fix a basepoint in $X$, we have $(I_\bZ \Omega^\spin)^4(X)=\bZ \oplus (\widetilde{I_\bZ \Omega^\spin})^4(X)$, 
where the summand $\bZ$ has already been identified with the degree of $\TMF$ in  Conjecture~\ref{conj:SST}.
The question is the interpretation of the second summand, the reduced part of the Anderson dual of the spin bordism group.

We will show in Appendix~\ref{sec:twists} that there is a natural isomorphism \begin{equation}
[X, \bZ\times \BO\langle0,\ldots,4\rangle] \simeq ({I_\bZ \Omega^\spin})^4(X)
\label{twisteq}
\end{equation}
for any CW-complex $X$, showing the consistency among three considerations, the description of twists in \cite{ABG},
the description of anomalies using \PhysicsFact~\ref{fact:anomaly},
and the Segal-Stolz-Teichner conjecture.
There is also a  version of the statement for $\KO(X)$, which will also be described and proved.

\subsubsection{Organization of the paper}
The main content of this paper is in Sec.~\ref{sec:reformulation} and Sec.~\ref{sec:proof}.
In Sec.~\ref{sec:reformulation}, we start from a physics description of the anomalies of heterotic string theories,
with the aim of specifying the natural transformation \eqref{natural} in a mathematically well-defined language.
A mathematically-oriented reader will be able to understand more and more as s/he reads this section.
We then proceed in Sec.~\ref{sec:proof} to show that this natural transformation vanishes.

We have four appendices: in Appendix~\ref{sec:bordismdata}, we present the spin and string bordism groups up to $d=16$,
and in Appendix~\ref{sec:tmfdata}, we provide the abelian groups $\pi_{\bullet}(\TMF)$,
which is 576-periodic.
Both are known facts in the literature, but we provide them for the convenience for the readers.
Appendix~\ref{sec:twists} contains the proof of the statement \eqref{twisteq}
which is needed in Sec.~\ref{sec:reformulation} to translate the physics question into mathematics.
Finally in Appendix~\ref{sec:last}, we determine the homomorphism 
$(I_\bZ\Omega^{\mathcal{B}})^d(\pt)\to 
(I_\bZ\Omega^{\mathcal{B}'})^{d'}(\pt)$ associated to transformations of tangential structures in a few cases,
which will be needed in other parts of the paper.

\subsection{Introduction for string theorists} 
\label{sec:introstrings}
\subsubsection{The aim}
Let us now present an introduction for string theorists.
It is well-known that the 10d heterotic string theory is free of perturbative anomalies
thanks to the Green-Schwarz cancellation \cite{Green:1984sg}.
To ensure that heterotic string theory is fully consistent,
this needs to be generalized in two directions.
One is to study compactifications, and another is to consider global anomalies.

Let us start with the first one. 
As long as we compactify on a smooth manifold, the absence of anomalies in 10d guarantees that the anomalies are still absent in lower dimensions.
In heterotic string theory, however, you can use arbitrary 2d \Nequals{(0,1)} superconformal field theories (SCFTs) of central charge $(c_L,c_R)=(26-d,\tfrac32(10-d))$ to describe compactifications down to $d$ dimensions,
and these SCFTs might not come from the quantization of strings moving in a smooth geometry.
In \cite{Schellekens:1986xh,Lerche:1987qk,Lerche:1988np}, 
perturbative anomalies of such general compactifications were analyzed,
and were shown to vanish always via the Green-Schwarz mechanism.

As for the second question, 
it was shown in \cite{Witten:1985bt} that the 10d $E_8\times E_8$ heterotic string has no global anomalies.\footnote{%
The anomaly cancellation of 10d $\mathfrak{so}(32)$ string theory was analyzed in \cite{Freed:2000ta} from the Type I perspective.
}
It then follows that the global anomalies do cancel in the compactifications of $E_8\times E_8$ strings on smooth geometries.
But again, we need to analyze the global anomalies of heterotic compactifications using internal SCFTs which do not necessarily correspond to smooth internal manifolds.

In a previous paper \cite{Tachikawa:2021mvw} by one of the authors (YT),
a particular case of the $\bZ_{24}$ global anomaly of the $B$-field gauge transformation of heterotic compactifications down to two dimensions was analyzed.
The objective of this paper is to establish that global anomalies of all types (gravitational, gauge, or mixed) are absent in heterotic compactifications to arbitrary dimensions.

We pause here to mention that we only consider the anomalies of fermions and of the $B$-field,
and that we do not entertain the possibility of the anomalies carried by other unknown subtler degrees of freedom, for example an almost decoupled topological field theory 
produced by the heterotic string construction.
We will see that the anomalies of fermions and of the $B$-field cancel in a rather nontrivial manner, only after using rather sophisticated techniques from algebraic topology.
We take this as a strong indication that there is indeed no sources of anomalies in addition to the two known ones.

\subsubsection{The strategy}
The daunting task of analyzing arbitrary heterotic compactifications is made tractable by the following two observations:

\medskip

\noindent1.2.2.1. \emph{The use of $\TMF$.}
The anomaly is a discrete quantity, and therefore is independent of continuous deformations.
According to the conjecture of Stolz, Segal and Teichner  \cite{Segal,StolzTeichner1,StolzTeichner2}, 
the equivalence classes  under continuous deformations of 2d \Nequals{(0,1)} supersymmetric quantum field theories
having the same anomaly as $\nu$ chiral multiplets
form an Abelian group denoted by $\TMF^{-\nu}(\pt)$, known as the topological modular forms.\footnote{%
The same group  can be written in multiple ways: $\TMF^{-\nu}(\pt)=\TMF_\nu(\pt)=\pi_\nu(\TMF)=\pi_{0}(\TMF_{-\nu})$.
Note that this only holds when $X=\pt$.
}
More generally, the deformation classes of such theories parameterized by $X$ form an Abelian group $\TMF^{-\nu}(X)$.
These are presented as Conjecture~\ref{conj:SST} and Corollary~\ref{cor} in the introduction for mathematicians.
There are various physical and mathematical pieces of evidence behind this conjecture;
we discuss some of them slightly later in Sec.~\ref{sec:tmf-for-physicists}, 
so as not to disrupt the flow of the discussions here.

Therefore, by assuming the validity of this conjecture, we can employ various properties mathematicians uncovered for $\TMF$.
In particular, an SCFT $T$ with $(c_L,c_R)=(26-d,\frac32(10-d))$ determines a class $[T]\in \TMF^{22+d}(\pt)$.
This trick was already used in \cite{Tachikawa:2021mvw}.

An important subtlety here is that the continuous deformations determining a $\TMF$ class 
is \emph{not} required to preserve that the theory is conformal;
any deformation which preserves supersymmetry is allowed.
In contrast, the heterotic string construction itself requires an SCFT as an input, not a general SQFT.
However, the fermion anomaly is determined by the spectrum of massless spacetime fermions, which can be read off from the right-moving R-sector ground states of the input 2d SCFT.
There is no problem in formally applying the same algorithm to the right-moving R-sector vacuum of any 2d SQFT
to obtain the spectrum of massless spacetime fermions,
from which the anomaly can also be computed.
As the anomaly is a discrete quantity,
it cannot change under continuous deformations.
Therefore, we can use a $\TMF$ class to compute the anomaly of the spacetime fermions.

\medskip

\noindent1.2.2.2 \emph{The reduction to the single case $d=-1$.}
There is in fact no need to consider different choices of the spacetime dimension $d$ and the gauge group $G$ separately.
To motivate this, let us recall that the global anomaly of a $d$-dimensional theory is captured by the invertible anomaly theory $\cA$ in $(d+1)$ dimensions.
Note in particular that the global anomaly in the traditional sense, associated to a global gauge transformation and/or a diffeomorphism which we collectively denote by $\phi$ on $M_d$, 
is given by the value the invertible phase $\cA$ assigns to the mapping torus $N_{d+1}$,
which is obtained by taking $[0,1] \times M_d$ and gluing the two ends by the transformation $\phi$.

This means that, to compute the anomalous phase for a heterotic compactification to $d$ dimensions,
we can equivalently make a further compactification on $M_d$ first to obtain a zero-dimensional compactification,
and compute its anomalous phase under $\phi$,
where we can regard that $\phi$ is an element in some symmetry group $G$.
In short, we can reduce the case to $d=0$, but with an arbitrary symmetry group $G$.

There are two obstacles in putting this observation in practice: \begin{enumerate}
	\item To perform a further heterotic compactification on $M_d$,
we need to arrange various fields on it to solve the  equations of motion of the heterotic string theory.
But there is no guarantee that the configuration $M_d$ detecting the anomaly can be arranged to allow such  a solution.
	\item The modern understanding of anomalies requires us to consider $N_{d+1}$ not necessarily of the form of a mapping torus.
This formally requires us to consider a `compactification to $-1$ dimensions' and evaluate its $0$-dimensional anomaly theory on a point.
\end{enumerate}
What saves us from these problems is that our analysis does not directly use the worldsheet SCFTs
but only the associated $\TMF$ classes.
These mathematical objects are less rigid and more flexible than SCFTs, 
and we can make perfect sense of compactification on $M_d$ or $N_{d+1}$.
An added bonus is that, once we compactify using $N_{d+1}$, we do not have to consider any symmetry group, since the spacetime is now a single point.
This allows us to reduce our entire question to the case $d=-1$ and without any symmetry $G$.
For this we only need to study $\TMF^{22+d}(\pt)=\TMF^{21}(\pt)$, which is known to be trivial.
This implies that the global anomaly is always absent.\footnote{%
In an email to one of the authors (YT) while this draft was being prepared, 
Edward Witten independently noted that the vanishing of the global anomaly of heterotic strings follows if $\TMF^{21}(\pt)$ vanishes.
}\footnote{%
The authors also would like to warn the reader that in the following sections what is outlined in this introduction for physicists is not going to be directly implemented rigorously in mathematics.
Rather, what we do is to formulate the anomaly question in the language of algebraic topology using morphisms between spectra, 
with which one can still reduce the question to the vanishing of $\TMF^{21}(\pt)$.
To actually implement what is indicated in this introduction for physicists, we would need to deal with a differential version of $\TMF$.
}

\subsubsection{Some features of $\TMF$ for physicists}
\label{sec:tmf-for-physicists}
Before proceeding, let us discuss some features of $\TMF$ 
which hopefully would help string theorists to understand its relationship to quantum field theories.
We note that a nice review  for string theorists was already given in \cite{Gukov:2018iiq},
and  some physics checks of this conjecture were given in \cite{Gaiotto:2018ypj,Gaiotto:2019asa,Gaiotto:2019gef,Johnson-Freyd:2020itv};
the readers are also recommended to consult these references.

First, let us consider 2d \Nequals{(0,1)} sigma model whose target space is an $n$-dimensional manifold $M_n$.
The right-moving fermions in general have sigma model anomalies \cite{Moore:1984dc,Moore:1984ws,Manohar:1984zj}, 
which can be trivialized if and only if a $B$-field on $M_n$ can be specified so that its field strength $H$ satisfies ``$dH=\tr R^2$\,'' \cite{Witten:1985mj}.
The resulting theory has a gravitational anomaly which is $n$ times that of a chiral multiplet.

According to Conjecture~\ref{conj:SST}, the deformation classes of such theories define an element in $\TMF_n(\pt)$.
The manifolds equipped with $B$-fields satisfying ``$dH=\tr R^2$\,'' are called string manifolds by mathematicians,
and it was found in \cite{Gaiotto:2019asa} that,
when two string manifolds are bordant,
the resulting sigma models can be connected by going up and down along the renormalization group flow.

This means that there should be a commuting square
\begin{equation}
\label{sigma}
    \vcenter{
    \xymatrixcolsep{8pc}
        \xymatrix{
        \left\{\vcenter{\txt{\small $n$-dimensional manifold $M_n$ \\ \small with $dH=\tr R^2$ }}\right\} 
        \ar[r]^-{\Sigma}  \ar[d]_-{\text{take bordism class}}
        & \left\{
        \vcenter{\txt{\small 2d \Nequals{(0,1)} QFT with\\ 
        \small $n$ units of gravitational anomaly}}
        \right\} \ar[d]^-{\text{take deformation class}} \\
       \Omega^\text{string}_n(\pt)  \ar[r]_-{\Wit_\stri} &  \TMF_n(\pt)
        }
       }.
\end{equation}
Here, $\Sigma$ on the upper horizontal arrow 
is the operation creating a sigma model on the manifold by the path integral,
which is yet to be rigorously defined,
and $\Wit_\stri$ on the lower horizontal arrow
was mathematically constructed in \cite{AHS1,AHS2,AHR10}
and is sometimes known as the string orientation.

Second, given a 2d \Nequals{(0,1)} quantum field theory $T$, we can consider its elliptic genus $Z_\text{ell}(T;q)$
in physicists' sense, which is the Witten index in the right-moving R-sector.
This is almost modular invariant but not quite, due to the gravitational anomaly. 
This can be cancelled by multiplying by $\eta(q)^n$.
We can then define the Witten genus of the theory $T$ by $
\eta(q)^n Z_\text{ell}(T;q)
$,
which is a weakly-holomorphic modular form of weight $n/2$;
here the adjective weakly-holomorphic means that one allows poles at $q=0$.
The Witten genus is independent of continuous deformations, and it descends to a map defined on $\TMF$,
resulting in another commuting square
\begin{equation}
\label{phi}
    \vcenter{
    \xymatrixcolsep{8pc}
        \xymatrix{
        \left\{
        \vcenter{\txt{\small 2d \Nequals{(0,1)} QFT with\\ 
        \small $n$ units of gravitational anomaly}}
        \right\} \ar[d]^-{\text{take deformation class}} 
        \ar[r]^-{\eta(q)^n Z_\text{ell}(-;q) }  & \MF[\Delta^{-1}]_{n/2}\ar@{=}[d]  \\
         \TMF_n(\pt) \ar[r]^-{\phi} & \MF[\Delta^{-1}]_{n/2}
        }
       }.
\end{equation}
Here, $\MF$ is the ring of modular forms with integer $q$-expansion coefficients,
and the notation $\MF[\Delta^{-1}]$ means that we allow inverting the modular discriminant $\Delta$ which has a first-order zero at $q=0$, resulting in the ring of weakly-integral modular forms.
Again, the upper horizontal arrow is not rigorously defined,
but the lower horizontal map $\phi$ is well-defined and has been completely determined by mathematicians \cite{Hopkins2002}.

We can now combine the two commuting squares \eqref{sigma} and \eqref{phi}.
Then, $\phi\circ \Wit_\stri ([M_n,B])$ computed on the lower horizontal arrows 
should equal $\eta(q)^n$ times the elliptic genus of the sigma model whose target space is $M_n$ with the specified $B$-field, 
which physicists know how to compute \cite{Witten:1986bf}.
What mathematicians constructed reproduces this physics expectation.

We also note that the map $\Wit_\stri$ which appeared in \eqref{sigma}, 
which should be physically understood as the quantization map, can be generalized further.
For this purpose one considers a fibration $F\to E\xrightarrow{p} B$ so that  it is  equipped with a fiber-wise $B$-field
solving ``$dH=\tr R^2$\,'', i.e.~a string orientation.
Let us now consider a family of 2d \Nequals{(0,1)} theories with $m$ units of gravitational anomalies parameterized by the total space $E$,
which specifies a class in $\TMF^{-m}(E)$.
In this setup, the morphism of ring spectra $\Wit_\stri \colon MT\mathrm{String} \to \TMF$ (the string orientation of $\TMF$) gives us the pushforward map \begin{equation}
p_! : \TMF^{-m}(E) \to \TMF^{-m-\dim F}(B).
\end{equation}
Mathematically, this generalizes the integral of cohomology classes along the fiber, reducing the degree of the class by the dimension of the fiber.
Physically, this operation performs a quantization along the fiber $F$,
so that we have a family of 2d \Nequals{(0,1)} theories parameterized by the base $B$,
now with $m+\dim F$ units of gravitational anomaly.
The original version \eqref{sigma} is obtained by taking the fibration $M\to M\to \pt$ and by considering the pushforward $p_! : \TMF^0(M) \to \TMF^{-n}(\pt)=\TMF_n(\pt)$.
Then $\Wit_\stri([M])=p_!(1)$, where $1\in \TMF^0(M)$ represents the trivial constant family.

\subsubsection{Organization of the paper}
The rest of the paper is organized as follows.
In Sec.~\ref{sec:reformulation}, we start from a physics description of the anomalies of heterotic string theories,
and translate it into a mathematical setup where the full machinery of algebraic topology can be effectively employed.
A physics-oriented reader will feel more and more alien as s/he reads this section.
The objective of Sec.~\ref{sec:proof} is then to show rigorously the vanishing of anomalies
in that formulation, which will turn out to be relatively straightforward after all the preparations done in Sec.~\ref{sec:reformulation}.

We also have four appendices: in Appendix~\ref{sec:bordismdata}, we present the spin and string bordism groups up to $d=16$,
and in Appendix~\ref{sec:tmfdata}, we provide the abelian groups $\TMF_\bullet(\pt)$,
which is 576-periodic.
Both are known facts in the literature, but we provide them for the convenience for the readers.
Appendix~\ref{sec:twists} contains a mathematical discussion of how to reconcile 
the understanding of the crucial relation ``$dH=\tr R^2-\tr F^2$\,''
from three points of view, namely from algebraic topology, from worldsheet, and from spacetime.
In Appendix~\ref{sec:last}, we discuss a few examples of how the anomalies are translated when we change the symmetry structure; 
the results are needed in other parts of the paper.

The rest of the paper utilizes various notions from algebraic topology.
String theorists who would like to learn them would find ample explanations 
e.g.~in the textbooks \cite{Stong1968,Rudyak}, in the review \cite{BCguide},  or the lecture note \cite{FreedLectures}.

\section{Description of  anomalies}
\label{sec:reformulation}

In this section we start from the physics description of anomalies of heterotic compactifications and translate it to a certain natural transformation from $\TMF$ to $I_\bZ\Omega^\text{string}$.
What we do here is to explain why physicists think the main Physics Assumption~\ref{basic-assump} is reasonable,
by explaining its rationale and decomposing it into a number of more basic Physics Assumptions,
which we summarize at the end of this section in Sec.~\ref{subsubsec_physfacts} as Physics Assumptions~\ref{physicsfact}, \ref{fact_rationalization}, \ref{fact_natural}.

\goodbreak

\subsection{Physics setup}
\subsubsection{Input}
We consider heterotic compactifications down to $d$ spacetime dimensions.
The internal degrees of freedom are described by a 2d \Nequals{(0,1)} SCFT $T$ whose central charge is given by $(c_L,c_R)=(26-d,\tfrac32(10-d))$.
When the theory $T$ has the symmetry $G$, the spacetime theory in dimension $d$ has $G$ as a gauge symmetry.
When the theory $T$ has the space $X$ of exactly marginal couplings, the space $X$ appears as the target space of the massless scalar fields of the spacetime theory.

Let us first consider the case when $G$ is a simply-connected simple compact Lie group,
and denote the level of the $G$ current algebra of the theory $T$ by $k\in \bZ$.
The generalization to arbitrary $G$ will be performed while we formulate the question more mathematically.

In principle, the theory $T$ can have an anomaly depending on the space $X$ of exactly marginal couplings as discussed in \cite{Tachikawa:2017aux,Cordova:2019jnf,Cordova:2019uob},
which would also affect our discussions in the following.
The generalization we perform later from simply-connected simple Lie groups to general groups in fact takes care of the effect of having the space $X$ of exactly marginal couplings,
so we will neglect the effect of $X$ for the moment.

\subsubsection{Spacetime}
The $d$-dimensional spacetime $M$ is equipped with an orientation, a metric, a spin structure, 
a $G$-bundle with connection,
and then a $B$-field, whose gauge-invariant field strength $H$ satisfies
the relation roughly of the form ``
$dH=\tr R^2-k\, \tr F^2$
''
where $R$ is the spacetime curvature and $F$ is the curvature of the $G$ gauge field.
Mathematically, this set of fields determine a string structure on $M$
twisted by the pullback of the class $k\in \bZ=\H^4(BG,\bZ)$ via the classifying map $M\to BG$.

\subsubsection{Massless fermions}
We are primarily interested in the massless fermion fields in the spacetime theory.
They arise from the worldsheet theory $T$ by putting the right-movers to the R-sector vacuum.
Then, the lowest modes of the left-movers (with $L_0=0$) give rise to massless gravitinos and dilatinos,
and the first excited states of the left-movers (with $L_0=1$) give rise to other massless spin-$\tfrac12$ fermions.
In even dimensions, the spacetime chirality is correlated with the right-moving fermion number $(-1)^{F_L}$ via the GSO projection.

\subsubsection{Anomalies}
We  then need to compute their anomalies.
The anomalies, both perturbative and global, of spin-$\tfrac12$ fermions and gravitinos 
were worked out in \cite{AlvarezGaume:1983ig,Witten:1985xe,Witten:1985mj},
so we can simply quote them and sum over them.
On general spin manifolds, the resulting total fermion anomaly does not usually vanish.
What we would like to ask is whether it vanishes when restricted to the manifolds satisfying the relation `` $dH=\tr R^2-k\tr F^2$ ''.
When it does not, this means that there is a residual anomaly even after the Green-Schwarz anomaly cancellation mechanism is used.
The perturbative case was already settled in the 1980s by \cite{Schellekens:1986xh,Lerche:1987qk,Lerche:1988np}.
Our main question concerns the global anomalies.

\subsection{Mathematical reformulation}
\label{subsec:reformulation}
Let us reformulate the physics description in a more mathematically palatable language. 

\subsubsection{$\TMF$ class as the input}

Massless spacetime fermions of a heterotic string constructed from an SCFT $T$
come from the R-sector 
vacuum of the right-movers;
we will neglect the effect of $G$ for a while to simplify the presentation.
The right-moving R-sector vacuum is a particular subspace $\mathcal{H}_T$ of the Hilbert space of the theory $T$ on the circle;
as such $\mathcal{H}_T$ has an action of $S^1$.
We use powers of $q$ to grade the $S^1$ action,
and physicists' convention is to use $q^{L_0-c_L/24}$ to  grade the pieces,
where $L_0$ is the 0th Virasoro generator and $c_L$ is the left-moving Virasoro central charge.
Each graded piece, i.e.~each eigenspace of $L_0$, can in general jump under continuous deformations,
but it determines a class in $\KO^{-\nu}(\pt)$,
where $\nu=2(c_R-c_L)$.
This is because each piece is a supersymmetric quantum mechanical system  with gravitational anomaly labeled by $\nu$, which according to Theorem~\ref{thm:SST} defines a class in $\KO^{-\nu}(\pt)$;
the time reversal operator is given by the CPT operator of the original 2d theory.

Combining the graded pieces, we see that $\mathcal{H}_T$ 
defines a class $[\mathcal{H}_T]\in \KO^{-\nu}((q^{1/24}))(\pt)$
of the form  \begin{equation}
[\mathcal{H}_T]=q^{-c_L/24}(V + W q+\cdots) \in \KO^{22+d}((q^{1/24}))(\pt).
\label{AB}
\end{equation}
Here, $V$ gives dilatinos and gravitinos,
while $W$ gives other spin-$\tfrac12$ fermions,
and the fermion anomalies of a heterotic string construction
can be determined from $[\mathcal{H}_T]$ without the detailed knowledge of $T$.

Luckily for us, the map $T\mapsto [\mathcal{H}_T]$ factors through $\TMF$, since we have 
\begin{equation}
\sigma([T])= \eta(q)^\nu [\mathcal{H}_T],
\label{3.4}
\end{equation} where  $[T]\in \TMF^{-\nu}(\pt)$ is the TMF class of $T$, 
$\eta(q)$ is the Dedekind eta,
and  we already encountered $\sigma: \TMF^{-\nu}\to \KO^{-\nu}((q))$ in \eqref{phiW}.\footnote{%
The extra factor of $\eta(q)^\nu$ corresponds to adding $\nu$ left-moving fermions to cancel the worldsheet anomaly.}
We can then determine the fermion anomalies from $[T]\in \TMF^{-\nu}(\pt)$ 
without the detailed knowledge of $T$.

This allows us to take $\TMF$ classes $[T]$ as an input,
rather than SCFTs $T$,
for the purpose of our analysis,
giving us considerable flexibility:
the heterotic string constructions require an SCFT with a fixed $(c_L,c_R)$ as the input.
But according to Conjecture \ref{conj:SST} of Segal-Stolz-Teichner, 
the equivalence classes which defines a TMF class is much more relaxed.
The allowed deformations do not even have to preserve conformality,
as long as \Nequals{(0,1)} supersymmetry is preserved,
so that we can isolate the supersymmetric R-sector vacuum of the right-movers.
The gravitational anomaly $\nu=2(c_R-c_L)$ of $T$, which is known to be an integer, is preserved under such deformations.

When we further impose that the theory $T$ has a symmetry $G$, 
its 't Hooft anomaly $k$ is also preserved under continuous deformations.
According to \PhysicsFact~\ref{fact:anomaly}, the anomaly of a 2d spin theory $T$ with symmetry $G$ is classified by 
\begin{equation}
(I_\bZ\Omega^\spin)^4(BG)
= \bZ \oplus (\widetilde{I_\bZ\Omega^\spin})^4(BG),
\end{equation}
and we regard $\nu\oplus k$ to be the element in this group.

Let us discuss $\nu$ and $k$ in turn. 
As for $\nu\in (I_\bZ\Omega^\spin)^4(\pt)=\bZ$, 
it is given by $\nu=2(c_L-c_R)=-22-d$.
As for $k$,
when $G$ is a simple simply-connected compact Lie group, 
we have\footnote{%
This follows easily from the Atiyah-Hirzebruch spectral sequence using $\H^d(BG,\bZ)=\bZ,0,0,0,\bZ$ for $d=0,1,2,3,4$.
It also follows from our Proposition~\ref{prop_twist}.
} $(\widetilde{I_\bZ\Omega^\spin})^4(BG)=\H^4(BG,\bZ)=\bZ$,
and $k$ can be further identified with an integer.
We can  then formulate the equivariant version of the Segal-Stolz-Teichner conjecture:
\begin{conj}
The equivariant $\TMF$ group $\TMF^{-(\nu\oplus k)}_{G}(\pt)$ is the deformation class of 2d \Nequals{(0,1)} supersymmetric theories 
with the anomaly $\nu\oplus k\in (I_\bZ\Omega^\spin)^4(BG)$.
\end{conj}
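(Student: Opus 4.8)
The plan is to reduce this equivariant statement to the non-equivariant conjecture of Segal, Stolz and Teichner (Conjecture~\ref{conj:SST}) together with the identification of twists established in Section~\ref{sec:twists}. Since the assertion is a refinement of a conjecture, what I aim for is a clean derivation \emph{conditional} on Conjecture~\ref{conj:SST}, not an unconditional proof. First I would recall the standard dictionary between symmetries and parameter spaces: a 2d \Nequals{(0,1)} supersymmetric theory $T$ carrying a $G$-symmetry can be ``summed into $G$-bundles'', producing, functorially in maps to $BG$, a theory, hence a point of the space $\SQFT_{-\nu}$ parameterized over $BG$. Deformation classes of such $G$-symmetric theories with gravitational anomaly $\nu$ are therefore computed by the parameterized set $[BG,\SQFT_{-\nu}]$, further refined by the datum of the 't~Hooft anomaly, which singles out one twisted sector.

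Next I would match that twisted sector with a twist of $\TMF$ over $BG$. By \PhysicsFact~\ref{fact:anomaly} the anomaly of a 2d spin theory with $G$-symmetry lies in $(I_\bZ\Omega^\spin)^4(BG)$, and by the isomorphism \eqref{twisteq} of Section~\ref{sec:twists} we have
\begin{equation}
(I_\bZ\Omega^\spin)^4(BG)\simeq [BG,\bZ\times\BO\langle0,\ldots,4\rangle],
\end{equation}
whereas $[X,\BO\langle0,\ldots,4\rangle]$ is exactly the group of twists of $\TMF$ in the sense of \cite{ABG}. Composing these, the class $\nu\oplus k$ becomes a degree-plus-twist for $\TMF$ over $BG$, and I would then invoke the parameterized and twisted refinement of Conjecture~\ref{conj:SST} discussed in Section~\ref{subsec:tmftwist} (its untwisted form being Corollary~\ref{cor}) to conclude that the deformation classes at hand form the twisted group $\TMF^{-(\nu\oplus k)}(BG)$. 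When $G$ is simple, simply connected and compact, Proposition~\ref{prop_twist} (equivalently the Atiyah--Hirzebruch spectral sequence, using $\H^d(BG,\bZ)=\bZ,0,0,0,\bZ$ for $d=0,\ldots,4$) gives $(\widetilde{I_\bZ\Omega^\spin})^4(BG)=\H^4(BG,\bZ)=\bZ$, so the twist is literally the integer level $k$.

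Finally I would identify $\TMF^{-(\nu\oplus k)}(BG)$ with the equivariant group $\TMF^{-(\nu\oplus k)}_G(\pt)$. The cleanest route is to take this as the \emph{definition} of Borel-equivariant $\TMF$, which is all the anomaly-theoretic applications in this paper ever need; if instead one insists on a genuine equivariant $\TMF_G$, one would have to invoke a completion theorem in the style of Atiyah--Segal, but that is orthogonal to the present point.

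The main obstacle---and the reason the statement must stay a conjecture---is the very first step: the passage ``theory with $G$-symmetry $\leftrightarrow$ family of theories over $BG$'', combined with the parameterized and twisted form of the Segal--Stolz--Teichner correspondence, is precisely the physics-to-homotopy dictionary that is not yet rigorous, and it is only as solid as Conjecture~\ref{conj:SST} itself. Everything downstream---the twist bookkeeping via \eqref{twisteq}, the $BG$-computation, and the Borel identification---is then routine. I would also emphasize that the main results of the paper do not use the full force of this equivariant conjecture: the eventual reduction to $d=-1$ with trivial $G$ needs only the plain fact $\TMF^{21}(\pt)=0$.
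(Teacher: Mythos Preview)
The paper does not attempt a proof of this statement: it is presented as a \emph{conjecture}, an equivariant refinement of the Segal--Stolz--Teichner Conjecture~\ref{conj:SST}, and no derivation is given. The only thing the paper does in its vicinity is the Remark following the next conjecture, which points out that for the statement to be well-posed one must know that elements of $(I_\bZ\Omega^\spin)^4(X)$ can serve as degrees/twists of $\TMF$; that is then established as Proposition~\ref{prop_twist}. So there is, strictly speaking, no ``paper's own proof'' to compare against.

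Your proposal goes further than the paper: you sketch a conditional reduction to the non-equivariant Conjecture~\ref{conj:SST} via the dictionary ``$G$-symmetric theory $\leftrightarrow$ family over $BG$'' together with the twist identification \eqref{twisteq}, and then take $\TMF_G^{-(\nu\oplus k)}(\pt):=\TMF^{-(\nu\oplus k)}(BG)$ as a Borel definition. This is a reasonable heuristic and you correctly flag the genuine obstruction: the parameterized/twisted form of the Segal--Stolz--Teichner correspondence that you invoke in the middle step is exactly the content of the \emph{next} conjecture in the paper, so your ``derivation'' bottoms out in input of the same conjectural status. In short, nothing you wrote is wrong, but it is not a proof and the paper does not pretend to have one either; your closing caveat is the accurate summary.
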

\noindent In our case,  the class $[T]\in \TMF^{-(\nu\oplus k)}_{G}(\pt)$ should allow us to determine the fermion anomaly of the spacetime theory,
via 
$\sigma:\TMF^{-(\nu\oplus k)}_{G}(\pt)\to \KO^{-(\nu\oplus\tilde k)}_{G}((q))(\pt)$,
where $\tilde k$ is the twist of $\KO$ naturally induced by the twist $k$ of $\TMF$.

More generally, we expect the following conjecture to hold: 
\begin{conj}
The twisted $\TMF$ group $\TMF^{-(\nu\oplus k)}(X)$ is the deformation class of 2d \Nequals{(0,1)} supersymmetric theories parameterized by $X$
with the anomaly $\nu\oplus k\in (I_\bZ\Omega^\spin)^4(X)$. 
\end{conj}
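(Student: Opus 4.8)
The plan is to read this statement as the parameterized-and-twisted refinement of Conjecture~\ref{conj:SST} and Corollary~\ref{cor}: since Conjecture~\ref{conj:SST} is itself open, the statement cannot be established unconditionally, so the goal is to isolate the genuinely new input and reduce \emph{that} to something rigorous. Two things are new beyond Corollary~\ref{cor}. The first is mere well-posedness: an anomaly class $\nu\oplus k\in(I_\bZ\Omega^\spin)^4(X)$ must determine a twist of $\TMF$ over $X$ for the symbol ``$\TMF^{-(\nu\oplus k)}(X)$'' to mean anything at all. The second is the identification of the resulting twisted group with the deformation classes of families carrying that anomaly. The first point is purely homotopy-theoretic and is precisely the isomorphism \eqref{twisteq}; the second is a naturality statement sitting on top of Conjecture~\ref{conj:SST}.

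So the first step is to prove \eqref{twisteq}. On one side, by \cite{ABG} the twists of $\TMF$ over $X$ together with the integer degree are classified by $[X,\bZ\times\BO\langle0,\ldots,4\rangle]$. On the other, $(I_\bZ\Omega^\spin)^4(-)$ is represented, on CW complexes, by a $4$-truncated space whose homotopy groups in degrees $0$ through $4$ — read off from the spin bordism groups $\Omega^\spin_{\le4}(\pt)=\bZ,\bZ/2,\bZ/2,0,\bZ$ through the exact sequence \eqref{eq_exact_IE} — are $\bZ,\bZ/2,\bZ/2,0,\bZ$, the same as those of $\bZ\times\BO\langle0,\ldots,4\rangle$, with low-degree generators the degree, $w_1$, $w_2$ and $p_1/2$. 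The content of \eqref{twisteq} is then the identification of the two Postnikov towers, i.e.\ the agreement of their $k$-invariants through degree $4$; I would verify this by computing the relevant $k$-invariants of $\BO\langle0,\ldots,4\rangle$ and matching them with the ones packaged by \eqref{eq_exact_IE} for $E=M\Spin$. The same argument, truncated further, gives the promised $\KO$-version.

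Granting \eqref{twisteq}, what remains is to extend Conjecture~\ref{conj:SST} from fixed-anomaly families to families carrying a nontrivial twist. Here I would argue by naturality and additivity: the $\Omega$-spectrum $\{\SQFT_\bullet\}$ of Conjecture~\ref{conj:SST} carries a natural action of invertible 2d theories — one may tensor a family of 2d \Nequals{(0,1)} theories over $X$ by a family of invertible ones, shifting the anomaly by the corresponding class of \eqref{twisteq} — and under $\{\SQFT_\bullet\}\simeq\TMF$ this action must agree with the standard action of $\BO\langle0,\ldots,4\rangle$-twists on twisted $\TMF$, because both are natural, compatible with sums, and agree over a point (Conjecture~\ref{conj:SST}) and in the untwisted case (Corollary~\ref{cor}). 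Specializing $X=BG$ and using $\H^{\le3}(BG,\bZ)=0$ for $G$ simple and simply connected — so that $(\widetilde{I_\bZ\Omega^\spin})^4(BG)=\H^4(BG,\bZ)=\bZ$ — then recovers the equivariant conjecture together with the identification of $k$ with the $G$ current-algebra level.

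The main obstacle is the honest one: Conjecture~\ref{conj:SST} is not a theorem, so this statement is necessarily conditional on it, and the most one can actually deliver is (i) the rigorous isomorphism \eqref{twisteq} and (ii) the observation that, given Conjecture~\ref{conj:SST}, the twisted statement is \emph{forced} by naturality plus the low-degree agreement rather than being an extra assumption. Within the rigorous part, the only place genuine work is needed is the $k$-invariant comparison in step (i): matching the Postnikov data of $\BO\langle0,\ldots,4\rangle$ with the $\mathrm{Ext}$ and $\Hom$ data that \eqref{eq_exact_IE} produces for $E=M\Spin$.
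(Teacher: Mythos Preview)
Your diagnosis of the logical status is correct and matches the paper exactly: the statement is stated as a conjecture, not a theorem, and the paper explicitly says (in the Remark immediately following it) that the only rigorous content needed to make it well-posed is the isomorphism \eqref{twisteq}, which the paper establishes as Proposition~\ref{prop_twist}. The paper does \emph{not} attempt your part (ii) --- reducing the twisted identification to Conjecture~\ref{conj:SST} by a naturality/invertible-theory-action argument --- at all; that remains heuristic there as well. So on the structure of the argument you and the paper agree.

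Where you differ is in the proof of the rigorous piece \eqref{twisteq}. You propose to show $[X,\bZ\times\BO\langle0,\ldots,4\rangle]\simeq(I_\bZ\Omega^\spin)^4(X)$ by matching homotopy groups and then comparing $k$-invariants of the two Postnikov towers. The paper instead \emph{constructs an explicit map} between the two representing spaces: it takes the Anderson self-duality element $\gamma_\KO\colon\KO\to\Sigma^4 I_\bZ\KO$, composes with the Anderson dual of the Atiyah--Bott--Shapiro orientation $I_\bZ\mathrm{ABS}\colon I_\bZ\KO\to I_\bZ MT\Spin$, and reads off a map $\bZ\times\BO\simeq\KO_0\to(I_\bZ\Omega^\spin)_4$. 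It then checks this map is an isomorphism on $\pi_k$ for $0\le k\le 4$ by applying the five lemma to the commutative ladder of exact sequences \eqref{eq_exact_IE} for $\KO$ and $MT\Spin$, using that $\mathrm{ABS}$ induces isomorphisms on $\pi_{\le 4}$. Since the target has no higher homotopy, this suffices.

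The paper's route is both shorter and sharper than yours: it never computes a $k$-invariant, and it produces a \emph{specific} natural isomorphism rather than an abstract one. Your approach would in principle work, but matching $k$-invariants of $(I_\bZ\Omega^\spin)_4$ against those of $\BO\langle0,\ldots,4\rangle$ is more laborious than you suggest, and at the end you would still owe a choice of equivalence; the paper's map $I_\bZ\mathrm{ABS}\circ\gamma_\KO$ hands you that choice for free and is exactly the map that later makes the diagram \eqref{diag_twist} commute.
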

\begin{rem}
For the two conjectures above to make sense,
we need to show that elements of $(I_\bZ\Omega^\spin)^4(X)$ can be 
used as degrees and twists of $\TMF^\bullet(X)$,
as we already pointed out in Sec.~\ref{subsec:tmftwist}.
We will establish this fact as Proposition~\ref{prop_twist} later in Appendix~\ref{sec:twists}.
\end{rem}

\subsubsection{Spacetime tangential structure}

Let us describe the structure we need on the $d$-dimensional spacetime manifold $M_d$,
when we use $[T]\in \TMF^{22+d\oplus -k}_{G}(\pt)$ as the input.
We first send $[T]$ to the Borel equivariant version via $c:\TMF^\bullet_G(\pt)\to \TMF^\bullet(BG)$.
Now, $M_d$ is equipped with a $G$-bundle with connection, and in particular with a map $f:M_d\to BG$.
We can then pull back the class $c[T]$ to consider $f^*c[T]\in \TMF^{22+d\oplus-f^*(k)}(M_d)$.

The heterotic string theory construction demands that 
the motion of strings on $M_d$ should be consistently quantizable,
and the family of 2d \Nequals{(0,1)} quantum field theories specified by $f^*c[T]$ over $M_d$ 
should give an element of $\TMF^{22}(\pt)$ after the motion along $M_d$ is also quantized. 
Mathematically, as we discussed at the end of Sec.~\ref{sec:tmf-for-physicists},
this operation should be given by the pushforward map in $\TMF$, \begin{equation}
p_! : \TMF^{22+d\oplus-f^*(k)}(M_d) \to \TMF^{22}(\pt),
\end{equation}
associated to the projection $p: M_d\to \pt$.
This means that  $M_d$ needs to be equipped with a twisted string structure
realizing this shift in the degrees.
This in turn means that $M_d$ should be equipped with a string structure twisted by $-f^*(k)$,
which is naturally realized if 
$M_d$ together with the $G$-bundle defines a class in $[M_d,f]\in \Omega^\text{string}_{d\oplus -k}(BG)$.\footnote{%
The description of the spacetime structure for heterotic string theory in terms of twisted differential string structure 
was discovered and has been extensively developed by H. Sati, U. Schreiber and their collaborators. 
See e.g.~\cite{Sati:2009ic}.
}

Let us check  what this condition means when $G$ is a simply-connected simple compact Lie group.
We first note that elements 
$k\in [X,\BO\langle0,\ldots,4\rangle]$ 
can be used to provide twists of $\Omega^{\stri}_\bullet(X)$,
as we recall in Appendix~\ref{sec:twists}.
We also refer the reader to the same Appendix for the definition of the Postnikov truncation $Y\langle0,\ldots,n\rangle$.
More explicitly, the string structure twisted by $k$ on a manifold $M$ equipped with a map $f:M\to X$
is specified as follows:
we consider  the classifying map $g: M\to \BO$ of the tangent bundle,
and consider its projection $g\in [X,\BO\langle0,\ldots,4\rangle]$, which we denoted by the same symbol.
We now take the product $g \cdot  (k\circ f) \in [M,\BO\langle0,\ldots,4\rangle]$,
and we specify the trivialization of this product.
This is the twisted string structure.

Now let  $G$ is a simply-connected simple compact Lie group
and consider an element  $k\in [BG,\BO\langle0,\ldots,4\rangle]$.
The pull-back of $w_1,w_2 \in \H^{1,2}(\BO\langle0,\ldots,4\rangle,\bZ/2)$ is trivial,
and therefore  $k$ can be lifted uniquely to an element $k\in [BG,B\Spin\langle0,\ldots,4\rangle]$.
We now consider the generator $\lambda$ of $\H^4(B\Spin\langle0,\ldots,4\rangle,\bZ)\simeq \bZ$,
and take its pullback $\kappa=k^*(\lambda) \in \H^4(BG,\bZ)\simeq \bZ$.
The mapping from $k$ to $\kappa$ gives the isomorphism $[BG,\BO\langle0,\ldots,4\rangle]\simeq \H^4(BG,\bZ)$.

The string structure on $M$ with a map $f:M\to BG$ twisted by $-k$ is then the trivialization of $g \cdot f^*(\kappa)$.
This means that $M$ is a spin manifold, and that we have specified a classifying map $g: M\to B\Spin$,
and that $g^*(\lambda) - f^*(k^*(\lambda)) \in \H^4(M,\bZ)$ is trivialized.
This last condition is indeed what we have in string theory,
``$dH=\tr R^2 -k \tr F^2$\,'',
which is the famous Green-Schwarz condition.

\subsubsection{Anomalies of gravitinos and spin-$1/2$ fermions}
As described above, the spacetime theory is defined on $d$-dimensional spacetimes $M$ equipped with a twisted string structure.
Then \PhysicsFact~\ref{fact:anomaly} says that the anomalies to be computed takes values in $(I_\bZ\Omega^\text{string})^{d+2\oplus -k}(BG)$.
Therefore, we need to produce a homomorphism \begin{equation}
\alpha_\stri \colon \TMF^{22+d \oplus -k}_{G}(\pt) \to (I_\bZ\Omega^\text{string})^{d+2\oplus -k}(BG).
\end{equation}
To study the massless fermions, we only need the twisted spin structure,
and therefore we need to study a homomorphism \begin{equation}
\alpha_\spin:\KO^{22+d\oplus -\tilde k}_{G}((q))(\pt) {\to} (I_\bZ\Omega^\spin)^{d+2\oplus -\tilde k}(BG)
\end{equation} describing the anomalies of various massless fermions.\footnote{%
The authors appreciate Kantaro Ohmori for helpful discussions on the formulation of $\alpha_\spin$.
}
For a while, we neglect the effect of the twists, as it can be straightforwardly incorporated afterwards.

When $T$ is an SCFT of $(c_L,c_R)=(26-d,\tfrac32(10-d))$ and therefore $\nu=2(c_R-c_L)=-22-d$, the element \begin{equation}
\sigma([T]) = \eta(q)^{-22-d} q^{-(26-d)/24}(V+Wq +\cdots)\in \KO^{-\nu}((q))(\pt)
\label{special}
\end{equation}
has poles of order at most 2, by combining \eqref{AB}, \eqref{3.4} and $q^{\nu/24}q^{L_0-c_L/24}=q^{L_0-2}$.
2. Equivalently, we have \begin{equation}
\sigma([T]) \in q^{-2}\KO^{-\nu}[[q]](\pt).
\end{equation}

For our purposes, $\alpha_\spin$ needs to have the property that it maps 
the element  of our interest, \eqref{special},  to  the anomaly of the gravitino and the dilatino valued in $V$ together with the anomaly of spin-$\tfrac12$ fermions valued in $W$.
 This statement can be translated to mathematics most conveniently using the formulation of \cite{Yamashita:2021cao}. 
 
 To briefly recall it,  we have $(I_\bZ \Omega^\spin)^d(BG) \simeq (I_\bZ \Omega^{\spin \times G})^d(\pt)$, where $G$ is regarded as an internal symmetry group in the latter (i.e., the tangential structure given by the homomorphism $\Spin(d) \times G \xrightarrow{\mathrm{pr}_{\Spin (d)}} \Spin(d) \to \mathrm{O}(d)$). 
In \cite{Yamashita:2021cao}, a model of the differential extension of $(I_\bZ \Omega^{\spin\times G})^d(\pt)$ is given
 in terms of a pair $(\omega,h)$
 which directly formalizes the physics interpretation of invertible phases,
 where $\omega$ is called the anomaly polynomial and $h$ is the phase associated by the anomaly theory.
Mathematically,  $\omega$ is an element of total degree $d$ in 
$\H^\bullet (MT(\Spin\times G); \bR) \simeq   \H^\bullet(BG; \bR) \otimes_\bR \H^\bullet (B\Spin;\bR)$,
and can be identified with the rationalization of the  element in $(I_\bZ\Omega^{\spin \times G})^d(\pt)$.
Then, $h$ assigns a value in $\bR/\bZ$ to a $(d-1)$-dimensional closed spin manifold with connection $M$ which is equipped with a principal $G$-bundle with connection, 
so that when $M=\partial N$ they satisfy \begin{equation}
h(M) = \int_N \mathrm{cw}(\omega) \mod 1,
\end{equation}
where $\mathrm{cw}$ means the Chern-Weil construction. 
For example, $(\widehat{A}(T)|_{d}, \overline{\eta})$
is such a pair for $G = \{1\}$, as a result of the Atiyah-Patodi-Singer index theorem,
where $\widehat{A}(T)|_d \in {\H}^{d}(B\Spin; \bR)$ denotes the degree-$d$ part of the A-hat class of the universal bundle $T$ over $B \Spin$
and $\bar\eta:=(\eta(D)+\dim\ker D)/2$,  where $D$ is the Dirac operator of the spin bundle and $\eta$ is its eta invariant.

In our case, 
the image of an element of the form \eqref{special} is the pair $(\omega,h)$ given as follows,
by translating the physics result of \cite{AlvarezGaume:1983ig,Alvarez-Gaume:1984zst,Witten:1985xe} in the formulation we are using here.
We start with the case $d+22\equiv 0 \pmod 8$.
We can represent any element of $\KO^{d+22}_G \simeq \KO^0_G$ as a difference class of real representations of $G$.
Assume $V$ and $W$ are represented by real representations, and denote by $\mathrm{ch}(V), \mathrm{ch}(W) \in  \H^\bullet(BG; \bR)$ their Chern characters. 
Then $\omega$ is given by
\begin{equation}
\left.\omega = \frac12 \left(\mathrm{ch}(V) \otimes \widehat{A}(T) \cdot \mathrm{ch}(T)  - 4\mathrm{ch}(V) \otimes \widehat{A}(T) + \mathrm{ch}(W) \otimes \widehat{A}(T)
\right)\right|_{d+2},
\label{required}
\end{equation}
where $|_{d+2}$ means the degree $(d+2)$-part,
$\hat A$ is the A-hat polynomial,
and $T$ is the universal bundle over $B\Spin$ whose pullback under the classifying map $M\to B\Spin$ is the tangent bundle.

For a $(d+1)$-dimensional closed spin manifold $M$ with connection equipped with a principal $G$-bundle $P \to M$ with connection, $h$ is given by \begin{equation}
h(M)=\frac12[\bar\eta((P \times_G V) \otimes (TM\oplus \bR)) -4\bar\eta((P \times_G V))+ \bar\eta((P \times_G W)) ] \pmod \bZ,
\label{requiredEta}
\end{equation}
where $\bar\eta(E) \in \bR$ for a vector bundle with connection $E$ on a closed spin manifold
is defined as \begin{equation}
\bar\eta(E) := \frac12 (\eta(D_{S\otimes E}) + \dim\ker D_{S\otimes E} )
\end{equation} where $D_{S\otimes E}$ is the Dirac operator on the spin bundle  $S$ tensored with $E$.

In the case $22 + d \not\equiv 0 \pmod 8$, similar descriptions of the images of $\alpha_\spin$ in terms of twisted eta invariants are possible, 
since they can be represented by finite-dimensional Clifford modules with $G$-action \cite{AtiyahSegalCompletion}. 

\subsubsection{Anomalies of general elements in $\KO^\bullet_G((q))$}
Heterotic string constructions only produce elements of the form \eqref{special}
in $\KO^{22+d}_G((q))$, i.e.~those with poles of order at most 2.
To show that the heterotic string constructions do not have anomalies,
we have to show the vanishing only for this class of elements.
Therefore, we only have to define $\alpha_\spin$ on this class of elements.

We find it far more convenient, however, to define $\alpha_\spin$ on the entire elements of $\KO^{22+d}_G((q))$,
allowing poles of arbitrary order. 
This is because it allows us to use the power of modular forms and topological modular forms in the analysis.

For a general element  $U\in \KO^{22+d}_G((q))$, not necessarily of the form \eqref{special}, 
we generalize \eqref{required} and \eqref{requiredEta} as follows.
First, the expression \eqref{required} is generalized to
\begin{equation}
\omega = \text{coeff.~of $q^0$ of }  \frac12\left.\Delta(q) 
\ch(U) \otimes \widehat{A}(T) \cdot \mathrm{ch}(\Wit T)\right|_{d+2} ,
\label{omega}
\end{equation}
where we define\begin{equation}
\Wit V = \frac{\eta(q)^{d+2}}{q^{(d+2)/24}}  \bigotimes_{\ell\ge 1}
\bigoplus_{k\ge 0} q^{\ell k} \mathrm{Sym}^k V
\end{equation}
for a vector bundle $V$,
and $T$ is  the universal bundle over $B\Spin$ which pulls back to the tangent bundle by the classifying map of a spin manifold.
The element $\mathrm{ch}(\Wit T) \in \H^\bullet(B\Spin; \bR) \otimes_{\bR} \bR((q^{1/24}))$ is then the invariant polynomial which produces the characteristic form \begin{equation}
\ch(\Wit T) = \tr\left(
\frac{\eta(q)^{d+2}}{q^{(d+2)/24} \prod_\ell (1-q^\ell e^{iR/(2\pi)})}
\right),
\end{equation}
where $R$ is the curvature of the tangent bundle.
We then generalize the expression \eqref{requiredEta} to \begin{equation}
h(M) = \text{coeff.~of $q^0$ of }  
\frac12\Delta(q)  \bar\eta((P \times_G U)\otimes \Wit (TM\oplus \bR)).
\label{h}
\end{equation}
It is straightforward to check that the expressions \eqref{omega} and \eqref{h} reduce to \eqref{required} and \eqref{requiredEta} when restricted to elements of the form \eqref{special}.

\begin{rem}
\label{rem:wg}
This $\ch(\Wit T)$ is closely related to  the spectrum level expression of the Witten genus for spin manifolds which we denoted by $
\mathrm{Wit}_\spin : MT\Spin \to \KO((q)) 
$ in \eqref{slwg}.
Indeed, the Chern-Dold character of $\mathrm{Wit}_\spin$,
\begin{align}\label{eq_chd_Wit}
    \mathrm{chd}(\mathrm{Wit}_\spin ) \in \H^0(MT\Spin; \pi_{-\bullet}\mathrm{KO}((q)) \otimes \Q) \simeq H^\bullet(B\Spin; \Q) \otimes_\Q \Q((q)),
\end{align}
equals $\hat A(T)\cdot \ch(\Wit T)$. 
 Here, we use the homomorphism $\delta \colon\pi_{-\bullet}\KO((q)) \to \bZ((q))$ in \eqref{phiW} for the isomorphism in \eqref{eq_chd_Wit}. 
 In particular, for a spin manifold $N$ and its class $[N]\in \Omega^\spin_\bullet (\pt)$, we have \begin{equation}
\delta\circ \Wit_\spin([N]) =\int_N \widehat{A}(TN) \cdot \ch(\Wit TN) = \int_N \chd(\Wit_\spin).
\end{equation}
\end{rem}

\subsubsection{Generalization from $BG$ to more general $X$}

So far we considered truly equivariant versions of $\TMF$ and $\KO$, 
but we expect that the maps factor through the Borel equivariant versions.
Then there is no reason to restrict the arguments to the various generalized cohomology theories to be the classifying spaces,
and we will replace $BG$ by a more general $X$, and require the naturality for $\alpha_{\mathrm{string}}$ and $\alpha_{\mathrm{spin}}$. 
This move to $X$ has an added bonus that it can also take into account the possible existence of the space of exactly marginal couplings of the input 2d superconformal field theory $T$, 
which appears as the target space of massless scalar fields of the spacetime theory,
which can also have an anomaly \cite{Moore:1984dc,Moore:1984ws,Manohar:1984zj}.
Summarizing, we expect that there exists a natural transformation \begin{equation}
\alpha_\spin: \KO^{22+d\oplus -\tilde k}((q))(X) \to (I_\bZ\Omega^\spin)^{d+2\oplus -\tilde k}(X).
\end{equation}

We can describe this transformation in the case $22+d \equiv 0 \pmod 4$ in terms of twisted Dirac operators as in the previous cases. 
However, in the other degrees, this does not generalize straightforwardly
since it is not true in general that we can represent elements of $\KO$-groups as difference classes of finite-dimensional Clifford module bundles \cite{KaroubiCliff} unless $22+d\equiv 0 \pmod 4$ (e.g., consider the generator of $\KO^1(S^1) \simeq \bZ$).
One possible way to describe the image of general elements is to use suspension to reduce to the case $d + 22 \equiv 0 \pmod 4$, assuming that $\alpha_\spin$ is a transformation of cohomology theories. 
We can check that, in the case of $\KO^{22+d}_G((q))(\pt)$, the two constructions produce the same elements.

\subsubsection{Compatibility with compactifications}

There is another property of $\alpha_\stri$ and $\alpha_\spin$ we expect from physics considerations and assume in the following.
Given a $d$-dimensional heterotic compactification given by a class $[T]\in \TMF^{22+d}(X)$,
we can make a further compactification on an $m$-dimensional smooth string manifold $M$,
resulting in a $(d-m)$-dimensional compactification.
The anomaly of this $(d-m)$-dimensional compactification can be computed in two ways.
One is first to  compute the anomaly in $d$ dimensions in $(I_\bZ\Omega^\stri)^{d+2}(X)$, 
and then to compactify the resulting the anomaly on $M$, which is valued in $(I_\bZ\Omega^\stri)^{d-m+2}(X)$.
Another is first to  consider the internal superconformal field theory describing $[T]$ together with the motion along the manifold $M$ in $\TMF^{22+d-m}(X)$,
and then compute its anomaly via $\alpha_\stri$, again resulting in $(I_\bZ\Omega^\stri)^{d-m+2}(X)$.
We expect and assume that these two give the same result. In other words we assume the commutativity of the following square
    \begin{align}
        \vcenter{\xymatrix{
        \TMF^{d+22} \ar[r]^-{\alpha_{\stri}}(X)\ar[d]^-{\mathrm{Wit}_\stri([M])\cdot } & (I_\bZ\Omega^{\stri})^{d+2} (X)\ar[d]^-{[M] \cdot} \\
        \TMF^{d-m+22}(X) \ar[r]^-{\alpha_{\stri}}& (I_\bZ\Omega^{\stri})^{d-m+2} (X)
        }}.
    \end{align}
The anomaly from fermions are expected to behave in a similar manner, so we assume the following commuting square:
    \begin{align}
        \vcenter{\xymatrix{
        \KO((q))^{d+22}(X) \ar[r]^-{\alpha_{\spin}}\ar[d]^-{\mathrm{Wit}_\spin([M])\cdot } & (I_\bZ\Omega^{\spin})^{d+2}(X) \ar[d]^-{[M] \cdot} \\
        \KO((q))^{d-m+22}(X) \ar[r]^-{\alpha_{\spin}}& (I_\bZ\Omega^{\spin})^{d-m+2} (X)
        }}.
    \end{align}
\noindent The expression we gave for $\omega$ in \eqref{omega} is compatible with this requirement, thanks to Remark~\ref{rem:wg} and the product formula for the Witten genus.

More generally, we can consider a family of $m$-dimensional string manifolds $M$ parameterized by $X$, described by a fibration $M \to N\to X$.
We can then consider a family of superconformal field theories parameterized by $N$, specified by a class in $\TMF^{d+22}(N)$.
The corresponding anomaly takes values in $(I_\bZ \Omega^\stri)^{d+2}(N)$.
Let us now regard the fiber $M$ as a part of the spacetime and compactify along it.
We now have a family of theories parameterized by $X$, specified by a class in $\TMF^{d-m+22}(X)$,
whose anomaly takes values in $(I_\bZ \Omega^\stri)^{d-m+2}(X)$.
We then expect and assume that two ways of computing anomalies are equal, i.e.~we demand the commutativity of the following square     \begin{align}
        \vcenter{\xymatrix{
        \TMF^{d+22} \ar[r]^-{\alpha_{\stri}}(N)\ar[d]^-{\text{pushforward}} & (I_\bZ\Omega^{\stri})^{d+2} (N)\ar[d]^-{\text{pushforward}} \\
        \TMF^{d-m+22}(X) \ar[r]^-{\alpha_{\stri}}& (I_\bZ\Omega^{\stri})^{d-m+2} (X)
        }}
    \end{align}
and similarly 
\begin{align}
        \vcenter{\xymatrix{
        \KO((q))^{d+22} \ar[r]^-{\alpha_{\spin}}(N)\ar[d]^-{\text{pushforward}} & (I_\bZ\Omega^{\spin})^{d+2} (N)\ar[d]^-{\text{pushforward}} \\
        \KO((q))^{d-m+22}(X) \ar[r]^-{\alpha_{\spin}}& (I_\bZ\Omega^{\spin})^{d-m+2} (X)
        }}.
\end{align}

All these properties follow most naturally if the spectrum level expressions of $\alpha_\stri$ and $\alpha_\spin$, namely the morphisms \begin{align}
\alpha_\stri\colon&  \TMF \to \Sigma^{-20}I_\bZ MT\Stri,  &
\alpha_\spin\colon&  \KO((q)) \to \Sigma^{-20}I_\bZ MT\Spin,  
\end{align}
preserve $MT\Stri$- and $MT\Spin$-module structures, 
where the module structures of $\TMF$ and $\KO((q))$ are given in terms of $\Wit_\stri$ and $\Wit_\spin$ in \eqref{slwg}, respectively.
We use them as one of the assumptions in Sec.~\ref{sec:proof}.

\subsubsection{Summary}\label{subsubsec_physfacts}
We summarize the long discussion in this section as three physics assumptions used as the starting point in the rigorous proof given in the next section.
Namely,
the  natural transformation of our interest is given as follows:
\begin{physfact}
\label{physicsfact}
There exists a morphism of spectra,  
\[
    \alpha_{\spin} \colon \KO((q)) \to \Sigma^{-20}I_\bZ MT\mathrm{Spin},
\]
so that the fermion anomaly of the heterotic compactification to $d$ dimensions with a path-connected parameter space $X$ with level $k\in (\widetilde{I_\bZ\Omega^\spin})^4(X)$ 
is characterized by a natural transformation obtained by the composition \begin{multline*}
\alpha_\stri:\TMF^{22+d\oplus -k}(X) \xrightarrow{\sigma} \KO^{22+d\oplus -\tilde k}((q))(X)\\
\xrightarrow{\alpha_\spin} (I_\bZ\Omega^\spin)^{d+2\oplus -\tilde k}(X)
\xrightarrow{I_\bZ \iota} (I_\bZ\Omega^\text{string})^{d+2\oplus -k}(X),
\end{multline*}
where $\sigma$ is the natural transformation in \eqref{slwg}
and $I_\bZ \iota$ is the Anderson dual of the forgetful map $\iota \colon MT\mathrm{String} \to MT\mathrm{Spin}$ which also appeared in \eqref{slwg}.
\end{physfact}
\begin{rem}
Physically, this statement means that we compute the anomaly of a heterotic compactification by 
first extracting the massless fermion fields by $\sigma$,
computing their anomaly by $\alpha_\spin$,
and restricting the spacetime manifolds to satisfy the Green-Schwarz constraint ``$dH=\tr R^2-k\,\tr F^2$'' by $I_\bZ\iota$.
\end{rem}

For the  properties of $\alpha_\spin$, we will  only use the following two:
\begin{physfact}
\label{fact_rationalization}
The rationalization of $\alpha_\spin$, \[
(\alpha_\spin)_\Q: \KO((q))^{d + 22}_\Q(\pt)  \to (I_\bZ\Omega^\spin)^{d + 2}_\Q(\pt) \simeq \mathrm{H}^{d+2}(B\mathrm{Spin}; \Q)
\] 
is given by
\begin{align*}
    U \mapsto \text{coeff.~of $q^0$ of } \left( \tfrac12 \Delta(q) \chd(\Wit_\spin ) \cdot \ch(U)\right). 
\end{align*}
Here we are using the homomorphism $\delta \colon\pi_{-\bullet}\KO((q)) \to \bZ((q))$ in \eqref{phiW}. 
\end{physfact}
\begin{rem}
Physically, this simply describes the anomaly polynomial of the spacetime theory in terms of the elliptic genus of the internal SCFT, as was done already in the 80s in \cite{Schellekens:1986xh,Lerche:1987qk,Lerche:1988np}.
\end{rem}
 \begin{physfact}
\label{fact_natural}
The morphism $\alpha_\spin$ preserves the $MT\mathrm{Spin}$-module structures, where the $MT \mathrm{Spin}$-module structure of $\KO((q))$ is given by $\mathrm{Wit}_\spin$ in \eqref{slwg}. 
\end{physfact}
\begin{rem}
Physically, this means that the anomaly of the theory obtained by a further compactification on a smooth manifold $M$ 
can also be computed by first computing the anomaly in higher dimensions and 
then later evaluating that on the manifold $M$.
\end{rem}

\section{Vanishing of anomalies}
\label{sec:proof}
\subsection{The proof}
In the previous sections, we have translated our physics problem into a purely mathematical one,
the vanishing of $\alpha_\stri$.
Here we give  a mathematical proof to it,
starting from the physical assumptions listed in Subsubsection \ref{subsubsec_physfacts}.
The result is our main Theorem \ref{thm_vanishing}.

We work in the stable homotopy category, and start from the following general result.
\begin{lem}\label{lem_multiplicative}
Let $E$ be a multiplicative cohomology theory and $\mathcal{B}$ be a multiplicative tangential structure. 
Assume we are given a homomorphism of ring spectra, 
\begin{align}
    \mathcal{G} \colon MT\mathcal{B} \to E. 
\end{align}
Let $n$ be an integer and assume that a morphism
\begin{align}\label{eq_alpha}
    \alpha \colon E \to  \Sigma^n I_\bZ  MT\mathcal{B}
\end{align}
preserves the $MT\mathcal{B}$-module structures. 
Then there exists a unique element $\beta \in I_\bZ E^n(\pt)$ such that $\alpha \in [E, \Sigma^n I_\bZ MT{\mathcal{B}} ] \simeq [ MT\mathcal{B} \wedge E , \Sigma^n\IZ]$ coincides with the following composition, 
\begin{align}\label{eq_lem_multiplicative}
     MT\mathcal{B} \wedge E \xrightarrow{ \mathcal{G} \wedge \mathrm{id}} E \wedge E \xrightarrow{\mathrm{multi}} E \xrightarrow{\beta} \Sigma^n \IZ. 
\end{align}

\end{lem}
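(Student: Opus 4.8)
The plan is to exploit the $MT\mathcal{B}$-linearity of $\alpha$ to pin it down by its value on the unit. Concretely, since $\alpha \colon E \to \Sigma^n I_\bZ MT\mathcal{B}$ is a map of $MT\mathcal{B}$-modules, and $I_\bZ MT\mathcal{B} = F(MT\mathcal{B}, \Sigma^0 I\bZ)$ is the function spectrum (so that $\Sigma^n I_\bZ MT\mathcal{B} = F(MT\mathcal{B}, \Sigma^n I\bZ)$ with its natural $MT\mathcal{B}$-module structure by acting on the source), the adjunction $[E, F(MT\mathcal{B}, \Sigma^n I\bZ)] \simeq [MT\mathcal{B} \wedge E, \Sigma^n I\bZ]$ already appears in the statement. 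First I would make this adjunction explicit and track what the $MT\mathcal{B}$-module structures become under it: an $MT\mathcal{B}$-linear map $E \to F(MT\mathcal{B}, \Sigma^n I\bZ)$ corresponds under the adjunction to a map $MT\mathcal{B} \wedge E \to \Sigma^n I\bZ$ which is itself $MT\mathcal{B}$-linear, where now $MT\mathcal{B}$ acts on the smash factor $MT\mathcal{B}$ by multiplication and trivially on $\Sigma^n I\bZ$.

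Next I would use the homomorphism of ring spectra $\mathcal{G} \colon MT\mathcal{B} \to E$ to factor this. The composite $MT\mathcal{B} \wedge E \xrightarrow{\mathcal{G} \wedge \mathrm{id}} E \wedge E \xrightarrow{\mathrm{multi}} E$ is the canonical $MT\mathcal{B}$-module structure map of $E$ (this is exactly what it means for $E$ to be an $MT\mathcal{B}$-algebra via $\mathcal{G}$). So the $MT\mathcal{B}$-linear map $MT\mathcal{B}\wedge E \to \Sigma^n I\bZ$ adjoint to $\alpha$ should factor through this module structure map, yielding a map $\beta \colon E \to \Sigma^n I\bZ$, i.e.\ an element $\beta \in I_\bZ E^n(\pt)$. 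The key point making this factorization work is the unit: restricting along $S \wedge E \xrightarrow{\text{unit}\wedge\mathrm{id}} MT\mathcal{B} \wedge E$ recovers a map $E \to \Sigma^n I\bZ$, and I would take \emph{this} to be $\beta$; then $MT\mathcal{B}$-linearity of the adjoint of $\alpha$ forces it to agree with $\beta \circ \mathrm{multi} \circ (\mathcal{G}\wedge\mathrm{id})$ everywhere, because both are $MT\mathcal{B}$-linear extensions of the same map on the unit summand, and the module structure map $\mathrm{multi}\circ(\mathcal{G}\wedge\mathrm{id})$ is the universal (free) such extension --- equivalently, $MT\mathcal{B}\wedge E$ is the free $MT\mathcal{B}$-module on $E$, so an $MT\mathcal{B}$-linear map out of it is determined by its restriction to $E$.

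For uniqueness: if $\beta, \beta'$ both make \eqref{eq_lem_multiplicative} equal to (the adjoint of) $\alpha$, then precomposing with the unit $S\wedge E \to MT\mathcal{B}\wedge E$ and using that $\mathrm{multi}\circ(\mathcal{G}\wedge\mathrm{id})$ restricted along the unit is the identity of $E$ (since $\mathcal{G}$ is unital), we get $\beta = \beta'$ in $[E, \Sigma^n I\bZ] = I_\bZ E^n(\pt)$. I expect the main obstacle to be purely bookkeeping: one must be careful about which of the two smash factors carries the $MT\mathcal{B}$-action after adjunction, and one must verify that the adjunction $[E, F(MT\mathcal{B}, \Sigma^n I\bZ)] \simeq [MT\mathcal{B}\wedge E, \Sigma^n I\bZ]$ is one of $MT\mathcal{B}$-module spectra (not just of spectra) so that ``$\alpha$ is $MT\mathcal{B}$-linear'' translates correctly into ``its adjoint factors through the free module.'' Once the module-structure map of $E$ is identified with $\mathrm{multi}\circ(\mathcal{G}\wedge\mathrm{id})$ and $MT\mathcal{B}\wedge E$ is recognized as the free $MT\mathcal{B}$-module on $E$, the existence and uniqueness of $\beta$ are formal.
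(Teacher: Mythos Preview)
Your approach is correct in spirit and arrives at the same $\beta$ as the paper, but there is one technical wobble worth flagging. You assert that the adjoint $\widetilde\alpha\colon MT\mathcal{B}\wedge E\to\Sigma^n I\bZ$ is ``$MT\mathcal{B}$-linear with trivial action on $\Sigma^n I\bZ$,'' and then invoke freeness of $MT\mathcal{B}\wedge E$. But $\Sigma^n I\bZ$ carries no natural $MT\mathcal{B}$-module structure (there is no augmentation $MT\mathcal{B}\to S$), so the free-module universal property does not apply to maps into it as stated. What $MT\mathcal{B}$-linearity of $\alpha$ actually buys you on the adjoint side is the \emph{balanced} condition $\widetilde\alpha(mm'\wedge e)=\widetilde\alpha(m\wedge m'\cdot e)$, which is precisely the statement that $\widetilde\alpha$ descends through the coequalizer $MT\mathcal{B}\wedge_{MT\mathcal{B}}E\simeq E$; equivalently, one can phrase this as the coinduction adjunction $\mathrm{Hom}_{MT\mathcal{B}}(E,F(MT\mathcal{B},\Sigma^n I\bZ))\simeq\mathrm{Hom}_S(E,\Sigma^n I\bZ)$. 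Either reformulation makes your argument go through cleanly.

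The paper sidesteps this issue by dualizing on the other factor: it passes to $I_\bZ\alpha\colon MT\mathcal{B}\to\Sigma^n I_\bZ E$, which \emph{is} a map of $MT\mathcal{B}$-modules (since $I_\bZ E$ inherits an $MT\mathcal{B}$-module structure via $\mathcal{G}$), and then uses that $MT\mathcal{B}$ is the free $MT\mathcal{B}$-module on one generator to conclude that $I_\bZ\alpha$ is determined by $\beta:=I_\bZ\alpha(1_{MT\mathcal{B}})$. This is the same ``evaluate on the unit'' idea as yours, just run through the other of the two adjunctions $[E,\Sigma^n I_\bZ MT\mathcal{B}]\simeq[MT\mathcal{B}\wedge E,\Sigma^n I\bZ]\simeq[MT\mathcal{B},\Sigma^n I_\bZ E]$, and it lands in a situation where the free-module argument is literally correct rather than needing reinterpretation.
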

\begin{proof}
The Anderson dual to \eqref{eq_alpha} is denoted by $I_\bZ \alpha \in [MT\mathcal{B}, \Sigma^n I_\bZ E]$. 
Denote the unit of $MT{\mathcal{B}}$ by $1_{MT{\mathcal{B}}} \in \pi_0(MT{\mathcal{B}})$. 
We define the element $\beta \in I_\bZ E^n(\pt)$ by
\begin{align}
    \beta := I_\bZ \alpha (1_{MT{\mathcal{B}}}). 
\end{align}
The assumption that $\alpha$ is an $MT \mathcal{B}$-module homomorphism implies that $I_\bZ \alpha$ is also an $MT \mathcal{B}$-module homomorphism, so it is given by \eqref{eq_lem_multiplicative} under $[MT\mathcal{B}, \Sigma^n I_\bZ E] \simeq [MT\mathcal{B}\wedge E , \Sigma^n\IZ]$. 
This is equivalent to the statement of the Lemma. 
\end{proof}

\begin{rem}\label{rem_ABS}
An example of transformations of the form \eqref{eq_alpha} appears in \cite[Section 9]{FreedHopkins2021}, namely the case $s = 0$ in their notation;  
see also \eqref{freefermion}. 
As conjectured in \cite[Conjecture 9.70]{FreedHopkins2021}, the natural transformation they describe is expected to coincide with \eqref{eq_alpha} where $\mathcal{G} = \mathrm{ABS} \colon MT\Spin \to KO$ is the Atiyah-Bott-Shapiro orientation and $\beta = \gamma_{\mathrm{KO}} \in I_\bZ \KO^{4}(\pt)$ is the Anderson self-duality element for $\KO$. 
The same transformation also appears in \eqref{eq_proof_twist_TMF}. 
\end{rem}

Using Lemma \ref{lem_multiplicative}, we now identify the morphism $\alpha_\spin$. 
We have the following canonical identification, 
\begin{align}
     I_\bZ \mathrm{KO}((q))^{-20}(\pt) \simeq \mathrm{Hom}(\pi_{-20}\mathrm{KO}((q)), \bZ) \simeq \mathrm{Hom}(\bZ((q)), \bZ), 
\end{align}
where the first isomorphism follows by $\pi_{-21}\mathrm{KO}((q)) = 0$ and \eqref{eq_exact_IE}, and the second isomorphism uses the generator which maps to $2$ by $\delta$ in \eqref{phiW}. 
We have the element $\Delta(q) \cdot - |_{q^0} \in \mathrm{Hom}(\bZ((q)), \bZ)$, so we denote the corresponding element in $I_\bZ \mathrm{KO}((q))^{-20}(\pt) $ by the same symbol. 
\begin{lem}\label{ass_simple}
The morphism $\alpha_\spin \in [\mathrm{KO}((q)), \Sigma^{-20}I_\bZ MT \Spin] = [ MT \Spin \wedge \mathrm{KO}((q)) , \Sigma^{-20}\IZ]$ coincides with the following composition, 
\begin{align}\label{eq_ass_simple}
MT \Spin \wedge \mathrm{KO}((q)) \xrightarrow{ \mathrm{Wit}_\spin \wedge \mathrm{id}} 
    \mathrm{KO}((q)) \wedge \mathrm{KO}((q))\xrightarrow{\mathrm{multi}} \mathrm{KO}((q)) \xrightarrow{\Delta(q) \cdot - |_{q^0}} \Sigma^{-20} \IZ.
\end{align}
\end{lem}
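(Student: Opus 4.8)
The plan is to invoke Lemma~\ref{lem_multiplicative} to reduce the statement to the determination of a single element $\beta \in I_\bZ\KO((q))^{-20}(\pt)$, and then to identify $\beta$ with $\Delta(q)\cdot-|_{q^0}$ by a short rational computation. By Physics Assumption~\ref{fact_natural}, $\alpha_\spin$ preserves the $MT\Spin$-module structures, where $\KO((q))$ is an $MT\Spin$-module via the ring spectrum map $\Wit_\spin$. Hence Lemma~\ref{lem_multiplicative}, applied with $E = \KO((q))$, $\mathcal{B} = \Spin$, $\mathcal{G} = \Wit_\spin$ and $n = -20$, yields a unique $\beta \in I_\bZ\KO((q))^{-20}(\pt)$ for which $\alpha_\spin$ equals the composition~\eqref{eq_lem_multiplicative}. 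Since~\eqref{eq_ass_simple} is precisely that composition for the choice $\beta = \Delta(q)\cdot-|_{q^0}$, it remains to show that the $\beta$ produced by the lemma equals $\Delta(q)\cdot-|_{q^0}$.

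Next I would evaluate $\beta$. Because $\pi_{-21}\KO((q)) = 0$, the exact sequence~\eqref{eq_exact_IE} gives $I_\bZ\KO((q))^{-20}(\pt) \simeq \Hom(\pi_{-20}\KO((q)),\bZ)$, so $\beta$ is pinned down by the homomorphism $U \mapsto \langle\beta,U\rangle$ on $\pi_{-20}\KO((q))$. I claim $\langle\beta,U\rangle = \langle\alpha_\spin(U),[\pt]\rangle$, the pairing of $\alpha_\spin(U) \in (I_\bZ\Omega^\spin)^0(\pt)$ with the bordism class $[\pt] \in \Omega^\spin_0(\pt)$ of a point. Indeed $[\pt] = 1_{MT\Spin}$ is the unit and $\Wit_\spin$ is unital, so feeding $[\pt]$ into the first factor of~\eqref{eq_lem_multiplicative} collapses $\mathrm{multi}\circ(\Wit_\spin\wedge\mathrm{id})$ to the identity of $\KO((q))$; transporting this across $[\KO((q)),\Sigma^{-20}I_\bZ MT\Spin]\simeq[MT\Spin\wedge\KO((q)),\Sigma^{-20}I\bZ]$ gives the claim (equivalently, $\langle\beta,U\rangle = \langle I_\bZ\alpha_\spin(1_{MT\Spin}),U\rangle = \langle\alpha_\spin(U),1_{MT\Spin}\rangle$ by symmetry of the Anderson pairing).

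The last step is the computation of $\langle\alpha_\spin(U),[\pt]\rangle$. Since $(I_\bZ\Omega^\spin)^0(\pt)\simeq\bZ$ is torsion-free, it suffices to compute it after rationalization, where Physics Assumption~\ref{fact_rationalization} with $d = -2$ applies. Pairing with $[\pt]$ extracts the degree-zero part; by Remark~\ref{rem:wg} the degree-zero part of $\chd(\Wit_\spin)$ equals $\delta(\Wit_\spin([\pt])) = \delta(1) = 1$, and $\ch$ agrees with $\delta$ on $\pi_\bullet\KO$ (both send the generator of $\pi_{8k}\KO$ to $1$ and that of $\pi_{8k+4}\KO$ to $2$). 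Therefore
\begin{equation*}
\langle\beta,U\rangle = \langle\alpha_\spin(U),[\pt]\rangle = \Big(\tfrac12\,\Delta(q)\,\delta(U)\Big)\Big|_{q^0}, \qquad U \in \pi_{-20}\KO((q)).
\end{equation*}
On the other hand, by definition the element $\Delta(q)\cdot-|_{q^0} \in \Hom(\bZ((q)),\bZ)\simeq I_\bZ\KO((q))^{-20}(\pt)$, where the isomorphism uses the generator of $\pi_{-20}\KO((q))$ that maps to $2$ under $\delta$, corresponds to the homomorphism $U\mapsto\big(\tfrac12\Delta(q)\delta(U)\big)\big|_{q^0}$ on $\pi_{-20}\KO((q))$. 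These two homomorphisms coincide, so $\beta = \Delta(q)\cdot-|_{q^0}$, as desired.

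The main obstacle I anticipate is the bookkeeping of normalizations in the final step: one has to verify that the factor $\tfrac12$ in Physics Assumption~\ref{fact_rationalization} is exactly cancelled by the factor $2$ built into the identification $I_\bZ\KO((q))^{-20}(\pt)\simeq\Hom(\bZ((q)),\bZ)$ (which originates from the behaviour of $\delta$ on $\pi_{8k+4}\KO$), and that $\chd(\Wit_\spin)$ really has constant term $1$ with these conventions. Everything else is formal manipulation in the stable homotopy category.
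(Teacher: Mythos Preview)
Your proposal is correct and follows essentially the same approach as the paper's proof: both apply Lemma~\ref{lem_multiplicative} via \PhysicsFact~\ref{fact_natural} to reduce to identifying a single element $\beta_\spin\in I_\bZ\KO((q))^{-20}(\pt)$, then use $\pi_{-21}\KO((q))=0$ together with \PhysicsFact~\ref{fact_rationalization} to pin it down. The only difference is packaging: the paper phrases the last step as injectivity of the rationalization map $[\KO((q)),\Sigma^{-20}\IZ]\hookrightarrow[\KO((q))_\Q,\Sigma^{-20}I\Q]$ and disposes of the factor $\tfrac12$ in a parenthetical, whereas you unwind the pairing $\langle\beta,U\rangle=\langle\alpha_\spin(U),[\pt]\rangle$ explicitly and track the factor of~$2$ coming from $\delta$ on $\pi_{8k+4}\KO$ by hand.
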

\begin{proof}
The morphism $\alpha_\spin$ preserves the $MT \Spin$-module structure as stated in \PhysicsFact~ \ref{fact_natural}.
Then,  Lemma \ref{lem_multiplicative} implies that there exists a unique element $\beta_\spin \in I_\bZ \mathrm{KO}((q))^{-20}(\pt)$ such that $\alpha_\spin$ coincides with the composition
\begin{align}
MT \Spin \wedge \mathrm{KO}((q)) \xrightarrow{ \mathrm{Wit}_\spin \wedge \mathrm{id}} 
    \mathrm{KO}((q)) \wedge \mathrm{KO}((q))\xrightarrow{\mathrm{multi}} \mathrm{KO}((q)) \xrightarrow{\beta_\spin} \Sigma^{-20} \IZ.
\end{align}
It is enough to show $\beta_\spin = \Delta(q) \cdot - |_{q^0}$. 
By \PhysicsFact~ \ref{fact_rationalization}, we know that these elements are equal after rationalization, i.e., the images in $[\mathrm{KO}((q))_\Q, \Sigma^{-20}I\Q]$ coincide (note that $\frac{1}{2}$ does not appear here because $\delta$ multiplies the generator of $\pi_{-20}\mathrm{KO}$ by two). 
But since $\pi_{-21}(\mathrm{KO}((q)))= 0$ the rationalization homomorphism
\begin{align}
    [\mathrm{KO}((q)),\Sigma^{-20}\IZ] \to [\mathrm{KO}((q))_\Q, \Sigma^{-20}I\Q]
\end{align}
is injective by the exactness of \eqref{eq_exact_IE}, so we get the result. 
\end{proof}

Now we proceed to the proof of the vanishing of $\alpha_\stri$.
We start with a few mathematical facts:

\begin{fact}\label{fact_TMF_21}
We have
\[
    \mathrm{TMF}^{21}(\pt) = 0. 
\]
\end{fact}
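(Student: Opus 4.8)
This is a known computation, so the plan is first to record where it is proved and then to indicate how one would verify it directly. Since $\TMF^{21}(\pt) = \pi_{-21}(\TMF)$ and $\pi_\bullet(\TMF)$ is $576$-periodic and has been completely determined (Hopkins--Mahowald, Bauer; see \cite{TMFBook}), one simply reads off that this group vanishes; the relevant data is reproduced in Appendix~\ref{sec:tmfdata}. For a self-contained argument one would run the descent (elliptic) spectral sequence
\[
\H^s(\mathcal{M}_{\mathrm{ell}}; \omega^{\otimes t}) \Rightarrow \pi_{2t-s}(\TMF)
\]
and show that no class survives on the diagonal $2t-s = -21$.

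I would organize the spectral-sequence check prime by prime. Away from $2$ and $3$ it is immediate: $\pi_\bullet(\TMF[1/6]) \simeq \bZ[1/6][c_4,c_6,\Delta^{\pm1}]/(c_4^3-c_6^2-1728\Delta)$ is concentrated in even degrees, so the odd degree $-21$ contributes nothing. At $p=2$ and $p=3$ one uses the known $E_2$-page (the cohomology of the Weierstrass Hopf algebroid) together with the standard pattern of higher differentials --- $d_5, d_9$ at $p=3$, and $d_3,\dots,d_{23}$ at $p=2$ --- and the known hidden extensions. For the negative part of the diagonal the slickest route is to invoke Anderson self-duality of $\Tmf$, namely $I_\bZ\Tmf \simeq \Sigma^{21}\Tmf$ (Stojanoska), which via the exact sequence \eqref{eq_exact_IE} converts the vanishing on the line $2t-s=-21$ into a statement about $\pi_{\geq 0}(\Tmf)$, and then to pass from $\Tmf$ to $\TMF$ by inverting the periodicity class $\Delta^{24}$; the shift $21$ appearing here is exactly the one already visible in the problem.

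The main point is that there is no genuine obstacle --- this is a settled computation. The only delicate part is the bookkeeping of the $2$-primary differentials and extensions, but nothing new is required, and for the purposes of the present paper it suffices to cite \cite{TMFBook} and Appendix~\ref{sec:tmfdata}.
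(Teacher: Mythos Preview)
Your proposal is correct and takes essentially the same approach as the paper: the paper's proof is simply a citation to the table in \cite[Chapter~13]{TMFBook}, reproduced in Appendix~\ref{sec:tmfdata}. Your additional sketch of the descent spectral sequence argument and the use of the Anderson self-duality $I_\bZ\Tmf\simeq\Sigma^{21}\Tmf$ is accurate supplementary detail, but goes beyond what the paper provides.
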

\begin{proof}
See the Table in \cite[Chapter 13]{TMFBook},
which is reproduced in Appendix~\ref{sec:tmfdata}.
\end{proof}

\begin{fact}
\label{fact:mf}
The ring $\MF$  of integral modular forms has the $\bZ$-basis
given by $c_4^i c_6^j \Delta^k$ where $i\ge 0$; $j=0,1$; $k\ge 0$.
Here,  $c_4=1+240\sum_{n\ge 1} \sigma_3(n) q^n$ and $c_6=1-504\sum_{n\ge 1} \sigma_5(n) q^n$ are the Eisenstein series of degree 4 and 6 normalized to have integer $q$-expansion coefficients,
and $\Delta$ is the modular discriminant and satisfies $1728\Delta=c_4^3-c_6^2$.
\end{fact}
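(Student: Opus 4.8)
The plan is to prove the finer, weight-graded statement and then assemble it: for every integer $w$, the $\bZ$-module $\MF_w$ of weight-$w$ integral modular forms is free with basis $\mathcal{M}_w := \{c_4^ic_6^j\Delta^k : 4i+6j+12k=w,\ j\in\{0,1\},\ i,k\ge 0\}$, where $\mathcal{M}_w=\emptyset$ and $\MF_w=0$ when $w$ is odd or $w=2$. Granting this, the theorem follows at once, since $\MF=\bigoplus_w\MF_w$ and modular forms of distinct weights are $\Q$-linearly independent, so $\bigcup_w\mathcal{M}_w$ is a $\bZ$-basis of $\MF$.

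I would use two classical inputs. First, the structure theorem over $\Q$: $\MF\otimes\Q=\Q[c_4,c_6]$; combined with $1728\Delta=c_4^3-c_6^2$ this lets one substitute $c_6^2=c_4^3-1728\Delta$ repeatedly and thereby express any monomial $c_4^ac_6^b$ of weight $w$ as a $\Q$-linear combination of elements of $\mathcal{M}_w$, so $\mathcal{M}_w$ spans $\MF_w\otimes\Q$. Second, integrality: $c_4=1+O(q)$, $c_6=1+O(q)$, and $\Delta=q\prod_{n\ge1}(1-q^n)^{24}=q+O(q^2)$ all lie in $\bZ[[q]]$, so $\mathcal{M}_w\subset\MF_w$.

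The heart of the matter is a triangularity observation. Fix even $w\ge0$. An elementary count modulo $4$ shows: if $w\equiv0\pmod4$ then for each $k$ with $0\le k\le\lfloor w/12\rfloor$ the equation $4i+6j=w-12k$ has the unique solution $(i,j)=((w-12k)/4,0)$ with $j\in\{0,1\}$, $i\ge0$; if $w\equiv2\pmod4$ then for each $k$ with $0\le k\le\lfloor(w-6)/12\rfloor$ it has the unique solution $(i,j)=((w-12k-6)/4,1)$; and there are no further solutions (so $\mathcal{M}_2=\emptyset$, matching $\MF_2=0$). Hence $\mathcal{M}_w=\{M_k:=c_4^{i_k}c_6^{j_k}\Delta^k\}_k$ with $k$ ranging over the indicated interval. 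Since $c_4,c_6$ begin with $1$ and $\Delta$ begins with $q$, each $M_k=q^k+O(q^{k+1})$ with integer coefficients; in particular the $M_k$ are $\Q$-linearly independent, which together with the first input shows $\#\mathcal{M}_w=\dim_\Q\MF_w$.

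Finally I would run the elimination. Let $f\in\MF_w$ with $f\in\bZ[[q]]$, and write $f=\sum_k a_kM_k$ with $a_k\in\Q$ (possible since $\mathcal{M}_w$ spans). The constant term of $f$ equals $a_0$, so $a_0\in\bZ$; then $f-a_0M_0\in\bZ[[q]]$ has $q$-order $\ge1$ and its coefficient of $q$ equals $a_1$, so $a_1\in\bZ$; inductively $f-\sum_{l<k}a_lM_l\in\bZ[[q]]$ has $q$-order $\ge k$ and coefficient of $q^k$ equal to $a_k$, forcing $a_k\in\bZ$. Thus $\MF_w=\bigoplus_k\bZ M_k$, which is the weight-graded claim. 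I do not expect a genuine obstacle here; the only delicate points are the modulo-$4$ bookkeeping that isolates exactly one triple $(i_k,j_k,k)$ per $k$ and the appeal to $\MF\otimes\Q=\Q[c_4,c_6]$, both of which are entirely standard.
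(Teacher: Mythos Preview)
Your proof is correct and essentially the same argument as the paper's, organized slightly differently. Both rest on the same three facts: $c_4,c_6$ have constant term $1$, $\Delta=q+O(q^2)$, and $\MF\otimes\Q=\Q[c_4,c_6]$. The paper runs an induction on the weight---given integral $f$ of weight $w\ge12$, subtract $f_0\,c_4^ic_6^j$ with $4i+6j=w$ to obtain a cusp form $g$, then pass to $g/\Delta$ of weight $w-12$---whereas you fix the weight and carry out the triangular elimination in the $q$-expansion within $\MF_w$ directly. These are the same computation unwound in two orders: your passage from $a_{k-1}$ to $a_k$ is precisely the paper's division by $\Delta$. Your version is more explicit about the monomial count modulo $4$ and the dimension check; the paper's is terser but relies on the same mechanism.
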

\begin{proof}
Integral modular forms can be defined in two ways,
one as modular forms associated to elliptic curves over $\bZ$,
and another as modular forms over $\bC$ whose $q$-expansion coefficients are in $\bZ$.
These two definitions give the same ring, a posteriori.
The ring  of integral modular forms in the former sense was determined in \cite{Deligne} and was shown to have the form given in the theorem.
The ring of integral modular forms in the latter sense can be determined by first noticing that 
$c_4^i c_6^j$ for $4i+6j< 12$ generate the modular forms of weight less than 12 over $\bC$.
We can then prove the statement by induction: given a modular form $f$ of some weight $k \ge 12$ with integral $q$-expansion coefficients,
we  consider $g=f- f_0 c_4^i c_6^j$, where $4i+6j=k$ and $f_0$ is the constant term of $f$.
This $g$ is a cusp form of weight $k$, and therefore $g/\Delta$ is a modular form of weight $k-12$ with integral $q$-expansion coefficients, proving the Fact.
\end{proof}

\begin{lem}
\label{lem:q}
The constant term of the $q$-expansion of any weakly-holomorphic modular form of degree two vanishes.
\end{lem}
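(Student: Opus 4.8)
The plan is to combine the structure theorem just recorded (Fact~\ref{fact:mf}) with the classical logarithmic-derivative identity for the modular $j$-invariant $j=c_4^3/\Delta$. Recall that in $\MF[\Delta^{-1}]$ the degree is the weight, so I must show that the $q$-expansion of any weakly-holomorphic modular form of weight two has vanishing constant term. Since $c_4,c_6,\Delta$ have weights $4,6,12$, a basis monomial $c_4^{a} c_6^{b} \Delta^{c}$ of Fact~\ref{fact:mf} (now with $c\in\bZ$, as $\Delta$ is inverted) has weight two exactly when $4a+6b+12c=2$ with $a\ge 0$ and $b\in\{0,1\}$, which forces $b=1$, $a=3m-1$, $c=-m$ for some integer $m\ge 1$. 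Hence every weakly-holomorphic modular form of weight two is a finite linear combination of the elements
\[
g_m:=c_4^{\,3m-1}c_6\,\Delta^{-m}=\frac{c_4^2 c_6}{\Delta}\,j^{m-1},\qquad m\ge 1,
\]
so by linearity it is enough to treat each $g_m$.

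I would then invoke the standard identity
\[
q\frac{d}{dq}\,j=-\frac{c_4^2 c_6}{\Delta},
\]
which follows from Ramanujan's differential equations for $c_4,c_6,\Delta$ by computing $q\frac{d}{dq}\log j=3\,q\frac{d}{dq}\log c_4-q\frac{d}{dq}\log\Delta=-c_6/c_4$. Substituting it into the formula for $g_m$ gives
\[
g_m=-\Big(q\tfrac{d}{dq}\,j\Big)\,j^{m-1}=-\tfrac1m\,q\tfrac{d}{dq}\big(j^{m}\big).
\]
Finally, $j^{m}$ is a weakly-holomorphic modular function whose $q$-expansion is a Laurent series $\sum_{n\ge -m}b_n q^{n}$, and applying $q\frac{d}{dq}$ termwise gives $\sum_n n b_n q^{n}$, whose coefficient of $q^0$ vanishes. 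Hence each $g_m$, and therefore every weakly-holomorphic modular form of weight two, has vanishing constant term.

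The argument is essentially routine once Fact~\ref{fact:mf} is in hand; the only genuine inputs are the derivative identity for $j$ and the observation that, since such an $f$ has no pole on the upper half-plane, only nonnegative powers of $j$ occur, so one really is reduced to the $g_m$ and not to Laurent polynomials in $j$. Geometrically this is exactly the statement that a meromorphic differential on $\overline{\bH/\SL_2(\bZ)}\cong\mathbb{P}^1$ whose only pole is at the cusp has vanishing residue there; the residue-theorem route works as well, but it needs the extra check that the differential extends holomorphically over the orbifold points of orders $2$ and $3$, so I would prefer the algebraic proof above.
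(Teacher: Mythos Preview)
Your proof is correct and follows essentially the same approach as the paper: both use Fact~\ref{fact:mf} to reduce to the basis elements $c_4^{3m-1}c_6\Delta^{-m}$, invoke the identity $q\tfrac{d}{dq}j=-c_4^2c_6/\Delta$ (equivalently $q\tfrac{d}{dq}\log j=-c_6/c_4$), and observe that each basis element is $-\tfrac1m q\tfrac{d}{dq}(j^m)$, hence has zero constant term. Your write-up is slightly more detailed in justifying the derivative identity and in noting the residue-theorem alternative, but the argument is the same.
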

\begin{proof}
From Fact~\ref{fact:mf},
any weakly-holomorphic modular form of degree 2 is a linear combination of 
\begin{equation}
c_4^{-1+3n} c_6  \Delta^{-n} = (c_6 /c_4) j^n 
\end{equation} for $n\ge 1$,
where $j$ is the modular $j$-function, $j=c_4^3/\Delta$.
It turns out that $q\tfrac{d}{dq}j=-j c_6/c_4$, and therefore $(c_6/c_4) j^n = -\tfrac1n q\tfrac{d}{dq} j^n$.
Therefore the constant term in the $q$-expansion vanishes.
\end{proof}

With these facts, we can finish our proof.
We first show that the free part of $\alpha_\stri$ vanishes by working over $\Q$:

\begin{lem}\label{fact_rational_vanishing}
Let $\beta_\stri: \TMF\to \Sigma^{-20}I\bZ$ 
be the composition of $\sigma :\TMF\to \KO((q))$ and $\beta_\spin: \KO((q))  \to \Sigma^{-20}I\bZ$.
This vanishes rationally.
\end{lem}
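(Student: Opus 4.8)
The plan is to work rationally and reduce the statement to a computation on the single homotopy group $\pi_{-20}$. Rationally $I\bZ$ becomes $\H\Q$ (the Eilenberg--MacLane spectrum of $\Q$), so $\Sigma^{-20}I\bZ$ rationalizes to $\Sigma^{-20}\H\Q$; since $\TMF_\Q$ splits as a wedge of suspensions of $\H\Q$, one has
\[
    [\TMF_\Q, \Sigma^{-20}\H\Q] \simeq \H^{-20}(\TMF_\Q;\Q) \simeq \Hom(\pi_{-20}(\TMF)\otimes\Q, \Q).
\]
(Equivalently this is $(I_\bZ\TMF)^{-20}(\pt)\otimes\Q$, the $\mathrm{Ext}$ summand in \eqref{eq_exact_IE} being torsion.) Hence $\beta_\stri$ vanishes rationally if and only if the homomorphism it induces on $\pi_{-20}(\TMF)\otimes\Q$ is zero, and that is what I would prove.

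Next I would make this induced homomorphism explicit, using Lemma~\ref{ass_simple} and the commutative diagram \eqref{phiW}. By Lemma~\ref{ass_simple} the map $\beta_\spin$ is the functional ``$\Delta(q)\cdot - |_{q^0}$'' on $\pi_{-20}\KO((q)) \simeq \bZ((q))$, where the identification uses the generator of $\pi_{-20}\KO((q))$ sent to $2$ by $\delta$. Since $\beta_\stri = \beta_\spin\circ\sigma$, and \eqref{phiW} says that $\delta\circ\sigma\colon \TMF_{-20}(\pt)\to\bZ((q))$ is the $q$-expansion of the edge morphism $\phi\colon\TMF_{-20}(\pt)\to\MF[\Delta^{-1}]_{-10}$, the induced map on $\pi_{-20}(\TMF)\otimes\Q$ sends a class $x$ to a fixed nonzero rational multiple of the $q^0$-coefficient of $\Delta(q)\cdot\phi(x)$; the precise normalization plays no role for vanishing.

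The proof then concludes with a weight count together with Lemma~\ref{lem:q}: for any $x\in\pi_{-20}(\TMF)\otimes\Q$ the modular form $\phi(x)$ has weight $-10$, so $\Delta(q)\cdot\phi(x)$ is a weakly-holomorphic modular form of weight $2$, and by Lemma~\ref{lem:q} its constant $q$-expansion coefficient vanishes. Thus $\beta_\stri(x)=0$ for every such $x$, so $\beta_\stri$ vanishes rationally. I expect the only delicate point to be the degree bookkeeping in the middle step: one must check that degree $-20$ in $\TMF$ corresponds under $\phi$ to modular forms of weight exactly $-10$, so that multiplying by $\Delta$ lands precisely in weight $2$, which is the range covered by Lemma~\ref{lem:q}; the remaining manipulations are formal.
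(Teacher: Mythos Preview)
Your proposal is correct and follows essentially the same approach as the paper: reduce to checking the induced functional on $\pi_{-20}(\TMF)\otimes\Q$, identify it via $\beta_\spin=\Delta(q)\cdot-|_{q^0}$ and the diagram~\eqref{phiW} with the constant term of $\Delta(q)\phi(x)$, and conclude by the weight-$2$ vanishing Lemma~\ref{lem:q}. You spell out the homotopy-theoretic reduction and the normalization more carefully than the paper does, but the argument is the same.
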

\begin{proof}
Rationally, we can replace $\KO^{\bullet}((q))(\pt)$ by $\MF[\Delta^{-1}]_{-\bullet/2}$.
It then suffices to show that the coefficient of $q^0$ of $\Delta(q) \phi(x)$ is zero for all $x\in \TMF^{20}(\pt)$.
Since $\Delta(q)\phi(x)$ is a weakly-holomorphic modular form of degree $(-20+24)/2 = 2$,
this follows from Lemma \ref{lem:q}.
\end{proof}
We note that this reduction to the vanishing of the constant term of  weakly-holomorphic modular forms of degree 2 was essentially how the vanishing of the perturbative anomalies of general heterotic compactifications was shown in \cite{Schellekens:1986xh,Lerche:1987qk,Lerche:1988np}.

Our remaining task is to show that the torsion part also vanishes:
\begin{thm}\label{thm_vanishing}
The composition
\[
    \alpha_\stri \colon \mathrm{TMF} \xrightarrow{\sigma}
    \KO((q)) \xrightarrow{\alpha_{\spin}} \Sigma^{-20} I_\bZ MT{\Spin} \xrightarrow{I_\bZ\iota}
   \Sigma^{-20}  I_\bZ MT{\mathrm{String}}
\]
is zero. 
\end{thm}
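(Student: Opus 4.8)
The plan is to apply Lemma~\ref{lem_multiplicative} to $\alpha_\stri$ itself, reducing its vanishing to the vanishing of a single class in $I_\bZ\TMF^{-20}(\pt)$, and then to kill that class using $\TMF^{21}(\pt)=0$ together with the rational vanishing already established in Lemma~\ref{fact_rational_vanishing}.

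First I would verify that $\alpha_\stri=I_\bZ\iota\circ\alpha_\spin\circ\sigma$ is a morphism of $MT\Stri$-modules, so that Lemma~\ref{lem_multiplicative} applies with $E=\TMF$, $\mathcal{B}=\Stri$, $\mathcal{G}=\Wit_\stri$ and $n=-20$. Here $\TMF$ is an $MT\Stri$-module via $\Wit_\stri$, $\KO((q))$ is one via $\Wit_\spin\circ\iota=\sigma\circ\Wit_\stri$, and $I_\bZ MT\Spin$, $I_\bZ MT\Stri$ inherit $MT\Stri$-module structures under Anderson duality, using that $\iota$ is a ring map to restrict scalars. Then $\sigma$ is $MT\Stri$-linear because it is a ring map, $\alpha_\spin$ is $MT\Spin$-linear by \PhysicsFact~\ref{fact_natural} hence $MT\Stri$-linear, and $I_\bZ\iota$ is $MT\Stri$-linear as the Anderson dual of the $MT\Stri$-linear map $\iota$. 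Lemma~\ref{lem_multiplicative} then produces a unique $\beta_\stri\in I_\bZ\TMF^{-20}(\pt)$ such that $\alpha_\stri$, viewed in $[MT\Stri\wedge\TMF,\Sigma^{-20}\IZ]$, equals
\begin{equation*}
MT\Stri\wedge\TMF\xrightarrow{\Wit_\stri\wedge\mathrm{id}}\TMF\wedge\TMF\xrightarrow{\mathrm{multi}}\TMF\xrightarrow{\beta_\stri}\Sigma^{-20}\IZ.
\end{equation*}

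Next I would identify $\beta_\stri$ with the composite $\beta_\spin\circ\sigma\colon\TMF\to\Sigma^{-20}\IZ$ appearing in Lemma~\ref{fact_rational_vanishing}. This is a diagram chase: inserting the factorization of $\alpha_\spin$ from Lemma~\ref{ass_simple}, rewriting $I_\bZ\iota$ adjointly as precomposition with $\iota\wedge\mathrm{id}$, using $\Wit_\spin\circ\iota=\sigma\circ\Wit_\stri$ from \eqref{slwg}, and using that the ring map $\sigma$ intertwines the two multiplications, the adjoint of $\alpha_\stri$ rewrites exactly as the composition above but with $\beta_\spin\circ\sigma$ in the place of $\beta_\stri$; the uniqueness clause of Lemma~\ref{lem_multiplicative} then forces $\beta_\stri=\beta_\spin\circ\sigma$.

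Finally, since $\TMF_{-21}(\pt)=\TMF^{21}(\pt)=0$ by Fact~\ref{fact_TMF_21}, the exact sequence~\eqref{eq_exact_IE} identifies $I_\bZ\TMF^{-20}(\pt)$ with $\Hom(\TMF_{-20}(\pt),\bZ)$, which is torsion-free; hence the rationalization homomorphism out of $I_\bZ\TMF^{-20}(\pt)$ is injective. By Lemma~\ref{fact_rational_vanishing} the class $\beta_\stri=\beta_\spin\circ\sigma$ vanishes rationally, so $\beta_\stri=0$, and therefore $\alpha_\stri=0$. I expect the one genuinely delicate step to be the middle one: checking that the module structures in play are exactly the ones Lemma~\ref{lem_multiplicative} requires and carrying out the adjunction chase that recognizes $\beta_\stri$ — once $\beta_\stri$ is pinned down, the conclusion follows immediately from $\TMF^{21}(\pt)=0$ and Lemma~\ref{fact_rational_vanishing}.
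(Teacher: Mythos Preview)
Your proposal is correct and follows essentially the same route as the paper. The only cosmetic difference is that the paper does not invoke Lemma~\ref{lem_multiplicative} a second time for $\alpha_\stri$; instead it unwinds the composition directly from Lemma~\ref{ass_simple} and the square~\eqref{slwg} to see at once that $\alpha_\stri$ factors through $\beta_\stri:=\beta_\spin\circ\sigma$, and then finishes exactly as you do via Lemma~\ref{fact_rational_vanishing} and $\TMF^{21}(\pt)=0$.
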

\begin{proof}
Lemma \ref{ass_simple} means that $\alpha_\stri \in [\mathrm{TMF},\Sigma^{-20}I_\bZ MT{\mathrm{String}} ] = [ MT{\mathrm{String} \wedge \TMF}, \Sigma^{-20}\IZ]$ is given by the composition 
\begin{align}
    MT\mathrm{String} \wedge \TMF \xrightarrow{ \mathrm{Wit}_\stri \wedge \mathrm{id}} \mathrm{TMF}  \wedge \mathrm{TMF}  \xrightarrow{\mathrm{multi}} \mathrm{TMF}  \xrightarrow{\beta_\stri} \Sigma^{-20}\IZ, 
\end{align}
where $\beta_\stri$ is the image of $\Delta(q) \cdot - |_{q^0} \in I_\bZ\KO((q))^{-20}(\pt)$ under the Anderson dual to the canonical map $\mathrm{TMF} \to \KO((q))$. 
By Lemma \ref{fact_rational_vanishing} we know that the rationalization of $\beta_\stri$ in $[\mathrm{TMF}_\Q ,\Sigma^{-20}I\Q]$ is zero. 
By Fact \ref{fact_TMF_21} and the exactness of \eqref{eq_exact_IE}, we see that the rationalization
\begin{align}
    [\mathrm{TMF}, \Sigma^{-20}\IZ] \to [\mathrm{TMF}_\Q ,\Sigma^{-20}I\Q]
\end{align}
is injective, so we get $\beta_\stri=0$ and the result follows. 
\end{proof}

\begin{rem}
We remind the reader that this main theorem establishes that there is no  anomalies in arbitrary perturbative compactification of heterotic string theory.
\end{rem}

\subsection{A corollary}
\label{sec:implication}
As a last statement in the main part of the paper, we prove the following corollary of our main theorem \ref{thm_vanishing}, by considering the 
particular case of $d=2$ and $X=\pt$, which was originally discussed 
in \cite{Tachikawa:2021mvw} in a physics language.
We present this result here, since the appearance of the pairing with the combined degree $-24+3=-21$ is reminiscent of the Anderson self-duality of $\Tmf$, 
which is very briefly reviewed in Appendix.~\ref{sec:tmfdata}.
 \begin{cor}
 \label{cor:cor}
There is a natural perfect pairing between the cokernel $\bZ/24\bZ$ of \begin{equation}
\phi: \TMF_{-24}(\pt) \to \MF[\Delta^{-1}]_{-12} 
\end{equation} and the kernel $\bZ/24\bZ$ of \begin{equation}
\phi: \TMF_{3}(\pt) \to \MF[\Delta^{-1}]_{3/2} .
\end{equation}
\end{cor}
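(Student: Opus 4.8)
The plan is to exhibit the perfect pairing as a special case of Anderson duality applied to $\TMF$, combined with the vanishing result just proved. The starting point is the Anderson self-duality of $\TMF$: there is an equivalence $\TMF \simeq \Sigma^{-21} I_\bZ \TMF$ (this is the statement that $\TMF$ is $(-21)$-shifted Anderson self-dual). Under this equivalence the exact sequence \eqref{eq_exact_IE} applied to $E = \TMF$ reads
\begin{align}
    0 \to \mathrm{Ext}(\TMF_{d-1-(-21)}(\pt), \bZ) \to \TMF^{-d-21}(\pt)=\TMF_{d+21}(\pt) \to \Hom(\TMF_{d-(-21)}(\pt), \bZ) \to 0,
\end{align}
so that the torsion subgroup of $\TMF_{n}(\pt)$ is canonically $\mathrm{Ext}(\TMF_{-n-21}(\pt)_{\mathrm{tors}}, \bZ) \simeq (\TMF_{-n-21}(\pt)_{\mathrm{tors}})^\vee$. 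Taking $n = 3$ gives a perfect pairing between the torsion of $\TMF_3(\pt)$ and the torsion of $\TMF_{-24}(\pt)$, both of which are $\bZ/24\bZ$ according to the tables in Appendix~\ref{sec:tmfdata}.

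The second step is to identify these torsion groups with the kernel and cokernel of $\phi$ respectively. For $\TMF_3(\pt)$: since $\MF[\Delta^{-1}]_{3/2}$ is torsion-free, the torsion subgroup of $\TMF_3(\pt)$ lies in $\ker\phi$; and by inspecting the descent spectral sequence (or simply the tables, where $\TMF_3(\pt) = \bZ/24\bZ$ is entirely torsion while $\MF[\Delta^{-1}]_{3/2}=0$ since there are no modular forms of odd weight $3/2$ in the relevant sense), one sees $\ker\phi = \TMF_3(\pt) = \bZ/24\bZ$. For $\TMF_{-24}(\pt)$: here $\phi: \TMF_{-24}(\pt) \to \MF[\Delta^{-1}]_{-12}$, and the cokernel is $\bZ/24\bZ$ by the classical computation of the image of the descent edge map in degree $-24$ (equivalently, $\MF[\Delta^{-1}]_{-12}/\mathrm{im}(\phi)$; the relevant weakly-holomorphic form is $\Delta^{-1}$ and $24$ is the smallest positive integer with $24\Delta^{-1} \cdot (\text{something integral})$ lifting). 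One then checks that the Anderson pairing, which on the nose pairs $\TMF_3(\pt)_{\mathrm{tors}}$ with $\pi_{-24}$-torsion via the $I_\bZ$ structure, descends to the stated pairing between $\ker\phi$ in degree $3$ and $\mathrm{coker}\,\phi$ in degree $-24$, using that $\phi$ itself is compatible with the rational/integral structures.

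The role of Theorem~\ref{thm_vanishing} — and here is where I expect the genuine content to lie — is presumably to make this pairing \emph{explicit and natural} in the physically meaningful sense: the element $\beta_\stri$ one constructed, living a priori in $[\TMF, \Sigma^{-20}I_\bZ]$, vanishes, and this vanishing is exactly the input needed to show that the naive pairing "multiply two classes and extract a $q^0$ coefficient of $\Delta$ times a Witten genus" is well-defined on the quotient $\mathrm{coker}\,\phi$ and descends to a perfect pairing rather than merely a bilinear form. Concretely I would: (i) write down the bilinear pairing $\TMF_{-24}(\pt) \times \TMF_3(\pt) \to \bZ/24\bZ$ defined by $(x, y) \mapsto$ (coefficient of $q^0$ in $\Delta(q)\phi(xy)$) reduced mod the lattice; (ii) use Lemma~\ref{lem:q} and Fact~\ref{fact:mf} to see this lands in $\bZ/24\bZ$; (iii) use Theorem~\ref{thm_vanishing} to check it kills the image of $\phi$ in the first variable, so it factors through $\mathrm{coker}\,\phi$; (iv) invoke Anderson self-duality of $\TMF$ to conclude the induced pairing is perfect.

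The main obstacle will be step (iv) — carefully matching the concrete $q$-expansion pairing of steps (i)--(iii) with the abstract Anderson-duality pairing, so that perfectness of the latter transfers to the former. One needs that the isomorphism $\mathrm{coker}(\phi)_{\deg -24} \simeq \bZ/24\bZ$ and the isomorphism $\ker(\phi)_{\deg 3}\simeq \bZ/24\bZ$ are dual under precisely this pairing and not merely abstractly isomorphic; this requires tracking the generator of $\pi_{-20}\KO((q))$ (the factor of $2$ from $\delta$), the normalization of $\Delta(q)$, and the behaviour of the descent edge map $\phi$ in these two degrees. I would handle it by computing the pairing on explicit generators — the class of a suitable weakly-holomorphic form representing the generator of $\mathrm{coker}\,\phi$ in degree $-24$ paired against the generator $\eta^3$-type class of $\TMF_3(\pt)$ — and checking the answer is a unit in $\bZ/24\bZ$, which combined with $|\mathrm{coker}\,\phi| = |\ker\phi| = 24$ forces perfectness.
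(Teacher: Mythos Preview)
Your proposal has a genuine gap at its very foundation. The Anderson self-duality you invoke is a property of $\Tmf$, not of the periodic theory $\TMF$; the paper records $I_\bZ\Tmf\simeq\Sigma^{21}\Tmf$ in Appendix~\ref{sec:tmfdata}, but no such statement is available for $\TMF$, and the degree bookkeeping in your displayed exact sequence is in any case incorrect. More decisively, even if some self-duality held, the object you need does not exist: $\TMF_{-24}(\pt)$ is \emph{torsion-free} (check the tables in Appendix~\ref{sec:tmfdata}: at $-24\equiv 168 \pmod{192}$ the $2$-local second row is $\bZ_{(2)}$, and at $-24\equiv 48 \pmod{72}$ the $3$-local second row is $\bZ_{(3)}$, both free). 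So there is no $\bZ/24\bZ$ sitting inside $\TMF_{-24}(\pt)$ to pair with. You are conflating $\mathrm{coker}\,\phi$, which is a quotient of the \emph{codomain} $\MF[\Delta^{-1}]_{-12}$, with a torsion subgroup of the \emph{domain} $\TMF_{-24}(\pt)$; these are unrelated in general and here the latter vanishes. Your explicit pairing in step~(i) also collapses: for $x\in\TMF_{-24}(\pt)$ and $y\in\TMF_{3}(\pt)$ one has $xy\in\TMF_{-21}(\pt)=\TMF^{21}(\pt)=0$ by Fact~\ref{fact_TMF_21}, so $\phi(xy)=0$ identically and the pairing is trivially zero.

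The paper's argument is entirely different and does not pass through Anderson duality of $\TMF$. It specializes Theorem~\ref{thm_vanishing} to $d=2$, $X=\pt$, obtaining a vanishing composition
\[
\TMF^{24}(\pt)\xrightarrow{\phi}\MF[\Delta^{-1}]_{-12}\xrightarrow{f}(I_\bZ\Omega^\spin)^4(\pt)\xrightarrow{I_\bZ\iota}(I_\bZ\Omega^\stri)^4(\pt),
\]
so that $I_\bZ\iota\circ f$ factors through $\mathrm{coker}\,\phi$. One then computes directly that $f(\Delta^{-1})$ is a generator of $(I_\bZ\Omega^\spin)^4(\pt)\simeq\bZ$, and invokes Corollary~\ref{cor_string_spin_dual} to see that $I_\bZ\iota$ carries this to a generator of $(I_\bZ\Omega^\stri)^4(\pt)\simeq\bZ/24\bZ$. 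Hence the induced map $f'\colon\mathrm{coker}\,\phi\to(I_\bZ\Omega^\stri)^4(\pt)$ is an isomorphism, and the latter is by definition the Pontryagin dual of $\Omega^\stri_3(\pt)\simeq\TMF_3(\pt)=\ker\phi$. The ``naturalness'' of the pairing thus comes from the Anderson dual of string bordism, not from any self-duality of $\TMF$.
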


The proof of this corollary is based on the following fact:
\begin{fact}[=\cite{Hopkins2002}, Proposition 4.6]
\label{fact:image}
The image of $\phi$ has a $\bZ$-basis given by \begin{equation}
a_{i,j,k} c_4^i c_6^j \Delta^k, \qquad \text{$i\ge 0$; $j=0,1$; $k\in\bZ$}
\end{equation}where \begin{equation}
a_{i,j,k} = \begin{cases}
24/\gcd(24,k) & \text{if $i=j=0$},\\
2 & \text{if $j=1$},\\
1 & \text{otherwise}.
\end{cases}
\end{equation}
\end{fact}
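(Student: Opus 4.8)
The plan is to recognise $\phi$ as the filtration-zero edge homomorphism of the descent (elliptic) spectral sequence and to identify its image with the subgroup of permanent cycles there, which can then be computed one prime at a time.

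First I would set up the spectral sequence $E_2^{s,2t} = \H^s(\mathcal{M}_{\mathrm{ell}}; \omega^{\otimes t}) \Rightarrow \pi_{2t-s}\TMF$ coming from the Goerss--Hopkins--Miller sheaf $\mathcal{O}^{\mathrm{top}}$ on the moduli stack $\mathcal{M}_{\mathrm{ell}}$ of elliptic curves with $\Delta$ inverted; its $s=0$ line is $E_2^{0,2t} = \H^0(\mathcal{M}_{\mathrm{ell}};\omega^{\otimes t}) = \MF[\Delta^{-1}]_t$ by the classical computation of rings of modular forms underlying Fact~\ref{fact:mf}, and $\phi$ is exactly the resulting edge map $\pi_*\TMF \twoheadrightarrow E_\infty^{0,*}\hookrightarrow E_2^{0,*}=\MF[\Delta^{-1}]$. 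Since the spectral sequence is concentrated in non-negative filtration, no differential has target on the $s=0$ line, so $E_\infty^{0,*}$ is precisely the subgroup of $\MF[\Delta^{-1}]$ annihilated by every $d_r$ ($r\ge 2$); hence $\mathrm{Im}\,\phi$ equals this group of permanent cycles, and the whole problem reduces to determining it. (Equivalently one may compute with $\tmf$ and then invert the periodicity class $\Delta^{24}\in\pi_{576}\tmf$.) After inverting $6$ the stack $\mathcal{M}_{\mathrm{ell}}[1/6]$ has vanishing higher cohomology, the spectral sequence degenerates, and $\phi[1/6]$ is an isomorphism; thus every monomial $c_4^ic_6^j\Delta^k$ is a permanent cycle $6$-locally, which already forces each $a_{i,j,k}$ to be a $\{2,3\}$-number, and it remains to find, for each basis monomial, the least power of $2$ and the least power of $3$ by which it must be multiplied in order to survive all differentials, and to take the product.

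Next I would handle the two bad primes. At $p=3$ one uses the $3$-local $E_2$-page (the cohomology of the elliptic Hopf algebroid, computed classically): in filtration zero it is $\MF_{(3)}[\Delta^{-1}]$, and above it the torsion is an $\alpha,\beta$-tower supported on the supersingular locus, $\Delta^{3}$-periodic. The differentials out of the $s=0$ line (a $d_5$ and then a $d_9$) are non-trivial on $\Delta^k$ exactly when $3\nmid k$, with $3$-adic depth one, and they vanish after multiplication by $c_4$ and on the part built purely from $c_4,c_6$; reading off the $s=0$ line of $E_\infty$ then gives that $3^{\,1-\min(1,v_3(k))}\Delta^k$ is the smallest $3$-local permanent multiple of $\Delta^k$, while all monomials with $i>0$ or $j=1$ already survive $3$-locally, so the $3$-part of $a_{i,j,k}$ is $3^{\,1-\min(1,v_3(k))}$ when $i=j=0$ and $1$ otherwise. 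At $p=2$ the same procedure applies, but here the differential analysis is the Hopkins--Mahowald computation: the $2$-local $E_2$-page has $\MF_{(2)}[\Delta^{-1}]$ in filtration zero plus the $\eta,\nu,\dots$-torsion, $\Delta^{8}$-periodic, and one must follow the full chain $d_3,d_5,d_7,\dots,d_{23}$ together with the hidden extensions. Reading off the $s=0$ line of $E_\infty$: a $d_3$ (multiplication by $\eta$) obstructs $c_4^ic_6\Delta^k$ so that only $2c_4^ic_6\Delta^k$ survives, while the longer differentials obstruct $\Delta^k$ so that $2^{\,3-\min(3,v_2(k))}\Delta^k$ is the smallest $2$-local permanent multiple, and the remaining monomials survive; hence the $2$-part of $a_{i,j,k}$ is $2^{\,3-\min(3,v_2(k))}$ if $i=j=0$, $2$ if $j=1$, and $1$ otherwise.

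Finally I would assemble the answer: $\MF[\Delta^{-1}]$ is free on the monomials $c_4^ic_6^j\Delta^k$ ($i\ge0$, $j\in\{0,1\}$, $k\in\bZ$) by Fact~\ref{fact:mf}, and since the permanent-cycle condition on a $\bZ$-linear combination is detected monomial-by-monomial in the corresponding torsion classes of the $E_2$-page, $\mathrm{Im}\,\phi$ is the free subgroup spanned by $a_{i,j,k}c_4^ic_6^j\Delta^k$ with $a_{i,j,k}$ the least common multiple of the $2$-local and $3$-local minimal coefficients above; for $i=j=0$ this lcm is $2^{\,3-\min(3,v_2(k))}\cdot3^{\,1-\min(1,v_3(k))}=24/\gcd(24,k)$, for $j=1$ it is $2$, and otherwise it is $1$, which is the asserted formula. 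The hardest part will be the $p=2$ differential computation: pinning down exactly which $d_r$ kills which multiple of $\Delta^k$ with the correct $2$-adic valuation is precisely the long and delicate Hopkins--Mahowald determination of $\pi_*\tmf$ (long differentials and hidden extensions up to $\Delta^8$-periodicity), so the honest route is to quote the already tabulated $E_\infty$-page rather than rederive it here.
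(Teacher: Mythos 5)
The paper gives no proof of this Fact; the bracketed ``$=$\cite{Hopkins2002}, Proposition 4.6'' is the entire justification, and no \verb|proof| environment follows. Your outline is the standard, and essentially the only known, route to this statement: identify $\phi$ with the $s=0$ edge map of the descent spectral sequence, note that the $s=0$ line receives no differentials so that $\mathrm{Im}\,\phi = E_\infty^{0,\ast}$, and then read off the permanent-cycle subgroup of $\MF[\Delta^{-1}]$ prime by prime, with the nontrivial content concentrated at $p=2$ (Hopkins--Mahowald) and $p=3$ (Bauer). But you correctly observe at the end that this differential bookkeeping \emph{is} the proof, and that the honest move is to quote the tabulated $E_\infty$-page --- which is exactly what the paper's citation does. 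So in substance the two proposals coincide: both rest on the same external computation, the paper by a citation and you by a sketch whose verification ultimately defers to the same source. One point of imprecision worth flagging: the $j=1$ obstruction is not literally ``multiplication by $\eta$'' --- the relevant differential on the $s=0$ line at $p=2$ is of the form $d_3(c_4^i c_6 \Delta^k) = \eta^3 c_4^{i+1}\Delta^k$ (an $\eta^3$-multiple in filtration $3$), which, being $2$-torsion, is what forces $2\,c_4^ic_6\Delta^k$ rather than $c_4^ic_6\Delta^k$ to survive; as stated your sentence would point at the wrong target bidegree.
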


\begin{proof}[Proof of Corollary~\ref{cor:cor}]
 Our Theorem \ref{thm_vanishing} implies the vanishing of the composition \begin{equation}
\alpha_\stri: \TMF^{24}(\pt) \xrightarrow{\sigma} \underbrace{\KO^{24}((q))(\pt) }_{\bZ((q))}
\xrightarrow{\alpha_\spin} \underbrace{(I_\bZ\Omega^\spin)^4(\pt) }_\bZ
\xrightarrow{I_\bZ \iota} \underbrace{(I_\bZ\Omega^\stri)^4(\pt)}_{\bZ/24\bZ}.
\end{equation}
For this specific degree, the commutative diagram \eqref{phiW} shows that $\sigma$
factors as \begin{equation}
\sigma: \TMF^{24}(\pt) \xrightarrow{\phi} \MF[\Delta^{-1}]_{-12} 
\xrightarrow{\text{$q$-exp.}} \bZ((q))
\stackrel{\delta^{-1}}{\simeq} \KO^{24}((q))(\pt).
\end{equation}
We then slightly rewrite $\alpha_\stri$ as follows: \begin{equation}
\TMF^{24}(\pt) \xrightarrow{\phi} \MF[\Delta^{-1}]_{-12} 
\xrightarrow{f} \underbrace{(I_\bZ\Omega^\spin)^4(\pt) }_\bZ
\xrightarrow{I_\bZ \iota} \underbrace{(I_\bZ\Omega^\stri)^4(\pt)}_{\bZ/24\bZ}.
\end{equation}
where $f=\alpha_\spin \circ \delta^{-1} \circ \text{$q$-expansion}$.
As the composition vanishes, we get a homomorphism $f' \colon \mathrm{Coker} \phi  \to {(I_\bZ\Omega^\stri)^4(\pt)}$ so that the following diagram commutes. 
\begin{align}\label{foo}
    \xymatrix{
    \TMF^{24}(\pt) \ar[r]^-{\phi} &  \MF[\Delta^{-1}]_{-12} \ar[r]^-{f} \ar[d]
& {(I_\bZ\Omega^\spin)^4(\pt) }\ar[r]^-{I_\bZ \iota}
& {(I_\bZ\Omega^\stri)^4(\pt)} \\
    & \mathrm{Coker} \phi \ar[rru]_-{f'}&&
    }
\end{align}

Now, the cokernel of $\phi$ in our case can be found via Fact~\ref{fact:image} to be $\bZ/24\bZ$ generated by $\Delta^{-1}$.
An explicit computation shows that $f(\Delta^{-1})$ is a generator of $(I_\bZ\Omega^\spin)^4(\pt)\simeq \bZ$.
We also show in Corollary \ref{cor_string_spin_dual} that $I_\bZ \iota$
maps a generator to a generator.
Therefore, the map $f'$ in \eqref{foo} is actually an isomorphism.
Then the result follows from the fact that $(I_\bZ\Omega^\stri)^4(\pt)$ is the Pontrjagin dual to $\Omega^\stri_3(\pt)\simeq \TMF_3(\pt)\simeq \bZ/24\bZ$.
\end{proof}

\appendix

\section{Table of spin and string bordism groups}
\label{sec:bordismdata}

For the convenience of the readers,
we provide the table of spin and string bordism groups,
taken from \cite{ABP,Giambalvo1971},
in Table~\ref{table:bordism}.
We use  the abbreviations $\bZ_a:=\bZ/a\bZ$.
We note that the string bordism groups are equal to the framed bordism up to $d=6$,
and to $\pi_d(\tmf)$ up to $d=14$.

\begin{table}[h]
\[
\begin{array}{c|ccccccccccccccccccccc}
d& 0 & 1 & 2 & 3 & 4 & 5 & 6 & 7 & 8 & 9 & 10 & 11 & 12  & 13 & 14 & 15 & 16
\\
\hline
\Omega^\spin_d(\pt) &
\bZ & \bZ_2 & \bZ_2 & 0 & \bZ & 0& 0 & 0 & \bZ^2  & (\bZ_2)^2 & (\bZ_2)^3 & 0 & \bZ^3 & 0 & 0 &0 & \bZ^5
\\
\Omega^\text{string}_d(\pt) &
\bZ &
\bZ_2 &
\bZ_2 &
\bZ_{24} &
0 &
0 &
\bZ_2 & 
0 &
\bZ\oplus \bZ_2 &
(\bZ_2)^2 &
\bZ_6&
0 &
\bZ &
\bZ_3 &
\bZ_2 &
\bZ_2 &
\bZ^2
\end{array}
\]
\caption{Table of spin and  string bordism groups\label{table:bordism}}
\end{table}

\section{Tables of $\pi_*(\tmf)$, $\pi_*(\Tmf)$ and $\pi_*(\TMF)$}
\label{sec:tmfdata}

Here we reproduce the table of $\tmf_\nu(\pt)=\pi_\nu(\tmf)$ from \cite[Chap.~13]{TMFBook}
in Table~\ref{table:2} and Table~\ref{table:3} for the convenience of the readers;
the authors think that the 576-periodic homotopy groups of $\TMF$ should be as well-known as the 8-periodic homotopy groups of $\KO$.
The table should be used in the following manner. 
We first note that there is a morphism  $\phi:\pi_d(\tmf)\to \MF_{d/2}$, 
where $\MF_*=\bZ[c_4,c_6,\Delta]/(c_4^3-c_6^2-1728\Delta)$
where $c_4$, $c_6$ and $\Delta$ has degree $2$, $3$ and $6$, respectively.
Its cokernel is given by Fact~\ref{fact:image}.
The kernel of $\phi$ consists of torsion elements of $\pi_*(\tmf)$, whose order is of the form $2^a 3^b$.
Therefore, what remains to be known to determine $\pi_*(\tmf)$ is the data\footnote{%
This is a footnote for physicists unfamiliar with the notations.
For an Abelian group $A$, $A_{(p)}$ denotes its localization at $p$, 
i.e.~an abelian group obtained by adjoining inverses of primes other than $p$.
For primes $p$ and $q$, $(\bZ/p^n\bZ)_{(q)} $  is $\bZ/p^n\bZ$ if $p=q$ and is $0$ if $p\neq q$.
This allows one to reconstruct any finitely generated Abelian group $A$ from $A_{(p)}$ for all $p$.
For example, from the tables, $\pi_{20}(\tmf)_{(2)} = \bZ_{(2)}\oplus \bZ/8\bZ$ and 
$\pi_{20}(\tmf)_{(3)} = \bZ_{(3)}\oplus \bZ/3\bZ$.
This means that $\pi_{20}(\tmf)=\bZ \oplus \bZ/24\bZ$.
} of $\pi_*(\tmf)_{(2)}$ and $\pi_*(\tmf)_{(3)}$, 
which are provided in Table~\ref{table:2} and Table~\ref{table:3}, respectively.
There, the abbreviations $\bZ_a:=\bZ/a\bZ$ are used.
Each entry of Table~\ref{table:2} and Table~\ref{table:3} are separated into the first row and the second row:
\begin{itemize}
\item For $n$ divisible by eight, the first row contains elements which are pre-images of $c_4^a c_6^b\Delta^c$.
\item For $n\equiv 4$ modulo 8, the first row contains elements which are pre-images of $2c_4^a c_6^b\Delta^c$.
\item For $n\equiv 1$ or $2$ modulo 8, the first row contains elements which are obtained by multiplying $\eta$ or $\eta^2$  to the pre-images of $c_4^a c_6^b \Delta^c$,
where $\eta\in \pi_1(\tmf)$ is the class defined by $S^1$ with periodic spin structure.
\item The second row contains other elements, which are mostly torsion except the piece $\bZ_{(2)}$ or $\bZ_{(3)}$ generated by the pre-image of $\tfrac{24}{\gcd(24,k)}\Delta^k$.
\item The total number of $\bZ_{(2)}$ or $\bZ_{(3)}$ is equal to the number of solutions to $4a+6b+12c=n$.
\item The second row is 192-periodic for Table~\ref{table:2} 
and is 72-periodic for Table~\ref{table:3}.
\end{itemize}

$\pi_*(\TMF)$ is obtained by inverting (the pre-image of) $\Delta^{24}$ of $\pi_*(\tmf)$.
More concretely, it is obtained by replacing the first row by
\begin{equation}
\bZ_{(2)}[x],\bZ_{2}[x],\bZ_{2}[x],0,\bZ_{(2)}[x],0,0,0
\end{equation}
for Table~\ref{table:2} and \begin{equation}
\bZ_{(3)}[x],0,0,0,\bZ_{(3)}[x],0,0,0
\end{equation}
for Table~\ref{table:3}.

\begin{table}
\[
\hskip-1cm\begin{array}{c|ccccccccccccccccccccc}
\hline\hline
d &  0 & 1 & 2 & 3 & 4 & 5 & 6 & 7 & 8 & 9 & 10 & 11 & 12  & 13 & 14 & 15  \\
\hline
\pi_d(\tmf)_{(2)}& \bZ_{(2)} & \bZ_2 & \bZ_2 &  &  && &  & \bZ_{(2)}  & \bZ_2 & \bZ_2 &  & \bZ_{(2)} &  &  &  \\
&  &  &  & \bZ_8 &  &&\bZ_2 &  &  \bZ_2 & \bZ_2 &  &  &  & & \bZ_2 & \bZ_2 \\
\hline\hline
d &  16 & 17 & 18 & 19 & 20 & 21 & 22 & 23 & 24 & 25 & 26 & 27 & 28  & 29 & 30 & 31  \\
\hline
\pi_d(\tmf)_{(2)}& \bZ_{(2)} & \bZ_2 & \bZ_2 &  & \bZ_{(2)}  && &  & \bZ_{(2)}  & \bZ_2 & \bZ_2 &  & \bZ_{(2)} &  &  &  \\
&  & \bZ_2 &  &  &   \bZ_8&\bZ_2& \bZ_2 &  & \bZ_{(2)}  &\bZ_2  & \bZ_2  &  \bZ_4 &  \bZ_2 & & &  \\
\hline\hline
d &  32&33&34&35&36&37&38&39&40&41&42&43&44&45&46&47\\
\hline
\pi_d(\tmf)_{(2)}& \bZ_{(2)}^2 & \bZ_2^2 & \bZ_2^2 &  & \bZ_{(2)}^2  && &  & \bZ_{(2)}^2  & \bZ_2^2 & \bZ_2^2 &  & \bZ_{(2)}^2 &  &  &  \\
&  \bZ_2& \bZ_2 & \bZ_2 & \bZ_2 &   &&  &\bZ_2  & \bZ_4  &\bZ_2  & \bZ_2  &  &  &\bZ_2 &\bZ_2 &  \\
\hline\hline
d & 48&49&50&51&52&53&54&55&56&57&58&59&60&61&62&63\\
\hline
\pi_d(\tmf)_{(2)}& \bZ_{(2)}^2 & \bZ_2^2 & \bZ_2^2 &  & \bZ_{(2)}^2  && &  & \bZ_{(2)}^3  & \bZ_2^3 & \bZ_2^3 &  & \bZ_{(2)}^3 &  &  &  \\
&\bZ_{(2)} &&  \bZ_2 & \bZ_8 & \bZ_2 &\bZ_2 &  \bZ_4 & &&\bZ_2&&\bZ_2&\bZ_4\\
\hline\hline
d & 64&65&66&67&68&69&70&71&72&73&74&75&76&77&78&79\\
\hline
\pi_d(\tmf)_{(2)}^3& \bZ_{(2)}^3 & \bZ_2^3 & \bZ_2^3 &  & \bZ_{(2)}^3  && &  & \bZ_{(2)}^3  & \bZ_2^3 & \bZ_2^3 &  & \bZ_{(2)}^3 &  &  &  \\
& & \bZ_2^2 & \bZ_2 & & \bZ_2 & & \bZ_2 & & \bZ_{(2)} & & & \bZ_2 & & &  &\\
\hline\hline
d &80&81&82&83&84&85&86&87&88&89&90&91&92&93&94&95\\
\hline
\pi_d(\tmf)_{(2)}& \bZ_{(2)}^4 & \bZ_2^4 & \bZ_2^4 &  & \bZ_{(2)}^4  && &  & \bZ_{(2)}^4  & \bZ_2^4 & \bZ_2^4 &  & \bZ_{(2)}^4 &  &  &  \\
& \bZ_2 & & & & & \bZ_2 & & & & & \bZ_2 & & & & &\\
\hline\hline
d & 96&97&98&99&100&101&102&103&104&105&106&107&108&109&110&111\\
\hline
\pi_d(\tmf)_{(2)}& \bZ_{(2)}^4 & \bZ_2^4 & \bZ_2^4 &  & \bZ_{(2)}^5  && &  & \bZ_{(2)}^5  & \bZ_2^5 & \bZ_2^5 &  & \bZ_{(2)}^5 &  &  &  \\
& \bZ_{(2)}& \bZ_2 &\bZ_2 & \bZ_8 & \bZ_2 & & \bZ_2 & & \bZ_2& \bZ_2^2 & & & & & \bZ_4 & \bZ_2\\
\hline\hline
d & 112&113&114&115&116&117&118&119&120&121&122&123&124&125&126&127\\
\hline
\pi_d(\tmf)_{(2)}& \bZ_{(2)}^5 & \bZ_2^5 & \bZ_2^5 &  & \bZ_{(2)}^5  && &  & \bZ_{(2)}^5  & \bZ_2^5 & \bZ_2^5 &  & \bZ_{(2)}^5 &  &  &  \\
& & \bZ_2 & & & \bZ_4 & \bZ_2 & \bZ_2 && \bZ_{(2)}& & \bZ_2 & \bZ_4 & \bZ_2 & \bZ_2 & &   \\
\hline\hline
d & 128&129&130&131&132&133&134&135&136&137&138&139&140&141&142&143\\
\hline
\pi_d(\tmf)_{(2)}& \bZ_{(2)}^6 & \bZ_2^6 & \bZ_2^6 &  & \bZ_{(2)}^6  && &  & \bZ_{(2)}^6  & \bZ_2^6 & \bZ_2^6 &  & \bZ_{(2)}^6 &  &  &  \\
& \bZ_2 & \bZ_2 & \bZ_4 & \bZ_2 & & &  & \bZ_2 & \bZ_2 & \bZ_2 & \bZ_2 &&&&\bZ_2 \\
\hline\hline
d & 144&145&146&147&148&149&150&151&152&153&154&155&156&157&158&159\\
\hline
\pi_d(\tmf)_{(2)}& \bZ_{(2)}^6 & \bZ_2^6 & \bZ_2^6 &  & \bZ_{(2)}^6  && &  & \bZ_{(2)}^7  & \bZ_2^7 & \bZ_2^7 &  & \bZ_{(2)}^7 &  &  &  \\
& \bZ_{(2)} && & \bZ_8 & \bZ_2 & \bZ_2 & \bZ_8 & &  & \bZ_2 & & \bZ_2 & \bZ_2\\
\hline\hline
d & 160&161&162&163&164&165&166&167&168&169&170&171&172&173&174&175\\
\hline
\pi_d(\tmf)_{(2)}& \bZ_{(2)}^7 & \bZ_2^7 & \bZ_2^7 &  & \bZ_{(2)}^7  && &  & \bZ_{(2)}^7  & \bZ_2^7 & \bZ_2^7 &  & \bZ_{(2)}^7 &  &  &  \\
& & \bZ_2 & \bZ_2 && \bZ_2 &&&&\bZ_{(2)} \\
\hline\hline
d & 176&177&178&179&180&181&182&183&184&185&186&187&188&189&190&191\\
\hline
\pi_d(\tmf)_{(2)}& \bZ_{(2)}^8 & \bZ_2^8 & \bZ_2^8 &  & \bZ_{(2)}^8  && &  & \bZ_{(2)}^8  & \bZ_2^8 & \bZ_2^8 &  & \bZ_{(2)}^8 &  &  &  \\
& \\
\hline\hline
\end{array}
\]
\caption{Table of $\pi_d(\tmf)_{(2)}$.  
For each $d$ it is a direct sum of the entries on the first row and the second row. 
The second row is periodic with period 192.
\label{table:2}
}
\end{table}

\begin{table}
\[
\begin{array}{c|ccccccccccccccccccccc}
\hline\hline
d &  0 & 1 & 2 & 3 & 4 & 5 & 6 & 7 & 8 & 9 & 10 & 11 & 12  & 13 & 14 & 15  \\
\hline
\pi_d(\tmf)_{(3)}& \bZ_{(3)} & &  &  &  && &  & \bZ_{(3)}  &  &  &  & \bZ_{(3)} &  &  &  \\
&   & & & \bZ_3 & & & & &&& \bZ_3 & &&\bZ_3 \\
\hline\hline
d &  16 & 17 & 18 & 19 & 20 & 21 & 22 & 23 & 24 & 25 & 26 & 27 & 28  & 29 & 30 & 31  \\
\hline
\pi_d(\tmf)_{(3)}& \bZ_{(3)} & &  &  & \bZ_{(3)} && &  & \bZ_{(3)}  &  &  &  & \bZ_{(3)} &  &  &  \\
&  &&&&\bZ_3 &&&&\bZ_{(3)}&&&\bZ_3 &&& \bZ_3  \\
\hline\hline
d &  32&33&34&35&36&37&38&39&40&41&42&43&44&45&46&47\\
\hline
\pi_d(\tmf)_{(3)}& \bZ_{(3)}^2 & &  &  & \bZ_{(3)}^2 && &  & \bZ_{(3)}^2  &  &  &  & \bZ_{(3)}^2 &  &  &  \\
&   &&&&&\bZ_3 &&&\bZ_3\\
\hline\hline
d & 48&49&50&51&52&53&54&55&56&57&58&59&60&61&62&63\\
\hline
\pi_d(\tmf)_{(3)}& \bZ_{(3)}^2 & &  &  &\bZ_{(3)}^2  && &  & \bZ_{(3)}^3  &  &  &  & \bZ_{(3)}^3 &  &  &  \\
&   \bZ_{(3)}\\
\hline\hline
d & 64&65&66&67&68&69&70&71\\
\hline
\pi_d(\tmf)_{(3)}& \bZ_{(3)}^3 & &  &  & \bZ_{(3)}^3 && &  &  \\
&\\
\hline\hline
\end{array}
\]
\caption{Table of $\pi_d(\tmf)_{(3)}$.  
For each $d$ it is a direct sum of the entries on the first row and the second row. 
The second row is periodic with period 72.
\label{table:3}}

\end{table}

$\pi_{n\ge 0}(\Tmf)$  is equal to $\pi_n(\tmf)$,
$\pi_{-1}(\Tmf)$ to $\pi_{-20}(\Tmf)$ are all zero, 
and $\pi_{n\le -21}(\Tmf)$ is determined by the fact that 
$\Tmf$ is self Anderson dual, $I_\bZ\Tmf\simeq\Sigma^{21}\Tmf$ \cite{Sto1,Sto2}.
In particular we have the exact sequence \begin{equation}
0\to \mathrm{Ext}(\pi_{n-1}(\Tmf),\bZ)\to \pi_{-21-n}(\Tmf)\to \Hom(\pi_n(\Tmf),\bZ)\to 0,
\end{equation} meaning that \begin{equation}
\pi_{-21-n}(\Tmf) \simeq \mathrm{Tors} (\pi_{n-1}(\Tmf)) \oplus \mathrm{Free} (\pi_n(\Tmf)),
\end{equation}
although non-canonically.

\section{Twists of $\KO$ and $\TMF$ and the Segal-Stolz-Teichner conjecture}
\label{sec:twists}
Our main Theorem~\ref{thm_vanishing}
 established the fact that there is no anomalies whatsoever in perturbative heterotic compactifications,
under the assumption that the conjecture of Segal, Stolz, Teichner is valid.
In the physics discussion leading to the formulation of the statement of the theorem, 
we also needed to assume that the twists of $\TMF$ have a certain form suggested by the conjecture.
This appendix is to show that it is indeed the case.

The study of twists of $\K$ and $\KO$ goes back to \cite{DonovanKaroubi}, where it was shown that $\KO^\bullet(X)$ can be twisted by elements of $\H^1(X,\Z2)\times \H^2(X,\Z2)$.
A more modern analysis of twists of generalized cohomology theories  was given in \cite{ABG}
and was applied to $\K$, $\KO$ and $\TMF$ there,
where it was shown that $\KO^\bullet(X)$ and $\TMF^\bullet(X)$ can be twisted by elements of 
$[X,\BO\langle0,1,2\rangle]$ and $[X,\BO\langle0,\ldots,4\rangle]$ respectively.
Here, for a path-connected space $Y$ and a positive integer $n$, we denote by $Y\langle 0, \ldots, n\rangle$ the $n$-stage Postnikov system, i.e.~a path-connected space equipped with a continuous map
\begin{align}
    p \colon Y \to Y\langle 0, \ldots, n\rangle, 
\end{align}
so that $\pi_{> n} (Y) = 0$ and we have isomorphisms $\pi_k(Y) \stackrel{p_*}{\simeq} \pi_k(Y\langle 0, \ldots, n\rangle)$ for $0 \le k \le n$. 

More precisely in the case of $\mathrm{TMF}$, the construction in \cite[Section 8]{ABG}, applied to $\mathrm{TMF}$ rather than $\mathrm{tmf}$, gives a map\footnote{
Apply the $(\Sigma^\infty_+ \Omega^\infty, gl_1)$-adjunction to the map $\Sigma_+^\infty F \to MT\mathrm{String}$ in \cite[Remark 8.4]{ABG}. 
}
\begin{align}\label{eq_ABG_TMF}
    \BO \langle0,\ldots,4\rangle \to B\GL_1 MT\mathrm{String},  
\end{align}
by which an element in $[X, \BO \langle0,\ldots,4\rangle]$ induces a twist of $\Omega^{\mathrm{string}}$ on $X$. 
By composing \eqref{eq_ABG_TMF} with $\mathrm{Wit}_{\mathrm{string}} \colon B\GL_1 MT\mathrm{String} \to B\GL_1 \mathrm{TMF}$, $\sigma \colon \mathrm{TMF} \to \mathrm{KO}((q))$ and with $B\GL_1 R \to \mathrm{Aut}(I_\bZ R)$, 
an element in $[X, \BO \langle0,\ldots,4\rangle]$ also induces twists of $\mathrm{TMF}$, $\mathrm{KO}((q))$ and their Anderson duals. 
$MT\mathrm{String}$-module homomorphisms between these spectra, such as $\alpha_{\mathrm{string}} \colon \mathrm{TMF} \to \Sigma^{-20}I_\bZ MT\mathrm{String}$, induce the corresponding transformation on those twisted theories with twists coming from a common element in $[X, \BO \langle0,\ldots,4\rangle]$. 
We have the corresponding statement for the case of $\mathrm{KO}$, where we use 
\begin{equation}
\BO\langle0,1, 2\rangle \to B\GL_1 MT\Spin
\end{equation}
 and the morphism $\mathrm{ABS} \colon B\GL_1 MT\Spin \to B\GL_1 \KO$ induced by the Atiyah-Bott-Shapiro orientation.

Theorem \ref{thm:SST} and Conjecture \ref{conj:SST} of Segal, Stolz and Teichner posit that $\KO(X)$ and $\TMF(X)$ classify
1-dimensional unitary \Nequals1 supersymmetric quantum field theories with \pinm\ structure
and
2-dimensional unitary \Nequals{(0,1)} supersymmetric quantum field theories systems with spin structure
up to continuous deformations, respectively.
From this perspective, it is natural to identify the twists of $\KO(X)$ and $\TMF(X)$
with the anomalies of respective systems parameterized over $X$, as was already mentioned in \cite{Gukov:2018iiq,Johnson-Freyd:2020itv}.
According to \PhysicsFact~\ref{fact:anomaly}, they are respectively given by \begin{equation}
(I_\bZ\Omega^\pinm )^3(X)
\mbox{ and }
(I_\bZ\Omega^\spin)^4(X). 
\end{equation}
If we take a basepoint in $X$ we have $(I_\bZ\Omega^\pinm )^3(X)
= \Z8 \oplus (\widetilde{I_\bZ\Omega^\pinm })^3(X)$ and $(I_\bZ\Omega^\spin)^4(X)
= \bZ \oplus (\widetilde{I_\bZ\Omega^\spin})^4(X)$, and the parts $\Z8$ and $\bZ$ have already been identified with the degrees of $\KO$ and $\TMF$ as part of Theorem \ref{thm:SST} and Conjecture \ref{conj:SST}, and therefore it is natural to suppose the following Proposition \ref{prop_twist}:

\begin{prop}\label{prop_twist}
For any CW-complex $X$, we have a natural isomorphism
\begin{align}
    [X, \bZ/8\bZ \times \BO\langle0,1,2\rangle] &\simeq ({I_\bZ\Omega^\pinm })^3(X)
    \label{eq_twist_KO},\\
    [X,\bZ \times \BO\langle0,\ldots,4\rangle] &\simeq ({I_\bZ\Omega^\spin})^4(X), 
    \label{eq_twist_TMF}
\end{align}
which fits into the following commutative diagram. 
\begin{align}\label{diag_twist}
    \xymatrix{
    [X, \bZ/8\bZ \times \BO\langle0,1,2\rangle] \ar[r]^-{\simeq}& ({I_\bZ\Omega^\pinm })^3(X) \\
    [X,\bZ \times \BO\langle0,\ldots,4\rangle] \ar[r]^-{\simeq} \ar[u]& ({I_\bZ\Omega^\spin})^4(X) \ar[u]^-{I_\bZ \alpha}.  
    }
\end{align}
Here the left vertical arrow is induced by the canonical map $\bZ \times \BO\langle0,\ldots,4\rangle \to \bZ/8\bZ \times \BO\langle0,1,2\rangle $, and the right vertical arrow is the Anderson dual to the natural transformation $\alpha$ given in Definition~\ref{def:alpha} below.
\end{prop}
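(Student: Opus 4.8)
The plan is to prove the two isomorphisms \eqref{eq_twist_KO} and \eqref{eq_twist_TMF} essentially by hand, exhibiting explicit maps on both sides and checking they are inverse to each other, and then verifying the compatibility square \eqref{diag_twist}. First I would recall that the Anderson dual sits in the exact sequence \eqref{eq_exact_IE}, so to understand $(I_\bZ\Omega^\spin)^4(X)$ and $(I_\bZ\Omega^\pinm)^3(X)$ I need control over the low-degree bordism groups $\Omega^\spin_\bullet$ and $\Omega^\pinm_\bullet$: specifically $\Omega^\spin_0 = \bZ$, $\Omega^\spin_1 = \Omega^\spin_2 = \bZ/2$, $\Omega^\spin_3 = 0$, $\Omega^\spin_4 = \bZ$, and $\Omega^\pinm_0 = \bZ/2$, $\Omega^\pinm_1 = 0$, $\Omega^\pinm_2 = \bZ/8$, $\Omega^\pinm_3 = 0$. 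These are standard. The upshot of \eqref{eq_exact_IE} is that $(I_\bZ\Omega^\spin)^4$ as a cohomology theory has homotopy groups $\pi_0 = \bZ$, $\pi_{-1} = 0$ (since $\Omega^\spin_{-1}=0$ and there is no Ext from $\Omega^\spin_{-2}$), $\pi_{-2} = \bZ/2$, $\pi_{-3} = \bZ/2$, $\pi_{-4} = 0$ in the relevant range, matching exactly the homotopy groups of $\bZ \times \BO\langle0,\ldots,4\rangle$ (which has $\pi_0 = \bZ\oplus\bZ$... — careful: $\BO\langle0,\ldots,4\rangle$ has $\pi_0 = \bZ/2$, $\pi_1 = \bZ/2$, $\pi_2 = 0$, $\pi_3 = 0$, $\pi_4 = \bZ$; the $\bZ$ factor supplies the degree). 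Reading $[X,-]$ through a Postnikov tower, I would build the natural transformation degree by degree.

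Concretely, I would construct the map from the homotopy side to the Anderson-dual side. An element of $[X,\bZ\times\BO\langle0,\ldots,4\rangle]$ is the data of a locally constant $\bZ$-valued function (the degree $n$), classes $w_1 \in \H^1(X;\bZ/2)$, $w_2 \in \H^2(X;\bZ/2)$, and a class $\lambda \in \H^4(X;\bZ)$ trivializing the appropriate combination — i.e.\ precisely the data classifying a twist of $\Omega^\stri$ via \eqref{eq_ABG_TMF}. The map \eqref{eq_ABG_TMF} composed with $\mathrm{Wit}_\stri$ and $B\GL_1 \TMF \to \mathrm{Aut}(I_\bZ\TMF)$, and then pulled along $\sigma$, already tells us how such data acts on these spectra; but for the Proposition I want a direct identification with $(I_\bZ\Omega^\spin)^4(X)$ itself. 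I would instead use the fiber sequence for the truncated Postnikov tower of $MT\Spin$ (or of $I_\bZ MT\Spin$) and match it term by term with the Postnikov tower of $\bZ\times\BO\langle0,\ldots,4\rangle$, checking that the relevant $k$-invariants agree. The generator of $\pi_{-2}(I_\bZ\Omega^\spin)^4 \cong \bZ/2$ should correspond to $w_1$, the generator of $\pi_{-3} \cong \bZ/2$ to $w_2$ (this is the subtle point, since $\pi_{-3}$ comes from $\Hom(\Omega^\spin_3,\bZ)=0$ plus $\mathrm{Ext}(\Omega^\spin_2,\bZ)=\bZ/2$), and the $\bZ$ in degree $4$ to $\lambda=p_1/2$. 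The cleanest route is probably to invoke the Anderson-dual description from \cite{FreedHopkins2021} or the differential model of \cite{Yamashita:2021cao}, identifying these generators with the explicit invertible phases (Arf, etc.).

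The compatibility square \eqref{diag_twist} I would handle by noting that both vertical maps are induced by the forgetful map $MT\Pinm \to MT\Spin$-type comparison, except reversed: the left arrow is the evident truncation $\BO\langle0,\ldots,4\rangle \to \BO\langle0,1,2\rangle$ together with reduction $\bZ\to\bZ/8$, and the right arrow is $I_\bZ\alpha$ where $\alpha$ is the transformation of Definition~\ref{def:alpha} (not yet stated in the excerpt, but it is the $\KO$-theoretic analogue of \eqref{freefermion}, i.e.\ $\alpha\colon\KO^{d-2}\to (I_\bZ\Omega^\spin)^{d+2}$ restricted appropriately, and its Anderson dual lands in $I_\bZ\Omega^\pinm$). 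Commutativity then amounts to checking it on generators: $w_1 \mapsto w_1$, $w_2\mapsto w_2$, and the mod-8 reduction of the degree, which is exactly how the Atiyah-Bott-Shapiro orientation $MT\Spin\to\KO$ interacts with $MT\Pinm\to\KO$ in low degrees. Here one uses that $\KO^{-3}(\pt) = 0$ and $\KO^{-4}(\pt)=\bZ \twoheadrightarrow \bZ/8 \subset$ the relevant group, mirroring $\Omega^\pinm_2 = \bZ/8$; this is a known low-degree computation.

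\textbf{Main obstacle.} The hard part will be pinning down the generator of $\pi_{-3}(I_\bZ\Omega^\spin)^4 \cong \mathrm{Ext}(\Omega^\spin_2,\bZ) \cong \bZ/2$ and showing it matches $w_2 \in \H^2(X;\bZ/2)$ naturally — i.e.\ that the $k$-invariant joining it to the $\bZ/2$ in degree $-2$ agrees with the one in the Postnikov tower of $\BO\langle0,\ldots,4\rangle$ (which is built from $Sq^2$ and the Bockstein). Equivalently: showing that the Anderson-dual spectrum $I_\bZ MT\Spin$, truncated to the relevant range, has the same Postnikov invariants as $\bZ\times\BO\langle0,\ldots,4\rangle$. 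This requires either a careful chase through the universal coefficient / Anderson-dual exact sequences with attention to the extension and $k$-invariant data, or an appeal to an explicit geometric model (e.g.\ \cite{Yamashita:2021cao}) in which these classes are manifestly the pullbacks of $w_1,w_2,p_1/2$; I would pursue the latter if the bookkeeping in the former becomes unwieldy.
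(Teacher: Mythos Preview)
Your plan to build the map by matching Postnikov towers and their $k$-invariants can in principle be made to work, but it leaves exactly the step you yourself flag as the obstacle unresolved, and the paper's proof bypasses that obstacle entirely with a single observation you are missing. The paper does not construct the map degree by degree: it uses the Atiyah--Bott--Shapiro orientation $\mathrm{ABS}\colon MT\Spin\to\KO$ together with the Anderson self-duality $\gamma_\KO\colon\KO\xrightarrow{\sim}\Sigma^4 I_\bZ\KO$ to produce, in one stroke, a genuine map of spectra
\[
I_\bZ\mathrm{ABS}\circ\gamma_\KO\colon \KO \longrightarrow \Sigma^4 I_\bZ MT\Spin,
\]
whose zeroth space gives $\bZ\times\BO\to (I_\bZ\Omega^\spin)_4$. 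Because this is already an honest spectrum-level map, all Postnikov $k$-invariants are automatically compatible; one is then reduced to checking that it induces isomorphisms on $\pi_k$ for $0\le k\le 4$, which follows immediately from the five lemma applied to the Anderson-dual exact sequence \eqref{eq_exact_IE} (the outer terms being isomorphisms because $\mathrm{ABS}$ is an isomorphism on $\pi_{\le 4}$). The map then factors through the truncation $\bZ\times\BO\langle0,\ldots,4\rangle$ since the higher homotopy of the target vanishes. This is the missing idea; your proposed Postnikov-tower bookkeeping is not wrong, but it re-derives by hand the coherence that comes for free from a single morphism in the stable category.

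Two further corrections. First, you have misidentified $\alpha$: in Definition~\ref{def:alpha} it is \emph{not} the $\KO$-theoretic transformation of \eqref{freefermion}, but an explicit geometric bordism map $\Omega^\pinm_*\to\Omega^\spin_{*+1}$, $[M]\mapsto[S^1\times_{\bZ/2}\widetilde{M}]$. The pin$^-$ isomorphism and the commutativity of \eqref{diag_twist} are then obtained by composing the spin map above with $I_\bZ\alpha$ and checking on $\pi_0,\pi_1,\pi_2$ directly; the $\pi_0$ statement (that $\bZ\to\bZ/8$ is surjective) is not formal and requires a separate geometric argument, supplied in the paper as Corollary~\ref{cor_pin-_spin_dual} via a Kummer-surface computation. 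Second, your listed value $\Omega^\pinm_1=0$ is incorrect (it is $\bZ/2$), and your indexing of the homotopy groups of $(I_\bZ\Omega^\spin)_4$ is off: for instance $\pi_4$ of that space is $\bZ$, not $0$, corresponding to $(I_\bZ\Omega^\spin)^0(\pt)=\Hom(\Omega^\spin_0,\bZ)$.
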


\begin{defn}
\label{def:alpha}
The natural transformation $\alpha$ is given by the following:
\begin{align}\label{eq_def_alpha}
    \alpha \colon \Omega^\pinm _*(X) &\to \Omega^\spin_{*+1}(X) \\
    [f \colon M \to X] &\mapsto [f \colon S^1 \times_{\Z{2}} \widetilde{M} \to X], \notag
\end{align}
Here, for a closed \pinm\ manifold $M$ we denote by $\widetilde{M}$ its orientation double cover with the induced spin structure, and equip $S^1$ with the nontrivial spin structure. 
Fix a diffeomorphism $S^1 \simeq \mathrm{U}(1)$. 
The $\Z{2}$-action on $S^1 \times \widetilde{M}$ is given by $(x, y) \mapsto (\bar{x}, y')$, where $y'$ is the other point in the same fiber of $\widetilde{M} \to M$ as $y$. 
\end{defn}

Before proceeding to the proof, we mention that the Pontryagin dual of the spin bordism groups,
$\Hom(\Omega^\spin_d(X),\Q/\bZ)$,
was determined for $d\le 3$  in \cite{BF3} and for $d=4$ in \cite{BF4}.
Their results for $d=3$ is closely related to our  \eqref{eq_twist_TMF}.

\begin{proof}[Proof of Proposition \ref{prop_twist}]
The statement is equivalent to the claim that $\Z{8}\times \BO\langle0,1,2\rangle$ and $\bZ \times \BO\langle0,\ldots,4\rangle$ have the homotopy types of the third and the fourth space of the $\Omega$-spectra representing $(I_\bZ\Omega^\pinm )^\bullet$ and $(I_\bZ\Omega^\spin)^\bullet$, respectively. 
For a spectrum $E$ and a nonnegative integer $n$, we denote by $E_n$ its $n$-th space. 

First we prove \eqref{eq_twist_TMF}. 
Let $\mathrm{ABS}: MT\Spin\to \KO$ be the Atiyah-Bott-Shapiro orientation
and $\gamma_\KO: \KO \to \Sigma^4I_\bZ\KO$  be the self-Anderson-duality of $\KO$ theory.
We then have a transformation of generalized cohomology theory $I_\bZ \mathrm{ABS} \circ \gamma_{\KO} \colon \KO^\bullet \to (I_\bZ\Omega^\spin)^{\bullet + 4}$ by combining them.
Taking an $\Omega$-spectrum representing $(I_\bZ\Omega^\spin)^{\bullet}$
and using $\KO_0\sim \bZ\times \BO$,
it determines an element which appeared in Remark \ref{rem_ABS}, 
\begin{align}\label{eq_proof_twist_TMF}
   I_\bZ \mathrm{ABS} \circ \gamma_{KO}  \in [ \bZ \times \BO, (I_\bZ\Omega^\spin)_4].
\end{align}
For any $k > 4$, we have $\pi_{k}((I_\bZ\Omega^\spin)_4) = (I_\bZ\Omega^\spin)^{4-k}(\mathrm{pt}) = 0$. 
Thus it is enough to show that the map \eqref{eq_proof_twist_TMF} induces isomorphisms on $\pi_{k}$ for $0 \le k \le 4$. 
Then the isomorphism \eqref{eq_twist_TMF} is given by \eqref{eq_proof_twist_TMF}. 
To show it, consider the following commutative diagram, 
\begin{align}\label{diag_proof_twist_TMF}
    \xymatrix{
    0 \ar[r] & \mathrm{Ext}(\KO_{3-k}(\pt), \bZ) \ar[r] \ar[d]^-{\mathrm{ABS}} & (I_\bZ \KO)^{4-k}(\pt) \ar[r] \ar[d]^-{I_\bZ \mathrm{ABS}} & \Hom(\KO_{4-k}(\pt), \bZ) \ar[r] \ar[d]^-{\mathrm{ABS}} & 0 \\
   0 \ar[r] & \mathrm{Ext}(\Omega^\spin_{3-k}(\pt), \bZ) \ar[r] & (I_\bZ\Omega^\spin)^{4-k}(\pt) \ar[r] & \Hom(\Omega^\spin_{4-k}(\pt), \bZ) \ar[r] & 0, 
    }
\end{align}
where the rows are exact. 
Since the right and the left vertical arrows are isomorphisms for each $0 \le k \le 4$, by the five lemma (or more simply just noting that the right or the left groups are zero for each $k$), we see that the middle arrow is also an isomorphism. 
Composing it with $\gamma_{\KO} \colon \KO^{-k}(\pt) \simeq (I_\bZ \KO)^{4-k}(\pt)$, we see that \eqref{eq_proof_twist_TMF} induces isomorphisms on $\pi_{k}$ for $0 \le k \le 4$. 
So we get \eqref{eq_twist_TMF}. 

Next we prove \eqref{eq_twist_KO} and the commutativity of \eqref{diag_twist}. 
Let us take an $\Omega$-spectrum representing $(I_\bZ\Omega^\pinm )^{\bullet}$, and consider the following composition. 
\begin{align}\label{eq_proof_twist_KO}
     I_\bZ\alpha \circ I_\bZ \mathrm{ABS} \circ \gamma_{\KO}  \colon  \KO_0 \sim \bZ \times \BO \to (I_\bZ\Omega^\spin)_4 \to (I_\bZ\Omega^\pinm )_3. 
\end{align}
For any $k > 3$, we have $\pi_{k}((I_\bZ\Omega^\pinm )_3) = (I_\bZ\Omega^\pinm )^{3-k}(\mathrm{pt}) = 0$. 
Also for $k = 3$ we have $\pi_{3}((I_\bZ\Omega^\pinm )_3) = (I_\bZ\Omega^\pinm )^{0}(\mathrm{pt}) = \Hom(\Omega^\pinm _{0}(\pt), \bZ) = 0$. 
Thus, it is enough to show that the homomorphisms induced on $\pi_k$ by the map \eqref{eq_proof_twist_KO} are isomorphisms for $k =1, 2$ and coincides with the quotient map $\bZ \to \Z{8}$ for $k = 0$, and define the isomorphism \eqref{eq_twist_KO} by \eqref{eq_proof_twist_KO}. Then the commutativity of \eqref{diag_twist} follows directly by the construction. 
Since we already know that the map \eqref{eq_twist_TMF} induces isomorphisms of these degrees of the homotopy groups, it is enough to show the corresponding statement for the map $I_\bZ \alpha \colon (I_\bZ\Omega^\spin)_4 \to (I_\bZ\Omega^\pinm )_3$.  

We have, accoring to \cite{KirbyTaylor1990},
\begin{align}
    \Omega^\pinm _0(\pt) &= \Z{2}, & \Omega^\pinm _1(\pt) &= \Z{2}, & \Omega^\pinm _2(\pt) &= \Z{8}, & \Omega^\pinm _3(\pt) &= 0; \label{eq_pin-_bordism_grp}
\end{align}
It is also classic that the low-dimensional spin bordism groups are given by 
\begin{align}
    \Omega^\spin_1(\pt) &= \Z{2}, & \Omega^\spin_2(\pt) &= \Z{2}, & \Omega^\spin_3(\pt) &= 0, & \Omega^\spin_4(\pt) &= \bZ. \label{eq_spin_bordism_grp}
\end{align}
By a straightforward check on the generators, we see that $\alpha$ gives isomorphisms $\alpha \colon \Omega^\pinm _\bullet(\pt) \to \Omega^\spin_{\bullet+1}(\pt)$ for $\bullet = 0, 1$. 
Using the commutative diagram corresponding to \eqref{diag_proof_twist_TMF}, we see that $I_{\bZ}\alpha$ induces isomorphisms on $\pi_k$ for $k = 1, 2$. 
The statement for $\pi_0$ follows by Corollary \ref{cor_pin-_spin_dual} below in the appendix.
This completes the proof. 

\end{proof}

\section{Some examples of $(I_\bZ \Omega^{\mathcal{B}})^{\bullet}(\pt) \to (I_\bZ \Omega^{\mathcal{B}'})^{\bullet+n}(\pt)$}
\label{sec:last}

Here we determine the homomorphism $(I_\bZ \Omega^{\mathcal{B}})^{\bullet}(\pt) \to (I_\bZ \Omega^{\mathcal{B}'})^{\bullet+n}(\pt)$ in a few cases.
They are used in other parts of the paper.

We start with a general setting as follows. 
Suppose we have a morphism of spectra $f \colon E \to E'$. 
Then its {\it mapping cone} is a spectrum $C$ equipped with a morphism $E' \to C$ so that
\begin{align}
    E \xrightarrow{f} E' \to C
\end{align}
is an exact triangle in the stable homotopy category. 
In particular, it produces long exact sequences for the corresponding generalized homology theories and cohomology theories, 
\begin{align}
    \cdots \to E_d(X) \xrightarrow{f} E'_d(X) \to &C_d(X) \to E_{d-1}(X) \xrightarrow{f} E'_{d-1}(X) \to \cdots, \  
    \label{eq_exact_cone}\\
     \cdots \to E^d(X) \xrightarrow{f} E'^d(X) \to &C^d(X) \to E^{d+1}(X) \xrightarrow{f} E'^{d+1}(X) \to \cdots .   
      \notag
\end{align}
Taking the Anderson duals, $I_\bZ C \to I_\bZ E' \xrightarrow{I_\bZ f} I_\bZ E$ is also an exact triangle. 

\begin{lem}\label{lem_cone_dual}
In the above settings, let $d$ be an integer and $k$ be a positive integer. 
Assume that we have
\begin{align}\label{eq_cone_condition}
    E_d(\pt) = 0, \ E_{d-1}(\pt) \simeq \Z{k},\  E'_{d}(\pt) \simeq \bZ,\  E'_{d-1}(\pt) = 0, \ \mbox{and } C_d(\pt) \simeq \bZ. 
\end{align}
Then we have $(I_\bZ E)^d(\pt) \simeq \mathrm{Ext}(E_{d-1}(X), \bZ) \simeq \Z{k}$ and $(I_\bZ E')^d(\pt) \simeq \mathrm{Hom}(E'_d(X), \bZ) \simeq \bZ$, and the homomorphism 
\begin{align}
    I_\bZ f \colon (I_\bZ E')^d(\pt) \simeq \bZ \to (I_\bZ E)^d(\pt) \simeq \Z{k}
\end{align}
maps a generator of $\bZ$ to a generator of $\Z{k}$. 
\end{lem}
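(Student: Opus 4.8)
The plan is to read the two groups off the universal-coefficient exact sequence \eqref{eq_exact_IE}, and then to determine the map $I_\bZ f$ using the Anderson dual of the mapping-cone triangle together with a torsion-freeness observation.

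First I would apply \eqref{eq_exact_IE} at $X=\pt$ to the spectrum $E$ in degree $d$. The hypothesis $E_d(\pt)=0$ kills the $\mathrm{Hom}$-term, so
\[
(I_\bZ E)^d(\pt)\;\simeq\;\mathrm{Ext}(E_{d-1}(\pt),\bZ)\;=\;\mathrm{Ext}(\Z{k},\bZ)\;\simeq\;\Z{k}.
\]
Applying the same sequence to $E'$ in degree $d$, the hypothesis $E'_{d-1}(\pt)=0$ kills the $\mathrm{Ext}$-term, so
\[
(I_\bZ E')^d(\pt)\;\simeq\;\mathrm{Hom}(E'_d(\pt),\bZ)\;=\;\mathrm{Hom}(\bZ,\bZ)\;\simeq\;\bZ.
\]
This disposes of the first two assertions; it remains to pin down $I_\bZ f$.

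For that I would use the exact triangle $I_\bZ C \to I_\bZ E' \xrightarrow{I_\bZ f} I_\bZ E \to \Sigma I_\bZ C$ obtained by Anderson-dualizing $E\xrightarrow{f} E' \to C$, and extract the relevant stretch of its long exact sequence of homotopy groups of a point,
\[
(I_\bZ E')^d(\pt)\;\xrightarrow{I_\bZ f}\;(I_\bZ E)^d(\pt)\;\xrightarrow{\partial}\;(I_\bZ C)^{d+1}(\pt).
\]
So $I_\bZ f$ is onto as soon as $\partial=0$. Since the source of $\partial$ is $(I_\bZ E)^d(\pt)\simeq\Z{k}$, a torsion group, it is enough to know that the target $(I_\bZ C)^{d+1}(\pt)$ is torsion-free. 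This follows from one more use of \eqref{eq_exact_IE}, now for $C$ in degree $d+1$: the hypothesis $C_d(\pt)\simeq\bZ$ is free, hence $\mathrm{Ext}(C_d(\pt),\bZ)=0$, so $(I_\bZ C)^{d+1}(\pt)\simeq\mathrm{Hom}(C_{d+1}(\pt),\bZ)$, which is torsion-free for any $C_{d+1}(\pt)$. Thus $\partial=0$, $I_\bZ f\colon\bZ\to\Z{k}$ is surjective, and a surjection $\bZ\twoheadrightarrow\Z{k}$ necessarily sends a generator of $\bZ$ to a generator of $\Z{k}$, which is the last claim.

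I do not expect any genuine obstacle: once these inputs are assembled the argument is a two-step diagram chase. The one place where the hypotheses are used in an essential rather than purely formal way is the freeness of $C_d(\pt)$: from the mapping-cone homology sequence \eqref{eq_exact_cone} one only learns that $C_d(\pt)$ sits in $0\to E'_d(\pt)\to C_d(\pt)\to E_{d-1}(\pt)\to 0$, i.e.\ is \emph{some} extension of $\Z{k}$ by $\bZ$, and the assumption $C_d(\pt)\simeq\bZ$ is precisely what excludes torsion in $C_d(\pt)$ and so forces the connecting homomorphism $\partial$ to vanish.
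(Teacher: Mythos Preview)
Your proof is correct. It differs from the paper's in which segment of the Anderson-dual long exact sequence you exploit: the paper works \emph{upstream} of $I_\bZ f$, setting up a $3\times 3$ diagram with rows the universal-coefficient sequences for $C$, $E'$, $E$ in degree $d$ and middle column the exact piece $(I_\bZ C)^d \to (I_\bZ E')^d \xrightarrow{I_\bZ f} (I_\bZ E)^d$; after first identifying the cone sequence with $0\to\bZ\xrightarrow{\times k}\bZ\to\Z{k}\to 0$, the diagram shows $\ker(I_\bZ f)=k\bZ$, hence $I_\bZ f$ induces an injection $\Z{k}\hookrightarrow\Z{k}$, which must be an isomorphism. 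You instead work \emph{downstream}, looking at $(I_\bZ E')^d \xrightarrow{I_\bZ f} (I_\bZ E)^d \xrightarrow{\partial} (I_\bZ C)^{d+1}$ and killing $\partial$ by the torsion/torsion-free observation. Your route is a bit leaner: you never need to unpack the $\times k$ map explicitly, nor say anything about $C_{d-1}$. The paper's route, on the other hand, makes the structure of the cone sequence and the kernel of $I_\bZ f$ visible, which is informative in the concrete examples that follow. Both arguments use the hypothesis $C_d(\pt)\simeq\bZ$ in the same essential way (to ensure freeness), as you correctly highlight.
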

\begin{proof}
By \eqref{eq_cone_condition}, the exact sequence \eqref{eq_exact_cone} becomes $0 \to \bZ \xrightarrow{\times k} \bZ \to \Z{k} \to 0$ for an appropriate choice of generators. 
Consider the following commutative diagram:
\begin{align*}
\hskip-.5em    \xymatrix{
    0 \ar[r] & \mathrm{Ext}(C_{d-1}(\pt), \bZ)  \ar[r] \ar[d] & (I_\bZ C)^{d}(\pt)  \ar[r] \ar[d] & \Hom(C_{d}(\pt), \bZ) \simeq \bZ \ar[r] \ar[d]^-{\times k} & 0 \\
   0 \ar[r] & \mathrm{Ext}(E'_{d-1}(\pt), \bZ) = 0 \ar[r] \ar[d]^-{f} & (I_\bZ E')^{d}(\pt) \simeq \bZ \ar[r]^-{\simeq} \ar[d]^-{I_\bZ f} & \Hom(E'_{d}(\pt), \bZ)  \simeq \bZ \ar[r] \ar[d]^-{f} & 0 \\ 
   0 \ar[r] & \mathrm{Ext}(E_{d-1}(\pt), \bZ) \simeq \Z{k} \ar[r]^-{\simeq} & (I_\bZ E)^{d}(\pt) \simeq \Z{k} \ar[r] & \Hom(E_{d}(\pt), \bZ) = 0 \ar[r] & 0 
    }
\end{align*}
Here we have used the exactness of the rows \eqref{eq_exact_IE}. 
Since the middle column is also exact, we get Lemma \ref{lem_cone_dual}. 
\end{proof}

For us $E$ and $E'$ are some bordism theories. 
Here are examples. 

\begin{ex}
Consider the case where $E= MT\mathrm{String}$, $E' =MT\Spin$, $\iota \colon MT\mathrm{String} \to MT\Spin$ is the forgetful map.
In this case the mapping cone of $\iota$ is the {\it relative bordism theory} $\Omega^\iota_* = C_*$ with respect to $\iota$, see \cite[p.25--26]{Stong1968}. 
An element of $\Omega^\iota_d(X)$ is represented by a pair $(W^{d}, M^{d-1})$, where $M^{d-1}$ is a $(d-1)$-dimensional closed manifold equipped with a string structure with a map to $X$, and $W^{d}$ is a $d$-dimensional compact manifold with a spin structure and a map to $X$, which bounds $M$ as a spin manifold with a map to $X$. 
The group $\Omega^\iota_d(X)$ is defined to be the group consisting of the bordism classes $[W, M]$ of such pairs. 
Let us consider the case $d=4$.
We know that
\begin{align}
    \Omega^{\mathrm{string}}_4(\pt) = 0, \ \Omega^{\mathrm{string}}_3(\pt) = \Z{24}, \ 
    \Omega^{\mathrm{spin}}_4(\pt) = \bZ, \ 
    \Omega^{\mathrm{spin}}_3(\pt) = 0. 
\end{align}
To check the condition \ref{eq_cone_condition}, we need to show $\Omega^\iota_4(\pt) =\bZ$. 
To see this, we note that the $K3$-surface admits a framing away from $24$ points, so that the induced framing on the boundary $S^3 = \partial D_i^4$ of a disk neighborhood $D_i^4$ ($1 \le i \le 24$) of each point is isomorphic to the Lie group framing on $\mathrm{SU}(2) \simeq S^3$.\footnote{%
A particularly nice, concrete way to see this was discussed in \url{https://mathoverflow.net/a/58263/5420} by T.~Mrowka, who attributes the argument to M. Atiyah;
 the authors learned this tidbit  from Justin Kaidi.
The proof goes as follows. 
As the Euler number of $K3$ is 24, one can pick a vector field $X$ on it with 24 isolated zeros of index 1. 
One can also introduce a hyperk\"ahler metric on $K3$.
Then, the framing away from these 24 points is explicitly given by 
$(X,(\iota_X \omega_1)^*,(\iota_X \omega_2)^*,(\iota_X \omega_3)^*)$,
where $\omega_{1,2,3}$ are the three self-dual 2-forms coming from the hyperk\"ahler structure.
} 
This means that we have the following equation in $\Omega^\iota_{4}(\pt)$, 
\begin{align}
    [K3, \varnothing] = 24[D^4, S^3] + [K3 \setminus \sqcup_{i = 1}^24 D_i^4, \sqcup_{i = 1}^24 (-S^3)] = 24[D^4, S^3]. 
\end{align}
Notice that the element $[K3, \varnothing] \in \Omega^\iota_4(\pt)$ is the image of the generator $[K3] \in \Omega^\spin_4(\pt)$, and the element $[D^4, S^3]\in \Omega^\iota_{4}(\pt)$ maps to the generator $[S^3] \in \Omega^\text{string}_3(\pt)$. 
Since we know that the group $\Omega^\iota_4(\pt)$ fits into the exact sequence \eqref{eq_exact_cone}, we conclude that $[D^4, S^3]$ generates $\Omega^\iota_4(\pt)$ and it is isomorphic to $\bZ$. 
Thus, we can apply Lemma \ref{lem_cone_dual} to this setting and obtain the following corollary,
which was used in  Sec.~\ref{sec:implication}.
\begin{cor}\label{cor_string_spin_dual}
The Anderson dual to the forgetful homomorphism $\iota \colon M\mathrm{String} \to M\Spin$ in degree four, 
\begin{multline*}
    I_\bZ \iota \colon (I_\bZ \Omega^{\mathrm{spin}})^4(\pt) \simeq \mathrm{Hom}(\Omega^{\mathrm{spin}}_4(\pt), \bZ) \simeq \bZ \\
    \to (I_\bZ \Omega^{\mathrm{string}})^4(\pt) \simeq \mathrm{Ext}(\Omega^{\mathrm{string}}_3(\pt), \bZ) \simeq \Z{24}
\end{multline*}
maps a generator to a generator. 
\end{cor}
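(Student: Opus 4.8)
The plan is to derive the statement directly from Lemma~\ref{lem_cone_dual}, applied to $E = MT\mathrm{String}$, $E' = MT\mathrm{Spin}$ with $\iota$ the forgetful map and $(d,k) = (4,24)$. All the inputs needed to verify the hypotheses \eqref{eq_cone_condition} in this case have already been assembled above: the bordism groups $\Omega^{\mathrm{string}}_4(\pt) = 0$, $\Omega^{\mathrm{string}}_3(\pt) \simeq \Z{24}$, $\Omega^{\mathrm{spin}}_4(\pt) \simeq \bZ$ and $\Omega^{\mathrm{spin}}_3(\pt) = 0$ are classical (see Appendix~\ref{sec:bordismdata}), and the mapping cone of $\iota$ is the relative bordism theory $\Omega^\iota_\ast$, whose degree-four part $\Omega^\iota_4(\pt)$ has been identified with $\bZ$. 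So the proof is essentially an invocation of Lemma~\ref{lem_cone_dual}; for completeness I would recall the one nontrivial step.

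The substantive point is $\Omega^\iota_4(\pt) \simeq \bZ$. Here I would recall that $\Omega^\iota_\ast(X)$ classifies pairs $(W,M)$ with $M$ a closed string manifold over $X$ and $W$ a compact spin manifold over $X$ bounding $M$, and fits into the exact triangle $MT\mathrm{String} \xrightarrow{\iota} MT\mathrm{Spin} \to \Omega^\iota$ (see \cite[pp.~25--26]{Stong1968}). To pin down $\Omega^\iota_4(\pt)$ one uses that the $K3$ surface, of Euler characteristic $24$, is stably parallelizable away from $24$ points, with the induced framing on the link $S^3 = \partial D^4_i$ of each point being the Lie group framing on $\SU(2) \simeq S^3$. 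This yields the relation
\begin{equation}
[K3, \varnothing] = 24\,[D^4, S^3] \quad\text{in}\quad \Omega^\iota_4(\pt),
\end{equation}
where $[K3,\varnothing]$ is the image of the generator of $\Omega^{\mathrm{spin}}_4(\pt)$ and $[D^4, S^3]$ maps to the generator of $\Omega^{\mathrm{string}}_3(\pt)$. Feeding this into the exact sequence $\Omega^{\mathrm{spin}}_4(\pt) \to \Omega^\iota_4(\pt) \to \Omega^{\mathrm{string}}_3(\pt) \to 0$ from \eqref{eq_exact_cone} shows that $[D^4,S^3]$ generates $\Omega^\iota_4(\pt)$ and that the sequence reads $0 \to \bZ \xrightarrow{\times 24} \bZ \to \Z{24} \to 0$, so $\Omega^\iota_4(\pt) \simeq \bZ$.

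With the hypotheses of Lemma~\ref{lem_cone_dual} in place, the commutative diagram of short exact sequences built from \eqref{eq_exact_IE} for $\Omega^\iota$, $\Omega^{\mathrm{spin}}$ and $\Omega^{\mathrm{string}}$ forces $I_\bZ\iota\colon (I_\bZ\Omega^{\mathrm{spin}})^4(\pt) \simeq \bZ \to (I_\bZ\Omega^{\mathrm{string}})^4(\pt) \simeq \Z{24}$ to take a generator to a generator, which is the assertion. I expect the only genuinely delicate ingredient to be the $K3$ framing statement --- that $K3$ minus $24$ disks is stably parallelizable with the standard $\SU(2)$-framing on each boundary sphere --- which can be made explicit using a hyperkähler metric on $K3$ together with a vector field having $24$ zeros of index one, along the lines of the argument attributed to Atiyah that is cited in the text.
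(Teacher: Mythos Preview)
Your proposal is correct and follows essentially the same approach as the paper: the paper likewise sets $E = MT\mathrm{String}$, $E' = MT\mathrm{Spin}$, verifies the hypotheses of Lemma~\ref{lem_cone_dual} via the known values of $\Omega^{\mathrm{string}}_{3,4}(\pt)$ and $\Omega^{\mathrm{spin}}_{3,4}(\pt)$, and establishes $\Omega^\iota_4(\pt)\simeq\bZ$ by the $K3$ framing argument (including the Atiyah hyperk\"ahler construction you mention) to obtain $[K3,\varnothing]=24\,[D^4,S^3]$.
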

\end{ex}

\begin{ex}
Consider the case where $E = MT\Pinm$, $E' = \Sigma^{-1} MT\Spin $, and $\alpha \colon MT\Pinm \to  \Sigma^{-1} MT\Spin$ is given by \eqref{eq_def_alpha}.
Also in this case, the mapping cone of $\alpha$ is given by the relative bordism theory $\Omega^\alpha_* = C_*$ with respect to $\alpha$.
A straightforward modification of the argument in \cite[p.25--26]{Stong1968} shows that the group $\Omega^\alpha_d(X)$ is the set of bordism classes $[W^{d+1}, M^{d-1}]$ of pairs $(W^{d+1}, M^{d-1})$, where $M^{d-1}$ is a $(d-1)$-dimensional closed manifold equipped with a \pinm\ structure with a map to $X$, and $W^{d+1}$ is a $(d+1)$-dimensional compact manifold with a spin structure with a map to $X$, which bounds the spin manifold $\widetilde{M} \times_{\Z{2}}S^1$ along with the map to $X$. 
We now consider the case $d=3$.
We need to check the condition \ref{eq_cone_condition}. 
Recall we know \eqref{eq_pin-_bordism_grp} and \eqref{eq_spin_bordism_grp}. 

\begin{lem}\label{lem_spin_pin-_relative}
We have an isomorphism $\Omega_3^\alpha(\pt) \simeq \bZ$ so that the exact sequence \eqref{eq_exact_cone} for $d=3$ and $X = \pt$ becomes
\begin{align}\label{diag_claim_relative}
    \xymatrix{
      \Omega^\pinm _{3}(\pt) \ar[r]^-{\alpha}\ar@{=}[d] &  \Omega^\spin_{4}(\pt) \ar[r]\ar[d]^-{\simeq} &  \Omega^\alpha_{3}(\pt) \ar[r]\ar[d]^-{\simeq} & \Omega^\pinm _{2}(\pt) \ar[r]^-{\alpha}\ar[d]^-{\simeq} & \Omega^\spin_{3}(\pt) \ar@{=}[d] \\
    0 \ar[r] & \bZ \ar[r]^-{\times 8} & \bZ \ar[r] & \Z{8} \ar[r] & 0. 
    }
\end{align}
Here we use the generator $[K3] \in \Omega^\spin_{4}(\pt)$ and $[\bR \mathbb{P}^2] \in \Omega^\pinm _{2}(\pt)$. 
\end{lem}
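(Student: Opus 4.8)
The plan is to extract the wanted exact sequence from the cofiber long exact sequence \eqref{eq_exact_cone} and then to identify the resulting extension by a geometric computation.

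First I would specialize \eqref{eq_exact_cone} to the mapping cone $\Omega^\alpha_*$ with $d=3$ and $X=\pt$. Feeding in $\Omega^\pinm_3(\pt)=0$ and $\Omega^\spin_3(\pt)=0$ from \eqref{eq_pin-_bordism_grp} and \eqref{eq_spin_bordism_grp}, the relevant stretch of the sequence collapses to a short exact sequence $0\to\Omega^\spin_4(\pt)\to\Omega^\alpha_3(\pt)\to\Omega^\pinm_2(\pt)\to 0$, i.e.\ $0\to\bZ\to\Omega^\alpha_3(\pt)\to\bZ/8\bZ\to 0$, where $\Omega^\spin_4(\pt)$ is generated by $[K3]$ and $\Omega^\pinm_2(\pt)$ by $[\bR\mathbb{P}^2]$. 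Thus $\Omega^\alpha_3(\pt)$ is an extension of $\bZ/8\bZ$ by $\bZ$, classified by an element of $\mathrm{Ext}(\bZ/8\bZ,\bZ)\simeq\bZ/8\bZ$, and the lemma reduces to showing that this class is a unit: equivalently, that if $x\in\Omega^\alpha_3(\pt)$ is any preimage of the generator $[\bR\mathbb{P}^2]$, then $8x\in\Omega^\spin_4(\pt)$ is an odd multiple of $[K3]$. That forces $\Omega^\alpha_3(\pt)\simeq\bZ$ with $x$ a generator, and a short Smith-normal-form computation then identifies the sequence with $0\to\bZ\xrightarrow{\times 8}\bZ\to\bZ/8\bZ\to 0$ and gives the commuting diagram \eqref{diag_claim_relative}.

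For the geometric input I would use the description of $\Omega^\alpha_*$ recalled just before the lemma. A preimage of $[\bR\mathbb{P}^2]$ is $x=[W^4,\bR\mathbb{P}^2]$, where $W$ is a compact spin $4$-manifold whose boundary is $\widetilde{\bR\mathbb{P}^2}\times_{\Z{2}}S^1$ carrying the spin structure produced by the construction \eqref{eq_def_alpha}; such $W$ exists because $\Omega^\spin_3(\pt)=0$. Since $8[\bR\mathbb{P}^2]=0$, there is a $\pinm$ $3$-manifold $Y$ with $\partial Y=8\,\bR\mathbb{P}^2$, and then $N:=(8W)\cup_\partial\bigl(\widetilde{Y}\times_{\Z{2}}S^1\bigr)$ is a closed spin $4$-manifold representing $8x$ (the piece $\widetilde{Y}\times_{\Z{2}}S^1$ being the relative version of the $\alpha$-construction applied to $Y$). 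Moreover $\widetilde{Y}\times_{\Z{2}}S^1$ bounds the oriented $5$-manifold $\widetilde{Y}\times_{\Z{2}}D^2$ — the diagonal $\Z{2}$-action is free because it is already free on $\widetilde{Y}$, and it is orientation-preserving on the product because the deck transformation of $\widetilde{Y}\to Y$ and complex conjugation $z\mapsto\bar z$ on $D^2$ are each orientation-reversing — so $\sigma(\widetilde{Y}\times_{\Z{2}}S^1)=0$, and Novikov additivity gives $\sigma(N)=8\,\sigma(W)$. As closed spin $4$-manifolds have signature divisible by $16$ and $\sigma([K3])=-16$, it then suffices to know that $\sigma(W)\equiv 2\pmod 4$, i.e.\ that the Rokhlin invariant of $\widetilde{\bR\mathbb{P}^2}\times_{\Z{2}}S^1$ with its \eqref{eq_def_alpha}-spin structure is $\equiv 2\pmod 4$: indeed then $\sigma(N)=16\cdot(\text{odd})$, so $8x$ is an odd multiple of $[K3]$ as required.

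The main obstacle is exactly this last point. The underlying $3$-manifold is a free quotient of $S^2\times S^1$ with fundamental group the infinite dihedral group, hence diffeomorphic to $\bR\mathbb{P}^3\#\bR\mathbb{P}^3$; but one must be careful about which of its spin structures the construction \eqref{eq_def_alpha} produces. For instance the evident filling $S^2\times_{\Z{2}}D^2$ is spin but has signature $0$ and induces the "wrong" boundary spin structure (it uses the bounding spin structure on $S^1$, not the nontrivial one), so it cannot be used. I would settle the value of the Rokhlin invariant mod $4$ either by constructing an honest spin coboundary realizing the \eqref{eq_def_alpha}-spin structure and computing its signature, or by establishing the compatibility of $\alpha$ with the Arf--Brown--Kervaire invariant, under which the generator $[\bR\mathbb{P}^2]$ of $\Omega^\pinm_2(\pt)\simeq\bZ/8\bZ$ forces the value $\pm2$. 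Everything else — the long exact sequence, Novikov additivity, and the nullbordism of $\widetilde{Y}\times_{\Z{2}}S^1$ — is formal, in parallel with the treatment of $\Omega^\iota_4(\pt)$ in the preceding example.
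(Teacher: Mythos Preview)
Your reduction is correct and well organized: specializing the cofiber sequence, recognizing the problem as an $\mathrm{Ext}(\Z{8},\bZ)$ computation, building the closed spin representative $N=(8W)\cup_\partial(\widetilde{Y}\times_{\Z{2}}S^1)$ of $8x$, and using Novikov additivity together with the oriented nullbordism $\widetilde{Y}\times_{\Z{2}}D^2$ to reduce everything to the single number $\sigma(W)\bmod 4$. But, as you yourself flag, the argument stops precisely there: you do not actually compute the Rokhlin invariant of $\widetilde{\bR\mathbb{P}^2}\times_{\Z{2}}S^1$ in its $\alpha$-induced spin structure, and neither of the two suggested routes (an explicit spin filling, or a compatibility of $\alpha$ with the Arf--Brown--Kervaire invariant) is carried out. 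So as written there is a genuine gap at the one nonformal step.

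The paper closes exactly this gap by a direct geometric construction rather than an invariant computation. It uses the Kummer model of $K3$ as the resolution of $T^4/(\Z{2})$, writes $T^4=T^3\times S^1$, and removes small balls $D^3_i$ around the eight $\Z{2}$-fixed points of $T^3$. Then $(T^3\setminus\sqcup_i D^3_i)\times_{\Z{2}}S^1$ embeds in $K3$ with boundary eight copies of $\widetilde{\bR\mathbb{P}^2}\times_{\Z{2}}S^1$, and since $(T^3\setminus\sqcup_i D^3_i)/(\Z{2})$ is a \pinm\ nullbordism of $8\,\bR\mathbb{P}^2$, this piece represents $0$ in $\Omega^\alpha_3(\pt)$. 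The complement is eight copies of a spin $4$-manifold $W$ with $\partial W=\widetilde{\bR\mathbb{P}^2}\times_{\Z{2}}S^1$, giving $[K3,\varnothing]=8[W,\bR\mathbb{P}^2]$ in $\Omega^\alpha_3(\pt)$ on the nose. In your language this is exactly the relation $8x=[K3]$ (so $m=1$, not merely odd): the Kummer decomposition simultaneously supplies the specific \pinm\ filling $Y=(T^3\setminus\sqcup_i D^3_i)/(\Z{2})$ and the specific spin filling $W$ so that your manifold $N$ is literally $K3$. Thus the paper's construction is precisely the missing ingredient in your approach; conversely, your extension-class framework explains cleanly why that single relation is all one needs.
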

\begin{proof}
Recall the Kummer construction of the $K3$-surface,
where it is obtained by considering $T^4 / (\Z{2})$ ($\Z{2}$ acting on $T^4 = (S^1)^4$ by the reflection of each component) which has $16$ singular points, and blowing up each singularity. 
Now regard $T^4 = T^3 \times S^1$. 
Take the $\epsilon$-balls $D^3_i$, $i = 1, \cdots, 8$, centered at each of the fixed points of the $\Z{2}$-action on $T^3$. 
Then we have an inclusion 
\begin{align}\label{eq_proof_claim_relative_1}
    (T^3 \setminus \sqcup_{i = 1}^8 D_i^3) \times_{\Z{2}}S^1 \subset K3
\end{align}
with boundary
\begin{align}\label{eq_proof_claim_relative_2}
    \partial \left( (T^3 \setminus \sqcup_{i = 1}^8 D_i^3) \times_{\Z{2}}S^1\right) = \sqcup_{i = 1}^8 (-S^2) \times_{\Z{2}} S^1 = \sqcup_{i = 1}^8 \widetilde{(-\bR\mathbb{P}^2)} \times_{\Z{2}} S^1. 
\end{align}
Here the \pinm\ manifold $(T^3 \setminus \sqcup_{i = 1}^8 D_i^3) /(\Z{2})$ bounds the \pinm\ manifold $\sqcup_{i=1}^8 (-\bR\mathbb{P}^2)$, and the spin structures on \eqref{eq_proof_claim_relative_1} and \eqref{eq_proof_claim_relative_2} are related to these \pinm\ structures by the map \eqref{eq_def_alpha}. 
This means that we have
\begin{align}\label{eq_proof_claim_relative_3}
    [ (T^3 \setminus \sqcup_{i = 1}^8 D_i^3) \times_{\Z{2}}S^1, \sqcup_{i=1}^8 (-\bR\mathbb{P}^2)] = 0 \in \Omega^\alpha_3(\pt). 
\end{align}
The complement of the inclusion \eqref{eq_proof_claim_relative_1} is $8$ copies of a spin manifold $W^4$ with boundary $\widetilde{\bR\mathbb{P}^2} \times_{\Z{2}} S^1$. 
This means that we have an element
\begin{align}
    [W, \bR\mathbb{P}^2] \in \Omega_3^\alpha(\pt), 
\end{align}
which satisfies
\begin{align}\label{eq_proof_claim_relative_4}
[K3, \varnothing] = 
    8[W, \bR\mathbb{P}^2] + [ (T^3 \setminus \sqcup_{i = 1}^8 D_i^3) \times_{\Z{2}}S^1, \sqcup_{i=1}^8 (-\bR\mathbb{P}^2)]
    = 8[W, \bR\mathbb{P}^2] ,  
\end{align}
where the last equality follows by \eqref{eq_proof_claim_relative_3}. 
Notice that the element $[K3, \varnothing] \in \Omega^\alpha_3(\pt)$ is the image of the generator $[K3] \in \Omega^\spin_4(\pt)$, and the element $[W, \bR\mathbb{P}^2]\in \Omega^\alpha_3(\pt)$ maps to the generator $[\bR\mathbb{P}^2] \in \Omega^\pinm _2(\pt)$. 
Since we know that the group $\Omega^\alpha_3(\pt)$ fits into the exact sequence \eqref{eq_exact_cone}, we conclude that $[W, \bR\mathbb{P}^2]$ generates $\Omega^\alpha_3(\pt)$ and is isomorphic to $\bZ$. 
Using this generator, we also get \eqref{diag_claim_relative} and this completes the proof. 
\end{proof}
\end{ex}

By Lemma \ref{lem_spin_pin-_relative}, \eqref{eq_pin-_bordism_grp} and \eqref{eq_spin_bordism_grp}, we get the following corollary, which was used in the proof of Proposition~\ref{prop_twist}:
\begin{cor}\label{cor_pin-_spin_dual}
The Anderson dual to the transformation $\alpha$ in \eqref{eq_def_alpha} in degree three, 
\begin{multline}\label{eq_Ialpha}
    I_\bZ \alpha \colon (I_\bZ\Omega^\spin)^4(\pt) \simeq \Hom(\Omega^\spin_4(\pt), \bZ) \simeq \bZ \\
    \to (I_\bZ\Omega^\pinm )^3(\pt) \simeq \Hom(\Omega^\pinm _2(\pt), \bR/\bZ) \simeq \Z{8}
\end{multline}
maps a generator of $\bZ$ to a generator of $\Z{8}$. 
\end{cor}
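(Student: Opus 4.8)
The plan is to obtain Corollary~\ref{cor_pin-_spin_dual} as a direct application of Lemma~\ref{lem_cone_dual}, taken with $E = MT\Pinm$, $E' = \Sigma^{-1}MT\Spin$, $f = \alpha$ the map of \eqref{eq_def_alpha}, degree $d = 3$, and $k = 8$. Here the mapping cone $C$ of $\alpha$ is the relative bordism spectrum $\Omega^\alpha$ described just above, so $C_d(X) = \Omega^\alpha_d(X)$. The only real step is to verify the hypothesis \eqref{eq_cone_condition}; once that is in place, Lemma~\ref{lem_cone_dual} immediately yields that $I_\bZ\alpha \colon (I_\bZ E')^3(\pt) \to (I_\bZ E)^3(\pt)$ sends a generator of $\bZ$ to a generator of $\bZ/8\bZ$.

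First I would read off the relevant homotopy groups. From \eqref{eq_pin-_bordism_grp} we have $E_3(\pt) = \Omega^\pinm_3(\pt) = 0$ and $E_2(\pt) = \Omega^\pinm_2(\pt) \simeq \bZ/8\bZ$, so $k = 8$. From \eqref{eq_spin_bordism_grp}, together with the degree shift $E'_d(\pt) = \pi_d(\Sigma^{-1}MT\Spin) = \Omega^\spin_{d+1}(\pt)$, we get $E'_3(\pt) = \Omega^\spin_4(\pt) \simeq \bZ$ and $E'_2(\pt) = \Omega^\spin_3(\pt) = 0$. The remaining condition, $C_3(\pt) = \Omega^\alpha_3(\pt) \simeq \bZ$, is precisely the content of Lemma~\ref{lem_spin_pin-_relative}; indeed that lemma also identifies the long exact sequence \eqref{eq_exact_cone} for $d = 3$, $X = \pt$ with $0 \to \bZ \xrightarrow{\times 8} \bZ \to \bZ/8\bZ \to 0$, which is exactly what Lemma~\ref{lem_cone_dual} feeds on.

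Having verified \eqref{eq_cone_condition}, I would invoke Lemma~\ref{lem_cone_dual} to conclude $(I_\bZ E')^3(\pt) \simeq \Hom(\Omega^\spin_4(\pt), \bZ) \simeq \bZ$, which is $(I_\bZ\Omega^\spin)^4(\pt)$ after the suspension shift; $(I_\bZ E)^3(\pt) \simeq \mathrm{Ext}(\Omega^\pinm_2(\pt), \bZ) \simeq \bZ/8\bZ$ via \eqref{eq_exact_IE} and $\Omega^\pinm_3(\pt) = 0$; and that $I_\bZ\alpha$ carries a generator to a generator. To match the phrasing in the statement of the corollary, I would finally observe $\mathrm{Ext}(\bZ/8\bZ, \bZ) \simeq \bZ/8\bZ \simeq \Hom(\bZ/8\bZ, \bR/\bZ)$, so that the target is identified with $\Hom(\Omega^\pinm_2(\pt), \bR/\bZ)$ as claimed.

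The genuine content has all been discharged already in Lemma~\ref{lem_spin_pin-_relative}: the Kummer construction of $K3$, decomposed into $8$ copies of the spin bounding piece $W$ together with $(T^3 \setminus \sqcup_{i=1}^8 D_i^3)\times_{\Z{2}}S^1$, shows $[K3,\varnothing] = 8[W,\bR\mathbb{P}^2]$ in $\Omega^\alpha_3(\pt)$, forcing that group to be infinite cyclic on $[W,\bR\mathbb{P}^2]$. So for the corollary itself there is essentially no obstacle; the only point that calls for a bit of care is the bookkeeping of the suspension $E' = \Sigma^{-1}MT\Spin$ and the attendant degree shift, and checking that the isomorphisms produced by Lemma~\ref{lem_cone_dual} line up with the Anderson-duality pairings named in the statement. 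This is routine.
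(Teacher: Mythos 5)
Your proposal is correct and follows exactly the route the paper intends: the corollary is the instantiation of Lemma~\ref{lem_cone_dual} with $E = MT\Pinm$, $E' = \Sigma^{-1}MT\Spin$, $d=3$, $k=8$, with hypothesis \eqref{eq_cone_condition} supplied by \eqref{eq_pin-_bordism_grp}, \eqref{eq_spin_bordism_grp}, and Lemma~\ref{lem_spin_pin-_relative}. You have merely made explicit the degree-shift bookkeeping for $\Sigma^{-1}MT\Spin$ and the identification $\mathrm{Ext}(\bZ/8\bZ,\bZ)\simeq\Hom(\bZ/8\bZ,\bR/\bZ)$ that the paper leaves tacit, which is exactly the right thing to check.
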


\begin{ex}
As a final example, 
consider the case where $E = B\mathrm{SU}(2)_+ \wedge MT\Spin$ and $E' = B\mathrm{SU}(3)_+ \wedge MT\Spin$, $f $ is induced from the inclusion $\mathrm{SU}(2) \hookrightarrow \mathrm{SU}(3)$ and $d = 6$. 
In this case $E_\bullet(\pt) = \Omega^{\mathrm{spin}}_\bullet(B\mathrm{SU}(2))$ and $E_\bullet(\pt) = \Omega^{\mathrm{spin}}_\bullet(B\mathrm{SU}(3))$, and the exact sequence \eqref{eq_exact_cone} is the long exact sequence for the relative spin bordism groups, 
\begin{multline}\label{exact_relative_spin}
     \to  \Omega_d^{\mathrm{spin}}(B\mathrm{SU}(2)) \to \Omega_d^{\mathrm{spin}}(B\mathrm{SU}(3)) \to \Omega_d^{\mathrm{spin}}(B\mathrm{SU}(3), B\mathrm{SU}(2)) \\
     \to \Omega_{d-1}^{\mathrm{spin}}(B\mathrm{SU}(2)) \to \Omega_{d-1}^{\mathrm{spin}}(B\mathrm{SU}(3)) \to . 
\end{multline}

\begin{lem}\label{lem_relative_spin_BSU}
The exact sequence \eqref{exact_relative_spin} for $d = 6$ is  isomorphic to
\begin{align*}
    0 \to \bZ \xrightarrow{\times 2} \bZ \to \Z{2} \to 0. 
\end{align*}
\end{lem}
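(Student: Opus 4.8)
The plan is to identify all five groups appearing in \eqref{exact_relative_spin} for $d=6$ and then let exactness of the sequence do the rest. First I would compute the relative term $\Omega^{\mathrm{spin}}_6(B\mathrm{SU}(3),B\mathrm{SU}(2)) \simeq \widetilde{\Omega}^{\mathrm{spin}}_6(C)$, where $C$ is the homotopy cofiber of $B\mathrm{SU}(2)\to B\mathrm{SU}(3)$. Since the restriction of the defining $3$-plane bundle on $B\mathrm{SU}(3)$ to $B\mathrm{SU}(2)$ is the defining $2$-plane bundle plus a trivial line, the map $H^*(B\mathrm{SU}(3);\bZ)=\bZ[c_2,c_3]\to H^*(B\mathrm{SU}(2);\bZ)=\bZ[c_2]$ sends $c_2\mapsto c_2$, $c_3\mapsto 0$; hence $C$ is $5$-connected with $\widetilde{H}_6(C;\bZ)\simeq\bZ$ (dual to $c_3$) and $\widetilde{H}_n(C;\bZ)=0$ for $n=7,8,9$, the next cell being $10$-dimensional (dual to $c_2c_3$). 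In the Atiyah--Hirzebruch spectral sequence for $\widetilde{\Omega}^{\mathrm{spin}}_*(C)$ the only contribution in total degree $6$ is $E^2_{6,0}\simeq\bZ$, with no room for any differential into or out of it, so $\Omega^{\mathrm{spin}}_6(B\mathrm{SU}(3),B\mathrm{SU}(2))\simeq\bZ$ with the edge homomorphism to $\widetilde{H}_6(C;\bZ)$ an isomorphism.

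Next I would compute the four absolute terms, again by the Atiyah--Hirzebruch spectral sequence, with coefficients $\Omega^{\mathrm{spin}}_0(\pt)=\bZ$, $\Omega^{\mathrm{spin}}_1(\pt)=\Omega^{\mathrm{spin}}_2(\pt)=\Z{2}$, $\Omega^{\mathrm{spin}}_3(\pt)=\Omega^{\mathrm{spin}}_5(\pt)=\Omega^{\mathrm{spin}}_6(\pt)=0$, using that the $d_2$-differential is dual to $Sq^2$ (together with the $\eta$-multiplication $\Omega^{\mathrm{spin}}_1\to\Omega^{\mathrm{spin}}_2$ on the coefficient line). The key input is the Wu formula $Sq^2 c_2=c_1c_2+c_3$, which on $B\mathrm{SU}(3)$ becomes $Sq^2 c_2=c_3\pmod 2$ since $c_1=0$; thus $Sq^2\colon H^4(B\mathrm{SU}(3);\Z{2})\to H^6(B\mathrm{SU}(3);\Z{2})$ is an isomorphism. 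Consequently $d_2\colon E^2_{6,0}=H_6(B\mathrm{SU}(3);\bZ)\to E^2_{4,1}=H_4(B\mathrm{SU}(3);\Z{2})$ is surjective and $d_2\colon E^2_{6,1}\to E^2_{4,2}$ is an isomorphism, killing the $\Omega^{\mathrm{spin}}_2$-line and halving $E^2_{6,0}$. This gives $\Omega^{\mathrm{spin}}_5(B\mathrm{SU}(3))=0$ and $\Omega^{\mathrm{spin}}_6(B\mathrm{SU}(3))\simeq\bZ$, with edge homomorphism to $H_6(B\mathrm{SU}(3);\bZ)\simeq\bZ$ having image $2\bZ$. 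For $B\mathrm{SU}(2)$ one has $H_5(B\mathrm{SU}(2);-)=H_6(B\mathrm{SU}(2);-)=0$ and no $Sq^2$ can reach $H^6=0$, so the entries $E^2_{4,1}=H_4(B\mathrm{SU}(2);\Z{2})$ and $E^2_{4,2}=H_4(B\mathrm{SU}(2);\Z{2})$ survive; hence $\Omega^{\mathrm{spin}}_5(B\mathrm{SU}(2))\simeq\Z{2}$ and $\Omega^{\mathrm{spin}}_6(B\mathrm{SU}(2))$ is finite (in fact $\simeq\Z{2}$).

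Finally I would feed these into \eqref{exact_relative_spin} for $d=6$, obtaining the exact sequence
\[
    \Omega^{\mathrm{spin}}_6(B\mathrm{SU}(2)) \to \bZ \to \bZ \to \Z{2} \to \Omega^{\mathrm{spin}}_5(B\mathrm{SU}(3))=0
\]
with $\Omega^{\mathrm{spin}}_6(B\mathrm{SU}(2))$ finite. The first arrow is zero because its source is finite and its target torsion-free, so $\Omega^{\mathrm{spin}}_6(B\mathrm{SU}(3))\to\Omega^{\mathrm{spin}}_6(B\mathrm{SU}(3),B\mathrm{SU}(2))$ is an injection $\bZ\hookrightarrow\bZ$; by exactness its cokernel is the kernel of $\Omega^{\mathrm{spin}}_5(B\mathrm{SU}(2))\to\Omega^{\mathrm{spin}}_5(B\mathrm{SU}(3))=0$, namely $\Z{2}$. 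An injective endomorphism of $\bZ$ with cokernel $\Z{2}$ is multiplication by $2$, giving the asserted form $0\to\bZ\xrightarrow{\times 2}\bZ\to\Z{2}\to 0$. (The multiplication-by-$2$ is also visible directly: the isomorphism $H_6(B\mathrm{SU}(3);\bZ)\xrightarrow{\simeq}\widetilde{H}_6(C;\bZ)$ intertwines the two edge homomorphisms, under which $\widetilde{\Omega}^{\mathrm{spin}}_6(C)$ maps isomorphically onto $\widetilde{H}_6(C;\bZ)$ while the image of $\widetilde{\Omega}^{\mathrm{spin}}_6(B\mathrm{SU}(3))$ is the index-$2$ subgroup.)

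I expect the main obstacle to be the bookkeeping of the Atiyah--Hirzebruch spectral sequence for $B\mathrm{SU}(3)$: one must confirm that the $d_2$-differentials on $E^2_{6,0}$ and $E^2_{6,1}$ are genuinely governed by $Sq^2 c_2=c_3\bmod 2$, and, equally importantly, rule out any surviving free summand in $\Omega^{\mathrm{spin}}_6(B\mathrm{SU}(3))$ from the $\Omega^{\mathrm{spin}}_2$-line or from higher filtration — both of which follow from the $Sq^2$-isomorphism above. Once the five groups are pinned down, the precise shape $0\to\bZ\xrightarrow{\times 2}\bZ\to\Z{2}\to 0$ is forced purely by exactness.
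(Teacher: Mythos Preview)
Your argument is correct. It is, however, a genuinely different route from the paper's.

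The paper compares the spin-bordism long exact sequence with the homotopy long exact sequence of the fibration $\mathrm{SU}(2)\to\mathrm{SU}(3)\to S^5$ (via $\pi_*(BG)\simeq\pi_{*-1}(G)$), identifies the top row as $\bZ/2\xrightarrow{0}\bZ\xrightarrow{\times 2}\bZ\to\bZ/2\to 0$ using Mimura--Toda, checks that the outer four vertical maps are isomorphisms (invoking Bott's computation of the Hurewicz map $\pi_5(\mathrm{SU}(3))\to H_5(\mathrm{SU}(3);\bZ)$ for one of them), and applies the five lemma to the middle. You instead compute the relative term directly by identifying the cofiber $C$ as $5$-connected with $\widetilde{H}_6(C;\bZ)\simeq\bZ$, so that $\widetilde{\Omega}^{\mathrm{spin}}_6(C)\simeq\bZ$ by the Atiyah--Hirzebruch spectral sequence with no room for differentials; then you compute all four absolute groups by Atiyah--Hirzebruch, using the Wu formula $Sq^2c_2\equiv c_3\pmod 2$ on $B\mathrm{SU}(3)$ to run the $d_2$'s, and let exactness finish the job.

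What each approach buys: yours is more self-contained --- no external input beyond the Wu formula and low-degree $\Omega^{\mathrm{spin}}_*(\pt)$ --- at the cost of tracking the $d_2$-differentials carefully. The paper's approach trades that bookkeeping for the five lemma, but imports nontrivial homotopy-theoretic facts about $\mathrm{SU}(3)$. Incidentally, your value $\Omega^{\mathrm{spin}}_6(B\mathrm{SU}(2))\simeq\bZ/2$ is the correct one (through the $7$-skeleton $B\mathrm{SU}(2)\simeq S^4$, so $\widetilde{\Omega}^{\mathrm{spin}}_6\simeq\Omega^{\mathrm{spin}}_2(\pt)$); the paper records it as $0$, which is a slip, though harmless for either argument since any map from a finite group to $\bZ$ is zero.
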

\begin{proof}
We have a natural transformation $\pi_* \to \Omega^{\mathrm{spin}}_*$ by the forgetful map. 
Moreover for any group $G$ we have $\pi_*(BG) \simeq \pi_{*-1}(G)$. 
Consider the following diagram:{\footnotesize
\begin{align}\label{diag_htpy_to_spin}
\hskip-1em
\vcenter{\xymatrixcolsep{.9em}
    \xymatrix{
      \pi_5(\mathrm{SU}(2))   \ar[r]\ar[d] &  \pi_5(\mathrm{SU}(3)) \ar[r] \ar[d] &  \pi_5(\mathrm{SU}(3), \mathrm{SU}(2))  \ar[r]\ar[d] & \pi_4(\mathrm{SU}(2)) \ar[r]\ar[d] & \pi_4(\mathrm{SU}(3)) \ar[d] \\
       \Omega_6^{\mathrm{spin}}(B\mathrm{SU}(2))  \ar[r] &  \Omega_6^{\mathrm{spin}}(B\mathrm{SU}(3)) \ar[r] &  \Omega_6^{\mathrm{spin}}(B\mathrm{SU}(3), B\mathrm{SU}(2)) \ar[r] & \Omega_{5}^{\mathrm{spin}}(B\mathrm{SU}(2)) \ar[r] & \Omega_{5}^{\mathrm{spin}}(B\mathrm{SU}(3)). 
    }}
\end{align}}
Both rows are exact. 
The first row is isomorphic to the long exact sequence for homotopy groups with respect to the fibration $S^3 \simeq \mathrm{SU}(2) \to \mathrm{SU}(3) \to S^5 \simeq \mathrm{SU}(3) / \mathrm{SU}(2)$. 
Using the result on the homotopy groups of $\mathrm{SU}(3)$ in \cite{MimuraToda1963}, we see that the first row of \eqref{diag_htpy_to_spin} is isomorphic to 
\begin{align}
    \Z{2} \xrightarrow{0} \bZ \xrightarrow{\times 2} \bZ \to \Z{2} \to 0. 
\end{align}
Moreover, from a routine computation using Atiyah-Hirzebruch spectral sequence,  we know that 
\begin{align}
    \Omega_6^{\mathrm{spin}}(B\mathrm{SU}(2)) &= 0 , & \Omega_6^{\mathrm{spin}}(B\mathrm{SU}(3)) \simeq \bZ, \\ 
    \Omega_{5}^{\mathrm{spin}}(B\mathrm{SU}(2)) &= \Z{2}, &
    \Omega_{5}^{\mathrm{spin}}(B\mathrm{SU}(3)) = 0, 
\end{align}
and the first, second, fourth and the fifth vertical arrows in \eqref{diag_htpy_to_spin} are isomorphisms.\footnote{%
The nontrivial ones are the second and the fourth. 
The second one follows e.g.~by combining i) a result in \cite{Bott:1978bw} saying that the Hurewicz 
$\pi_5(\SU(3))\to H_5(\SU(3),\bZ)$ is $\bZ\xrightarrow{\times2}\bZ$,
ii) the  transgression $H_5(\SU(3),\bZ) \simeq H_6(B\SU(3),\bZ)$, and iii) the fact that
$\Omega^\spin_6(B\SU(3))$ in $H_6(B\SU(3),\bZ)$ is an index-2 subgroup as can be seen e.g.~from the Atiyah-Hirzebruch spectral sequence.
The fourth one follows e.g.~by first showing the transgression isomorphism $ \Omega_{5}^{\mathrm{spin}}(B\SU(2)) \simeq\Omega_4^\spin(\SU(2))$, which is $\simeq \Omega_1^\spin(\pt)$,
which is $\simeq \Omega_1^\text{framed}(\pt)\simeq \pi_4(S^3)$.
} 
By the five lemma, we see that the middle vertical arrow is also an isomorphism, and the result follows. 
\end{proof}
By Lemma \ref{lem_relative_spin_BSU}, we can apply Lemma \ref{lem_cone_dual} to this case and get the following. 
\begin{cor}
The pullback by the inclusion $B\mathrm{SU}(2) \to B\mathrm{SU}(3)$, 
\begin{align}
    (I_\bZ \Omega^{\mathrm{spin}})^6(B\mathrm{SU}(3) ) \simeq \bZ \to  (I_\bZ \Omega^{\mathrm{spin}})^6(B\mathrm{SU}(2) ) \simeq \Z{2}
\end{align}
maps a generator to the generator. 
\end{cor}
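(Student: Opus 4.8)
The plan is to apply Lemma~\ref{lem_cone_dual} to the setup of the last Example, namely with $E = B\mathrm{SU}(2)_+ \wedge MT\Spin$, $E' = B\mathrm{SU}(3)_+ \wedge MT\Spin$, the map $f$ induced by the inclusion $\mathrm{SU}(2) \hookrightarrow \mathrm{SU}(3)$, $d = 6$, and $k = 2$, and then to translate the conclusion back into the language of $(I_\bZ\Omega^{\mathrm{spin}})^\bullet$ of classifying spaces.

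First I would make the dictionary explicit. For a space $X$ the spectrum $X_+ \wedge MT\Spin$ represents $\Omega^{\mathrm{spin}}_\bullet(X)$, and its Anderson dual represents $(I_\bZ\Omega^{\mathrm{spin}})^\bullet(X)$, naturally in $X$ (because $I_\bZ(X_+ \wedge MT\Spin)$ is the function spectrum from $X_+$ into $I_\bZ MT\Spin$). Under this identification the map $I_\bZ f$ in degree $6$ is precisely the pullback $(I_\bZ\Omega^{\mathrm{spin}})^6(B\mathrm{SU}(3)) \to (I_\bZ\Omega^{\mathrm{spin}})^6(B\mathrm{SU}(2))$ along $B\mathrm{SU}(2) \to B\mathrm{SU}(3)$. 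Hence it suffices to verify the hypotheses \eqref{eq_cone_condition} of Lemma~\ref{lem_cone_dual}.

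Next I would check those hypotheses. Here $E_\bullet(\pt) = \Omega^{\mathrm{spin}}_\bullet(B\mathrm{SU}(2))$, $E'_\bullet(\pt) = \Omega^{\mathrm{spin}}_\bullet(B\mathrm{SU}(3))$, and the mapping cone $C$ computes the relative bordism $\Omega^{\mathrm{spin}}_\bullet(B\mathrm{SU}(3), B\mathrm{SU}(2))$. The Atiyah--Hirzebruch spectral sequence computations quoted in the proof of Lemma~\ref{lem_relative_spin_BSU} give $\Omega_6^{\mathrm{spin}}(B\mathrm{SU}(2)) = 0$, $\Omega_5^{\mathrm{spin}}(B\mathrm{SU}(2)) \simeq \Z{2}$, $\Omega_6^{\mathrm{spin}}(B\mathrm{SU}(3)) \simeq \bZ$ and $\Omega_5^{\mathrm{spin}}(B\mathrm{SU}(3)) = 0$, while Lemma~\ref{lem_relative_spin_BSU} itself (via the comparison with $\pi_\bullet(\mathrm{SU}(3))$ in the diagram~\eqref{diag_htpy_to_spin}) identifies $\Omega_6^{\mathrm{spin}}(B\mathrm{SU}(3), B\mathrm{SU}(2)) \simeq \bZ$. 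So \eqref{eq_cone_condition} holds with $d = 6$ and $k = 2$, and Lemma~\ref{lem_cone_dual} yields $(I_\bZ E')^6(\pt) \simeq \bZ$, $(I_\bZ E)^6(\pt) \simeq \Z{2}$, and that $I_\bZ f$ carries a generator to a generator. Combined with the dictionary of the previous paragraph this is exactly the assertion. I do not expect a genuine obstacle at this stage: all the real work --- in particular the computation of the relative bordism group --- has been absorbed into Lemma~\ref{lem_relative_spin_BSU}, and what remains is bookkeeping with Anderson duals together with checking four AHSS entries.
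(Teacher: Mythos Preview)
Your proposal is correct and follows essentially the same approach as the paper: the paper's proof is just the one-line remark that Lemma~\ref{lem_relative_spin_BSU} verifies the hypotheses of Lemma~\ref{lem_cone_dual}, and you have spelled out this verification in detail, together with the dictionary between $(I_\bZ E)^\bullet(\pt)$ and $(I_\bZ\Omega^{\mathrm{spin}})^\bullet(B\mathrm{SU}(n))$.
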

This last example played a somewhat important role in the development of the study of global anomalies in the physics literature. 
The phenomenon of the global anomaly was originally found in \cite{Witten:1982fp}, 
which in modern terms corresponds to the generator of $(I_\bZ \Omega^{\mathrm{spin}})^6(B\mathrm{SU}(2))\simeq \bZ/2\bZ$.
Before that, only perturbative anomalies, i.e.~the anomalies associated to the free part of the Anderson dual of the bordism groups, were understood,
and therefore it was thought desirable to derive global anomalies from perturbative anomalies.
This was done slightly later in \cite{Elitzur:1984kr}, using this example.
This was recently revisited from a more modern point of view in \cite{Davighi:2020bvi,Davighi:2020uab}.
\end{ex}

\section*{Declarations}
\noindent\emph{Funding:} The research of Yuji Tachikawa is  supported by in part supported  by WPI Initiative, MEXT, Japan through IPMU, the University of Tokyo,
and in part by Grant-in-Aid for JSPS KAKENHI Grant Number 17H04837.
The work of Mayuko Yamashita is supported by Grant-in-Aid for JSPS KAKENHI Grant Number 20K14307 and JST CREST program JPMJCR18T6.
The authors have no other competing interests to declare that are relevant to the content of this article.

\def\arxivfont{\rm}
\bibliographystyle{ytamsalpha}
\bibliography{ref}

\end{document}